\DeclareMathOperator*{\argmax}{arg\,max}
\DeclareMathOperator*{\argmin}{arg\,min}
\newcommand{\mr}{\mathrm}
\newcommand{\BE}{\begin{equation}}
\newcommand{\EE}{\end{equation}}
\newcommand{\BS}{\begin{subequations}}
\newcommand{\ES}{\end{subequations}}
\newcommand{\UH}{\mathsf{H}}   %% upright H
\renewcommand{\Re}{\mr{Re}}   %% upright T
\renewcommand{\Im}{\mr{Im}}   %% upright T
\newcommand{\Mydef}{\overset{  \scriptscriptstyle \Delta  }{=}}
\renewcommand{\bf}{\bm}
\newtheorem{theorem}{Theorem}
\newtheorem{proposition}{Proposition}
\newtheorem{assumption}{Assumption}
\newtheorem{definition}{Definition}
\newtheorem{remark}{Remark}
\newtheorem{lemma}{Lemma}
\newtheorem{corollary}{Corollary}
\newtheorem{fact}{Fact}
\newcommand{\E}{\mathbb{E}}
\newcommand{\tE}{\tilde{\mathbb{E}}}
\newcommand{\T}{\mathcal{T}}
\newcommand*\diff{\mathop{}\!\mathrm{d}}
\newcommand{\cgauss}[2]{\mathcal{CN}\left( #1,#2 \right)}
\newcommand{\explain}[2]{\overset{\text{\tiny{#1}}}{#2}}
\newcommand{\Arg}{\mathsf{Arg}}
\newcommand{\opt}{\mathsf{opt}}
\newcommand{\Unif}{\textup{Unif}}
\newcommand{\p}[2][{}]{\mathbb{P}_{#1} \intoo{#2}}
\newcommand{\e}[2][{}]{\mathbb{E}_{#1} \sbr{#2}}
\begin{document}
%
% paper title
% Titles are generally capitalized except for words such as a, an, and, as,
% at, but, by, for, in, nor, of, on, or, the, to and up, which are usually
% not capitalized unless they are the first or last word of the title.
% Linebreaks \\ can be used within to get better formatting as desired.
% Do not put math or special symbols in the title.
\title{Analysis of Spectral Methods for Phase Retrieval with Random Orthogonal Matrices}
%
% author names and IEEE memberships
% note positions of commas and nonbreaking spaces ( ~ ) LaTeX will not break
% a structure at a ~ so this keeps an author's name from being broken across
% two lines.
% use \thanks{} to gain access to the first footnote area
% a separate \thanks must be used for each paragraph as LaTeX2e's \thanks
% was not built to handle multiple paragraphs
%

%\author{Michael~Shell,~\IEEEmembership{Member,~IEEE,}
%        John~Doe,~\IEEEmembership{Fellow,~OSA,}
%        and~Jane~Doe,~\IEEEmembership{Life~Fellow,~IEEE}% <-this % stops a space
%\thanks{M. Shell was with the Department
%of Electrical and Computer Engineering, Georgia Institute of Technology, Atlanta,
%GA, 30332 USA e-mail: (see http://www.michaelshell.org/contact.html).}% <-this % stops a %space
%\thanks{J. Doe and J. Doe are with Anonymous University.}% <-this % stops a space
%\thanks{Manuscript received April 19, 2005; revised August 26, 2015.}}

\author{Rishabh Dudeja, Milad Bakhshizadeh, Junjie Ma, Arian Maleki \\
        Department of Statistics, Columbia University} %\thanks{This paper appeared as an abstract at Signal Processing with Adaptive Sparse Structured Representations (SPARS) 2019 Workshop.}}

\maketitle

% As a general rule, do not put math, special symbols or citations
% in the abstract or keywords.
\begin{abstract}
                %The success of iterative local search algorithms in phase retrieval depends heavily on their starting points. The most widely used initialization scheme is the spectral initialization, in which the eigenvector corresponding to the largest eigenvalue of a data-dependent matrix is used as a starting point. Recently, the performance of the spectral initialization was characterized accurately for measurement matrices with independent and identically distributed entries. This paper aims to obtain the same level of knowledge for the partial orthogonal matrices, which are substantially better models for practical phase retrieval system systems. Towards this goal,  we consider the asymptotic setting in which the number of measurements $m$, and the dimension of the signal, $n$, diverge to infinity while $m/n \rightarrow \delta$, and obtain simple expression for the overlap between the initialization and the true signal.  
                Phase retrieval refers to algorithmic methods for recovering a signal from its phaseless measurements. There has been recent interest in understanding the performance of local search algorithms that work directly on the non-convex formulation of the problem. Due to the non-convexity of the problem, the success of these local search algorithms depends heavily on their starting points. The most widely used initialization scheme is the spectral method, in which the leading eigenvector of a data-dependent matrix is used as a starting point. Recently, the performance of the spectral initialization was characterized accurately for measurement matrices with independent and identically distributed entries. This paper aims to obtain the same level of knowledge for isotropically random column-orthogonal matrices, which are substantially better models for practical phase retrieval  systems. Towards this goal, we consider the asymptotic setting in which the number of measurements $m$, and the dimension of the signal, $n$, diverge to infinity with $m/n = \delta\in(1,\infty)$, and obtain a simple expression for the overlap between the spectral estimator and the true signal vector. 
        \end{abstract}
 
% Note that keywords are not normally used for peerreview papers.
\begin{IEEEkeywords}
 	Phase Retrieval, Spectral Initialization, Random Orthogonal Matrices, Coded Diffraction Pattern, Phase Transition, Random Matrix Theory.
 \end{IEEEkeywords}

% For peer review papers, you can put extra information on the cover
% page as needed:
% \ifCLASSOPTIONpeerreview
% \begin{center} \bfseries EDICS Category: 3-BBND \end{center}
% \fi
%
% For peerreview papers, this IEEEtran command inserts a page break and
% creates the second title. It will be ignored for other modes.
\IEEEpeerreviewmaketitle

	\section{Introduction}\label{sec:intro}

	Phase retrieval refers to the problem of recovering a signal $\bm x_\star \in \mathbb{C}^n$ from a set of phaseless linear observations $\bm{y} \in \mathbb{R}^m$. Under the absence of the measurement noise, the acquisition process is modeled as 
	\[ 
	{y}_i = |(\bm{A} \bm{x_{\star}})_i|,
	\]
	where $\bm{A}\in\mathbb{C}^{m\times n}$ is a measurement matrix and $(\cdot)_i$ denotes the $i^{\rm th}$ element of a vector. The phase retrieval problem is intended to model practical imaging systems where it is difficult to measure the phase of the measurements \cite{Shechtman15}.  A number of recent recovery algorithms pose Phase retrieval as a non-convex optimization problem, and employ a local search algorithm to find the minimizer \cite{CaLiSo15,ChenCandes17,Wang2016,Zhang2016reshaped}. For instance, the well known Wirtinger Flow algorithm \cite{CaLiSo15} solves the optimization problem:
	\begin{equation}\label{eq:wirtingerflow}
	\min_{\bm{x}}\quad \sum_{i=1}^m (y_i^2 - |\bm{a}_i^{\UH} \bm{x}|^2)^2,
	\end{equation}
	using gradient descent. 
	
	Since the optimization problem \eqref{eq:wirtingerflow} is non-convex,  the initialization can have an impact on the success of local search algorithms. The most widely used initialization scheme, known as spectral initialization \cite{Eetrapalli2013,ChenCandes17,Wang2016,Lu17,mondelli2017fundamental,Lu2018}, uses the leading eigenvector of the following data-dependent matrix:  
	\begin{equation}\label{eq:Mdef}
	{\bm M} \Mydef {\bm A}^{\UH} \bm{T} \bm{A}
	\end{equation}
	as the starting point for local search algorithms. In the above equation, $\bm{T}= {\rm Diag}(\T(y_1), \T(y_2), \ldots, \T(y_m))$, and $\T(\cdot)$ denotes a suitable trimming function. Let $\hat{\bm{x}}$ denote the leading eigenvector of $\bm{M}$ normalized to have unit Euclidean ($\ell_2$) norm. That is, 
	\begin{align}
	\hat{\bm x} & \Mydef \max_{\|\bm x\| = 1} \bm x^\UH \bm M \bm x.
	\label{eigen_value_problem}
	\end{align}The earliest analysis \cite{Eetrapalli2013,CaLiSo15} of the spectral estimator  showed that if number of measurements $m$ is large enough (for a fixed $n$), then the leading eigenvector of $\bm{M}$ is a consistent estimator of the true signal vector. However these analyses had two drawbacks: (i) They only provide information about the order of measurements required for a successful initialization and not a sharp requirement on the sampling ratio $m/n$, (ii) These analyses fail to capture the difference in the performance of various trimming functions. Recently, Lu and Li \cite{Lu17} have analyzed the spectral estimator for measurement matrices that are composed of independent and identically distributed (i.i.d.) standard normal entries in the high dimensional asymptotic regime. More specifically, Lu and Li considered the asymptotic setting in which $m,n \rightarrow \infty$, $m/n = \delta$, and obtained a sharp characterization for the overlap between the leading eigenvector and the true signal. In follow up work by Mondelli and Montanari \cite{mondelli2017fundamental} and Luo, Alghamdi and Lu \cite{Lu2018} this characterization was leveraged to design optimal trimming functions. For the optimal trimming function, the overlap $|\hat{\bm{x}}^\UH \bm{x}_\star|^2/\|\bm x_\star\|^2$ converges to zero when $\delta<1$, and converges to a strictly positive value otherwise. 
	
	A major assumption in the analysis of \cite{Lu17,mondelli2017fundamental,Lu2018} is that the measurement matrix $\bm{A}$ contains i.i.d. Gaussian entries. However, it is well-known that many important applications of phase retrieval are concerned with Fourier-type matrices \cite{Fienup78}. This leads to the following natural questions: (i) Are the conclusions of \cite{Lu17,mondelli2017fundamental,Lu2018} correct for other matrices that are employed in practice? (ii) Is the optimal choice of trimming that was derived in \cite{Lu17,mondelli2017fundamental,Lu2018} for Gaussian measurement matrices optimal for other  matrices employed in practice? In response to these questions, Ma \textit{et al.} \cite{ma2019spectral} considered a popular class of matrices that can be used in phase retrieval systems, known as  coded diffraction pattern (CDP) \cite{Candes15_Diffraction}. Through an extensive numerical study, the authors showed that the performance of the spectral initialization for such matrices closely approximates the performance of the spectral estimator for partial orthogonal matrices. The authors then designed an Expectation Propagation (EP) \cite{Minka2001,opper2005} algorithm for the eigenvalue problem given in (\ref{eigen_value_problem}). 
	EP algorithms had previously been proposed for partial orthogonal matrices in \cite{ma2015turbo,Ma2016} and their State Evolution (SE) had been analyzed in \cite{Rangan17,Takeuchi2017}. 
	Ma \textit{et al.} used the SE of derived EP algorithm for the eigenvalue problem to derive a (conjectured) formula for the asymptotic overlap $ |\hat{\bm{x}}^\UH \bm{x}_\star|^2/\|\bm x_\star\|^2$ between the true signal vector and the spectral initialization.
	However, while it is believed that EP algorithm indeed solves the eigenvalue problem (this has also been observed in simulations), this has not been shown rigorously. As a result of such studies, the authors conjectured that for partial orthogonal matrices if the trimming function is chosen optimally, then for $\delta> 2$,  $ |\hat{\bm{x}}^\UH \bm{x}_\star|^2/\|\bm x_\star\|^2>0$, and for $\delta< 2$, $ |\hat{\bm{x}}^\UH \bm{x}_\star|^2/\|\bm x_\star\|^2 =0$, in the asymptotic setting where $n,m=\delta n\to\infty$. As mentioned previously, the simulations in \cite{ma2019spectral} suggest that these conjectures are also likely to hold for CDP matrices.
	
	In this paper, we prove the conjectures presented in \cite{ma2019spectral} for partial orthogonal matrices using tools from the free probability theory \cite{belinschi2017outliers}. We believe this is the first theoretical justification that the expectation propagation framework can correctly predict the statistical properties of the solutions to non-convex optimization problems. The main technical step in our proof is the identification of the location of the largest eigenvalue using a subordination function \cite{belinschi2017outliers}. Interestingly, this subordination function appears naturally in the expectation propagation (EP) algorithm of \cite{ma2019spectral}.

	%--------------------------------%--------------------------------%--------------------------------%--------------------------------
%--------------------------------%--------------------------------%--------------------------------%--------------------------------

\section{Main result}\label{main_result}
\subsection{Notation}
\subsubsection{For Linear Algebraic Aspects} For a matrix $\bm A$, $\bm A^\UH$ refers to the conjugate transpose of $\bm A$. For a matrix $\bm A \in \mathbb C^{n \times n}$, with real eigenvalues, we use ${\lambda_1(\bm A) \geq \lambda_2(\bm A) \dots \geq \lambda_n(\bm A)}$ to denote the eigenvalues arranged in descending order. We use $\sigma(\bm A)$ to refer to the spectrum of $\bm A$ which is simply the set of eigenvalues $\{\lambda_1(\bm A),\lambda_2(\bm A) \dots \lambda_n(\bm A)\}$. Finally we define the spectral measure of $\bm A$, denoted by $\mu_{\bm A}$ as, 
	\begin{align*}
	\mu_{\bm A} & \Mydef \frac{1}{n} \sum_{i=1}^n \delta_{\lambda_i(\bm A)}.
	\end{align*}
	For $m,n \in \mathbb {N}$, we denote the $m \times m$ identity matrix by $\bm I_m$ and a $m \times n$ matrix of all zero entries by $\bm 0_{m,n}$. For $m \geq n$, We also define the special matrix $\bm S_{m,n}$ as:
	\begin{align}\label{Eqn:S_def}
		\bm S_{m,n} & \Mydef \begin{bmatrix} \bm I_n \\ \bm 0_{m-n,n}  \end{bmatrix}.
	\end{align}
%\textcolor{magenta}{Milad: I believe this should be
%$\bm S_{m,n}  \Mydef \begin{bmatrix} \bm I_n \\ \bm 0_{m-n,n}  \end{bmatrix}. $
%}
\subsubsection{For Complex Analytic Aspects} For a complex number $z \in \mathbb C$, $\Re(z), \Im(z), \Arg(z), |z|,\overline{z}$ refer to the real part, imaginary part, argument, modulus and conjugate of $z$. We denote the complex upper half plane and lower half planes by
%	\begin{align*}
%	\mathbb C^{+} & \Mydef \{z \in \mathbb C: \Im(z) > 0\} \\
%		\mathbb C^{-} & \Mydef \{z \in \mathbb C: \Im(z) < 0\}.
%	\end{align*}
\[
	\mathbb C^{+}  \Mydef \{z \in \mathbb C: \Im(z) > 0\} \;\text{and}\;
		\mathbb C^{-}  \Mydef \{z \in \mathbb C: \Im(z) < 0\}.
	\]
\subsubsection{For Probabilistic Aspects} We use $\cgauss{0}{1}$ to denote the standard, circularly symmetric, complex Gaussian distribution. $\text{Unif}(\mathbb{U}_m)$ denotes the Haar measure on the unitary group. We denote almost sure convergence, convergence in probability and convergence in distribution by $\explain{a.s.}{\rightarrow},
	\explain{P}{\rightarrow}$ and $\explain{d}{\rightarrow}$ respectively. Two random variables $X,Y$ are equal in distribution, denoted by $X \explain{d}{=} Y$ if they have the same distribution. Throughout this paper, the random variables $Z,T$ refer to the pair of random variables with the joint distribution given by $Z \sim \cgauss{0}{1}, T = \T(|Z|/\sqrt{\delta})$. For a borel probability measure $\mu$, we use $\text{Supp}({\mu})$ to denote the support of $\mu$.
\subsubsection{Miscellaneous: } Let $A$ be a subset of $\mathbb{R}$ or $\mathbb C$. $\overline{A}$ denotes the closure of $A$. The distance from a point $x \in \mathbb{R}$ to $A$ is defined by $\text{dist}(x,A) = \inf_{y \in A} |x-y|$. We define the $\epsilon$ neighborhood of $A$, denoted by $A_\epsilon$ as $$A_{\epsilon} \Mydef \{x : \text{dist}(x,A) < \epsilon \}.$$ The symbol $\emptyset$ is used to denote the empty set.

\subsection{Measurement Model and Spectral Estimator}
In the phase retrieval problem we are given $m$ observations $\bm y \in \mathbb R^m$ generated as: $$\bm y = |\bm A \bm x_\star|$$ where $\bm x_\star \in \mathbb C^n$ is the unknown signal vector and $\bm A \in \mathbb C^{m \times n}$ is the sensing matrix. We assume that $\|\bm x_\star\| = \sqrt{n}$ and that the matrix $\bm A$ is generated according to the following process: Sample $\bm H_m \in \mathbb{U}(m)$ from the Haar measure on the unitary group $\mathbb{U}(m)$ and set $\bm A$ to be the matrix formed by picking the first $n$ columns of $\bm H_m$. More formally,
\begin{align*}
\bm A = \bm H \bm S_{m,n}, \;\bm H \sim \text{Unif}(\mathbb{U}(m)), 
\end{align*}
and $\bm{S}$ is defined in \eqref{Eqn:S_def}.
An important parameter for our analysis will be \emph{the sampling ratio}, denoted by $\delta \Mydef m/n$.
Let $\mathcal{T}: \mathbb{R}_{\geq 0} \rightarrow \mathbb R$ be a trimming function. We study spectral estimators $\hat{\bm x}$  constructed as the leading eigenvector of the matrix $\bm M$, defined below:
%\begin{align*}
%\bm M = \bm A^\UH \bm T \bm A, \; \bm T = \text{Diag}(\T(y_1),\T(y_2) \dots \T(y_m)), \; \hat{\bm x} = \arg \max_{\|\bm u\| = 1} \bm u^\UH \bm M \bm u. 
%\end{align*}
\begin{align*}
\hat{\bm x} = \arg \max_{\|\bm u\| = 1} \bm u^\UH \bm M \bm u,
\end{align*}
where $\bm{M}= \bm A^\UH \bm T \bm A$ and $\bm T = \text{Diag}(\T(y_1),\T(y_2) \dots \T(y_m))$.

\subsection{Assumptions \& Asymptotic Framework}
We analyze the performance of the spectral estimator in an asymptotic setup where $n,m \rightarrow \infty, m/n = \delta>1$. In particular, we consider a sequence of independent phase retrieval problems realized on the same probability space with increasing $n,m$. We assume some regularity assumptions on the trimming function $\mathcal{T}$ which are stated below. 
\begin{assumption} The trimming function $\mathcal{T}$ satisfies the following conditions: 
	\begin{enumerate}
		\item $\T$ is Lipschitz continuous.
		\item $\sup_{y \geq 0} \T(y) = 1, \; \inf_{y \geq 0} \T(y) = 0$.
		\item The random variable $T$, defined by $Z \sim \cgauss{0}{1}$ and $T = \T(|Z|/\sqrt{\delta})$ has a density with respect to the Lebesgue measure on $\mathbb R$.
	\end{enumerate}
\label{regularity_assumption}
\end{assumption}

In the following remarks, we discuss why each of these assumptions are required and whether they can be relaxed. 

\begin{remark} We need the trimming function $\T$ to be Lipschitz continuous so that the trimmed measurements $\T(y_i)$ can be approximated in distribution by $\T(|Z|/\sqrt{\delta}), Z \sim \cgauss{0}{1}$. We expect  this approximation to hold under weaker smoothness hypothesis on $\T$ than Lipschitz continuity.
\end{remark}

\begin{remark} \label{supp_remark} The assumptions:  $$\sup_{y \geq 0} \T(y) = 1, \; \inf_{y \geq 0} \T(y) = 0$$ are no stronger than the assumption that $\T$ is a bounded trimming function. In fact, given any arbitary bounded trimming function with $\inf_{y \geq 0} \T(y) = a$ and $\sup_{y \geq 0} \T(y) = b$, the spectral estimator constructed using $\T$ has the same performance as the spectral measure constructed using $$\tilde{\T}(y) \Mydef (\T(y)-a)/(b-a).$$ This is because, 
\begin{align*}
\widetilde{\bm M} \Mydef \bm A^\UH \widetilde{\bm T} \bm A &= \frac{1}{b-a} \bm A^\UH \bm T \bm A - \frac{a}{b-a} \bm I_n \\ &=  \frac{1}{b-a} \bm M - \frac{a}{b-a} \bm I_n. 
\end{align*}
In particular $\bm M$ and $\widetilde{\bm M}$ have the same leading eigenvector. We require the assumption that the trimming function is bounded since a number of results in free probability theory that we rely on assume this. 
\end{remark}

\begin{remark} We need (3) in Assumption \ref{regularity_assumption} to ensure that the limiting spectral measure of the matrix $\bm M$ has no discrete component. We expect that this assumption can be completely removed by a careful analysis since the location of point masses in the limiting spectral measure of $\bm M$ is well understood. 
\end{remark}

\subsection{Main Result}
 In order to state our main result about the performance of the spectral estimator, we need to introduce the following four functions: 
\begin{align}
    \Lambda(\tau)  &\triangleq \tau - \frac{(1-1/\delta)}{\E \left[ \frac{1}{\tau - T} \right]},  \;    \psi_1(\tau)  \triangleq  \frac{\E \left[ \frac{|Z|^2}{\tau - T} \right]}{\E \left[ \frac{1}{\tau - T} \right]}, \nonumber \\
    \psi_2(\tau)  &\triangleq \frac{\E \left[ \frac{1}{(\tau - T)^2} \right]}{ \left(\E \left[ \frac{1}{\tau - T} \right]\right)^2},\; \psi_3^2(\tau)  \Mydef \frac{\E\left[ \frac{|Z|^2}{(\tau - T)^2} \right]}{\left(\E\left[ \frac{1}{\tau - T} \right]\right)^2}. \label{key_functions}
\end{align}
In the above display, the random variables $Z,T$ have the joint distribution given by $Z \sim \cgauss{0}{1}, \; T = \T(|Z|/\sqrt{\delta})$. The functions $\Lambda, \psi_1$ are defined on $[1,\infty)$ and the functions $\psi_2, \psi_3$ are defined on $(1,\infty)$.

\begin{remark}  Under Assumption \ref{regularity_assumption}, the support of the random variable $T$ is the interval $[0,1]$. Hence the definition of these functions at $\tau = 1$ needs some clarification. First, note that the random variable $(1-T)^{-1} \geq 0 $. Hence, the $\E[(1-T)^{-1}]$ is well-defined, but maybe $\infty$.  If it is finite, each of the above functions are well-defined at $\tau=1$. If $\E[(1-T)^{-1}] = \infty$, we define, $\Lambda(1) = 1, \psi_1(1) = 1$. This corresponds to interpreting $1/\infty = 0$ and $\infty/\infty = 1$ in the definition of these functions.
\end{remark}

\begin{theorem} Define $\tau_{r} \triangleq \arg\min_{\tau \in [1,\infty)} \Lambda(\tau)$. Also, let $\theta_\star$ denote the unique value of $\theta> \tau_{r}$ that satisfies $\psi_1(\theta) = \frac{\delta}{\delta - 1}$. Then, under Assumption \ref{regularity_assumption}, we have
	\begin{align*}
	\lambda_1(\bm M) \explain{a.s.}{\rightarrow} \begin{cases}   \Lambda(\tau_{r}), &\psi_1(\tau_{r}) \leq \frac{\delta}{\delta - 1}, \\
	\Lambda(\theta_\star),  & \psi_1(\tau_{r}) > \frac{\delta}{\delta - 1}.
	\end{cases}
	\end{align*}
	Furthermore, 
	\begin{align*}
	\frac{|\bm x_\star^\UH  \hat{\bm x}|^2}{n} \explain{a.s.}{\rightarrow}  \begin{cases} 0, & \psi_1(\tau_{r}) < \frac{\delta}{\delta - 1}, \\
	\frac{\left(\frac{\delta}{\delta -1 } \right)^2 - \frac{\delta}{\delta -1 } \cdot \psi_2(\theta_\star)}{\psi_3(\theta_\star)^2 - \frac{\delta}{\delta -1 } \cdot \psi_2(\theta_\star)}, & \psi_1(\tau_{r}) > \frac{\delta}{\delta - 1} .
	\end{cases}
	\end{align*}
	\label{result_lambda1}
\end{theorem}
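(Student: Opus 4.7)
The plan is to reduce $\bm M$ to a rank-at-most-two deformation of a matrix whose limiting spectrum is computable via free probability, then to locate the top eigenvalue via the deformation's secular equation and to extract the overlap from resolvent bilinear-form limits.

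\textbf{Decoupling and bulk edge.} The central issue is the dependence of $\bm T$ on $\bm A$. Let $\bm u \Mydef \bm A \bm x_\star / \sqrt{n}$; since $\bm A^\UH \bm A = \bm I_n$ and $\|\bm x_\star\|^2 = n$, the vector $\bm u$ is uniformly distributed on the unit sphere in $\mathbb C^m$, and $\bm T$ (whose $i$th diagonal entry is $\T(\sqrt{n}|u_i|)$) is a deterministic function of $\bm u$. Write $\bm A = n^{-1/2} \bm u \bm x_\star^\UH + \bm B$; then $\bm B \bm x_\star = \bm 0$ and, conditional on $\bm u$, $\bm B$ is distributed as a Haar partial isometry between the orthogonal complements of $\bm x_\star$ and of $\bm u$, hence independent of $\bm T$. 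Substituting,
\begin{equation*}
\bm M = \bm B^\UH \bm T \bm B + \bm x_\star \bm w^\UH + \bm w \bm x_\star^\UH + \beta \bm x_\star \bm x_\star^\UH,
\end{equation*}
with $\bm w \Mydef n^{-1/2}\bm B^\UH \bm T \bm u$ (so $\bm w \perp \bm x_\star$) and $\beta \Mydef n^{-1}\bm u^\UH \bm T \bm u$, so $\bm M$ is a rank-at-most-two deformation of the bulk $\bm B^\UH \bm T \bm B$. By Voiculescu's asymptotic freeness of deterministic matrices and Haar unitaries, together with concentration of the empirical measure of $\bm T$, the empirical spectral distribution of $\bm B^\UH \bm T \bm B$ converges almost surely to a deterministic measure $\mu_\infty$ whose Cauchy transform obeys a subordination identity; a direct computation shows that $\Lambda(\tau) = \tau - (1-1/\delta)/\E[1/(\tau-T)]$ plays the role of the inverse subordination function, and the right edge of $\mu_\infty$ is the critical value $\Lambda(\tau_r) = \min_{\tau \geq 1}\Lambda(\tau)$ located where $\Lambda'(\tau) = 0$.

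\textbf{Outlier and overlap.} For the top eigenvalue I apply the Belinschi--Bercovici--Capitaine--F\'eral framework for outliers of finite-rank deformations of free ensembles. Using the determinantal identity for $\det(\lambda \bm I - \bm M)$ together with concentration of the relevant resolvent bilinear forms in $\bm x_\star$ and $\bm w$, the $2 \times 2$ secular equation reduces in the limit to the scalar equation $\psi_1(\theta) = \delta/(\delta-1)$, where $\theta = \Lambda^{-1}(\lambda) \in (\tau_r, \infty)$; the factor $|Z|^2$ in the numerator of $\psi_1$ arises from concentration of quadratic forms in the uniform spherical vector $\bm u$, e.g.\ $\bm u^\UH (\tau\bm I - \bm T)^{-1} \bm u \to \E[|Z|^2/(\tau-T)]$. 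Monotonicity of $\psi_1$ on $[\tau_r, \infty)$ then yields the claimed dichotomy: a unique $\theta_\star > \tau_r$ exists iff $\psi_1(\tau_r) > \delta/(\delta-1)$, in which case $\lambda_1(\bm M) \to \Lambda(\theta_\star)$; otherwise $\lambda_1(\bm M)$ sticks to $\Lambda(\tau_r)$. For the overlap, decompose $\hat{\bm x}$ as a component along $\bm x_\star$ plus an orthogonal remainder, project the eigenvector identity $\hat{\bm x} = (\Lambda(\theta_\star) \bm I - \bm B^\UH \bm T \bm B)^{-1}(\bm M - \bm B^\UH \bm T \bm B)\hat{\bm x}$ onto $\bm x_\star$ and $\bm w$, and solve the resulting $2 \times 2$ linear system using the second-order resolvent limits obtained by differentiating the first-order limits with respect to $\lambda$; these second-order limits are precisely $\psi_2(\theta_\star)$ and $\psi_3^2(\theta_\star)$, and algebraic simplification yields the stated formula.

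\textbf{Main obstacle.} The principal technical challenge is that the rank-two perturbation is itself random and coupled to the bulk through the shared vector $\bm u$; conditioning on $\bm u$ makes $\bm B$ Haar and independent of $\bm T$ (so free probability applies to the bulk), but $\bm w = n^{-1/2}\bm B^\UH \bm T \bm u$ is still a function of $\bm B$. The heart of the argument is therefore to identify the deterministic limits of the first- and second-order bilinear resolvent forms in the directions $\bm x_\star$ and $\bm w$, and to show that they are precisely $\psi_1, \psi_2, \psi_3$---these emerge from the Haar structure of $\bm B$ combined with the near-Gaussianity of $\sqrt{m}\bm u$. A secondary difficulty is rigorously identifying $\Lambda$ as the correct subordination function for this compression ensemble and justifying the bulk-edge characterization via its critical point $\tau_r$, which can be carried out by adapting the subordination machinery of Belinschi et al.\ that the paper explicitly invokes.
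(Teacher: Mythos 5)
Your decomposition $\bm M = \bm B^\UH \bm T \bm B + \bm x_\star \bm w^\UH + \bm w \bm x_\star^\UH + \beta \bm x_\star \bm x_\star^\UH$ is correct (and, after noting that $\bm B\bm x_\star=\bm 0$ forces $\bm x_\star$ into the kernel of the bulk, even simpler than you suggest: the mixed bilinear form $\bm x_\star^\UH(\lambda\bm I-\bm B^\UH\bm T\bm B)^{-1}\bm w$ vanishes and $\bm x_\star^\UH(\lambda\bm I-\bm B^\UH\bm T\bm B)^{-1}\bm x_\star=n/\lambda$, so only the $\bm w$-$\bm w$ form is nontrivial). Your identification of the bulk as the free multiplicative convolution of $\mathcal{L}_T$ with the Bernoulli compression measure, with $\Lambda$ as the inverse subordination function, is also accurate. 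But the difficulty you flag in your last paragraph is not a secondary one to be "carried out by adapting the subordination machinery"; it is the whole ball game, and it is not resolved by the proposal. The direction $\bm w = n^{-1/2}\bm B^\UH\bm T\bm u$ is a function of the very Haar randomness that drives the bulk $\bm B^\UH\bm T\bm B$. Theorem \ref{outlier_theorem_free_probability} (Belinschi et al.) treats deformations whose spike directions are independent of the Haar rotation. What you actually need is an isotropic-type concentration statement for the bilinear form $\bm w^\UH(\lambda\bm I-\bm B^\UH\bm T\bm B)^{-1}\bm w$ in a $\bm B$-dependent direction, plus its $\lambda$-derivative. Nothing cited in the paper supplies this, and asserting that "these emerge from the Haar structure of $\bm B$" is a restatement of the problem, not a proof.

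The paper circumvents exactly this coupling by a different route, and that is the step your proposal is missing. After reducing to $\bm x_\star=\sqrt{n}\bm e_1$, the paper invokes a lemma of Lu and Li (Proposition \ref{lu_reduction}) which trades the sandwich perturbation for a one-parameter family $L_m(\vartheta)=\lambda_1\big(\bm A_{-1}^\UH(\bm T+\vartheta\,\bm T\bm A_1(\bm T\bm A_1)^\UH)\bm A_{-1}\big)$: the solution of $L_m(\vartheta)=1/\vartheta+\bm A_1^\UH\bm T\bm A_1$ recovers $\lambda_1(\bm M)$, and the derivative $L_m'(\vartheta)$ recovers the overlap. Conditioning on $\bm A_1$ and writing $\bm A_{-1}=\bm B\bm H_{m-1}\bm S_{m-1,n-1}$ turns this into $\lambda_1(\bm E(\vartheta)\bm H_{m-1}\bm R\bm H_{m-1}^\UH)$, where the deformed factor $\bm E(\vartheta)$ is now \emph{independent} of the Haar unitary $\bm H_{m-1}$. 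That independence is what lets the BBC--F outlier theorem apply verbatim, with the spike location $\theta(\vartheta)$ of $\bm E(\vartheta)$ pinned down by an elementary interlacing/secular argument (Lemma \ref{spectrum_E_empirical}). Your outline needs either to rediscover this absorption trick or to supply an isotropic local law for Haar compressions, and as written it does neither, so the central step is a genuine gap.
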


\begin{remark} The proof of Theorem \ref{result_lambda1} shows that if $\psi_1(\tau_r) > \delta/(\delta - 1)$, there exists \emph{exactly} one solution to the equation $\psi_1(\theta) = \delta/(\delta -1),\; \theta \in (\tau_r,\infty)$. Hence, $\theta_\star$ is well-defined.
\end{remark}
The proof of this result is postponed until Section \ref{proof_section}. Before we proceed to the proof of this theorem, let us clarify some of its interesting features. First, note that similar to the Gaussian sensing matrices, even in the case of partial orthogonal matrices, the maximum eigenvector exhibits a phase transition behavior. For certain values of $\delta>1$, the inequality $\psi_1(\tau_r) < \frac{\delta}{\delta - 1}$ holds, and hence the maximum eigenvector does not carry information about $\bm{x}_*$. For other values of $\delta$, the inequality $\psi_1(\tau_\star) > \frac{\delta}{\delta - 1}$ holds and hence, the direction of the maximum eigenvector starts to offer information about the direction of $\bm{x}_*$. For typical choices of the trimming function $\T$, there exists a critical value of $\delta$, denoted by $\delta_\T$ such that, when $\delta < \delta_\T$, the spectral estimator is asymptotically orthogonal to the signal vector. When $\delta > \delta_\T$, the spectral estimator makes a non-trivial angle with the signal vector. This phase transition phenomena is illustrated in Figure \ref{phase_transition_plot} for 3 different choices of $\T$.

% \begin{figure*}[hbpt]
% \centering
% \includegraphics[width=.6\textwidth]{phase_transition_plot_new.pdf}
% 		\caption{Plot of the asymptotic cosine similarity between $\hat{\bm x}$ and $\bm x_\star$ }
%		\label{phase_transition_plot}
%\end{figure*}
\begin{figure}[!t]
\centering
\includegraphics[width=3in]{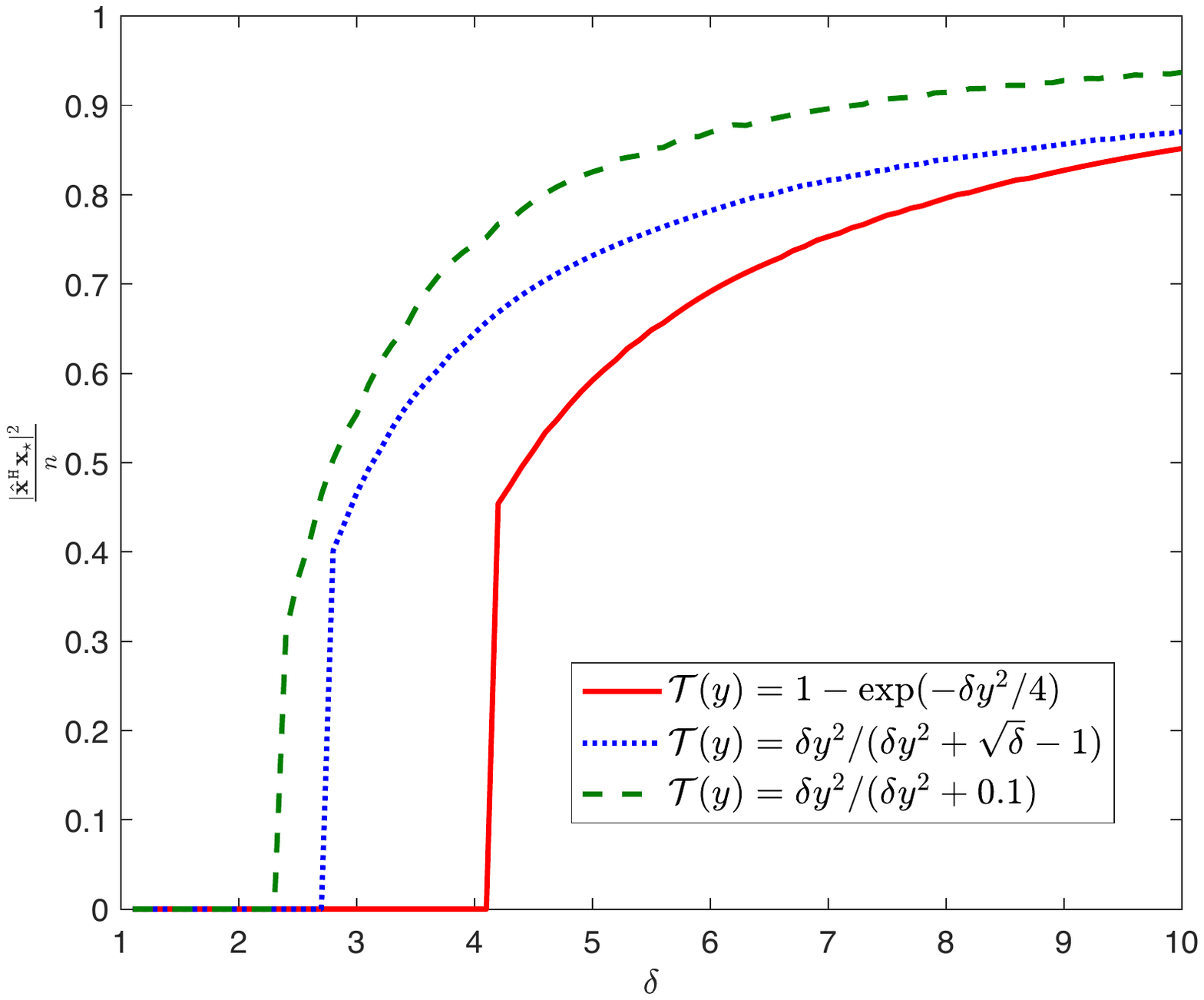}
\caption{Plot of the asymptotic cosine similarity between $\hat{\bm x}$ and $\bm x_\star$.}
\label{phase_transition_plot}
\end{figure}

\begin{remark}[Choice of Trimming function] The trimming function in Figure \ref{phase_transition_plot} are supported on $[0,1]$.
	\begin{enumerate}
		\item $\T(y) = \delta y^2/(\delta y^2 + \sqrt{\delta} - 1)$ is a translated and re-scaled version of the trimming function proposed by \cite{mondelli2017fundamental}. 
		\item $\T(y) = \delta y^2/(\delta y^2 + 0.1)$ is a regularized version of the trimming function proposed by \cite{Lu2018}.
	\end{enumerate}
\end{remark}

\begin{remark}[Extensions to generalized linear measurements] While we focus on the phase retrieval problem in this paper, our results extend straightforwardly to the generalized linear estimation, where the measurements $y_i$ are generated as follows:
\begin{align*}
    y_i & \sim f(\cdot | (\bm A \bm x_\star)_i),
\end{align*}
where $f(\cdot | \cdot)$ denotes a conditional distribution modelling a possibly randomized output channel. Under suitable regularity assumptions on $f$, Theorem 1 holds with the change that the joint distribution of the random variables $T,Z$ is now given by:
\begin{align*}
    Z \sim \cgauss{0}{1}, \; Y \sim f \bigg( \cdot \bigg| \frac{Z}{\sqrt{\delta}} \bigg),  \;   T  = \T(Y).
\end{align*}
\end{remark}

\section{Optimal Trimming Functions}
Theorem \ref{result_lambda1} can used to design the trimming function $\T$ optimally in order to obtain the best possible value of $|\bm x_\star^\UH \hat{\bm x}|^2$. Most of the work towards this goal was already done in \cite{ma2019spectral} where the result in Theorem \ref{result_lambda1} was stated as a conjecture and was used to design the optimal trimming function. In particular, \cite{ma2019spectral} showed the following impossibility result. 

\begin{proposition}[\cite{ma2019spectral}] \label{junjie_impossibility_result} Let $\T$ be any trimming function for which Theorem \ref{result_lambda1} holds. Then, 
\begin{align*}
    \limsup_{\substack{m,n \rightarrow \infty \\m  = n\delta}} \;  \frac{|\bm x_\star^\UH \hat{\bm x}|^2}{n}  &\explain{a.s.}{\leq} \rho^2_\opt(\delta), \; 
\end{align*}
where, 
\begin{align*}
    \rho^2_\opt(\delta) & \Mydef \begin{cases} 0,  & \delta \leq 2 \\ \frac{\theta_\star^\opt - 1}{\theta_\star^\opt - \frac{1}{\delta}}, & \delta > 2  \end{cases},
\end{align*}
where $\theta_\star^\opt$ is the solution to the equation (in $\tau$):
\begin{align*}
    \psi_1^\opt(\tau) & = \frac{\delta}{\delta - 1}, \; \psi_1^\opt(\tau) \Mydef  \frac{\E \left[ \frac{|Z|^2}{\tau - T_\opt} \right]}{\E \left[ \frac{1}{\tau - T_\opt} \right]}, \; \tau \in (1,\infty),
\end{align*}
which exists uniquely when $\delta > 2$ and, the random variable $T_\opt$ is distributed as:
\begin{align*}
    Z & \sim \cgauss{0}{1}, \; T_\opt = 1 - \frac{1}{|Z|^2 }.
\end{align*}
\end{proposition}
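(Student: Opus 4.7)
The plan is to optimize the overlap formula from Theorem~\ref{result_lambda1} over the admissible trimming functions. Since the asymptotic overlap depends on $\T$ only through the law of $T=\T(|Z|/\sqrt{\delta})$, and $T$ is a deterministic function of $R := |Z|^2 \sim \text{Exp}(1)$, I would parametrize everything by $R$. Writing $U := 1/(\theta_\star - T)$ and normalizing $V := U/\E[U]$, the overlap expression of Theorem~\ref{result_lambda1} becomes
$$\rho^2 \;=\; \frac{\lambda(\lambda - \E[V^2])}{\E[(R-\lambda)V^2]}, \qquad \lambda := \frac{\delta}{\delta-1},$$
subject to $V\geq 0$, $\E[V]=1$, and the identity $\E[RV]=\lambda$ (which is exactly $\psi_1(\theta_\star)=\lambda$). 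Admissibility additionally imposes the pointwise bound $V(R) \in [c/\theta,c/(\theta-1)]$ for $c:=1/\E[U]$, coming from $T\in[0,1]$.

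For the $\delta\leq 2$ regime I would apply Cauchy--Schwarz to the constraint: since $\E[R^2]=2$, we have $\lambda^2 = \E[RV]^2 \leq 2\,\E[V^2]$, so $\E[V^2]\geq \lambda^2/2\geq \lambda$ once $\lambda\geq 2$. This forces the numerator of $\rho^2$ to be non-positive for every admissible $\T$. Since the active branch of Theorem~\ref{result_lambda1} requires $\psi_1(\tau_r)>\lambda$ and yields strictly positive overlap there, we must instead be in the other branch, giving $\rho^2=0$. For $\delta>2$, I would carry out a calculus-of-variations argument with Lagrange multipliers for the two linear constraints. The stationarity equation produces optimizers of the form $V(R)=(A+BR)/(R+\beta)$; optimizing over the free parameters, the supremum is attained in the limit $A\to 0$, which corresponds via $T = \theta - 1/(cV)$ to $T_{\opt} = 1 - 1/R$. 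Plugging $T_{\opt}$ back into $\E[U],\E[|Z|^2 U],\E[U^2],\E[|Z|^2 U^2]$ and simplifying the resulting partial-fraction integrals produces the closed form $\rho_{\opt}^2 = (\theta_\star^{\opt}-1)/(\theta_\star^{\opt}-1/\delta)$, and a monotonicity check on $\psi_1^{\opt}$ establishes uniqueness of $\theta_\star^{\opt}$ precisely when $\delta>2$.

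The main technical obstacle is that $T_{\opt} = 1 - 1/R$ diverges to $-\infty$ as $R\to 0$ and therefore violates the boundedness requirement of Assumption~\ref{regularity_assumption}, so Theorem~\ref{result_lambda1} does not apply to $T_{\opt}$ itself. I would convert the formal Lagrangian supremum into a rigorous upper bound by approximation: take a sequence of admissible, Lipschitz truncations such as $T^{(k)}(R) := \max(-k,\,1-1/R)$ and apply the affine rescaling of Remark~\ref{supp_remark} to bring each into $[0,1]$. Then one must check that the associated $\theta_\star^{(k)}$, $\psi_2^{(k)}$, $\psi_3^{2,(k)}$ converge to the values produced by $T_{\opt}$, from which the $\limsup$ bound $\rho^2 \leq \rho_{\opt}^2$ follows. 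The convergence reduces to uniform integrability of $1/(\theta-T^{(k)})$ and $|Z|^2/(\theta-T^{(k)})$ on a compact neighborhood of $\theta_\star^{\opt}$, which holds for $\delta>2$ because the equation $\psi_1^{\opt}(\theta)=\lambda$ forces $\theta_\star^{\opt}$ strictly above $1$, keeping the integrands bounded. This limiting argument is precisely the step that \cite{ma2019spectral} carried out under the then-conjectural form of Theorem~\ref{result_lambda1}; invoking Theorem~\ref{result_lambda1} now makes the argument rigorous.
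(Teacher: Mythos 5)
The paper does not actually prove Proposition~\ref{junjie_impossibility_result}; it is cited wholesale from \cite{ma2019spectral}, and Appendix~\ref{proof_optimal_trimming} proves only the companion achievability statement, Proposition~\ref{optimal_prop}. So your proposal must stand on its own. Your reparametrization in terms of $V = U/\E[U]$ is correct, and the Cauchy--Schwarz argument for $\delta \leq 2$ is clean: $\lambda = \delta/(\delta-1) \geq 2$ together with $\E[R^2]=2$ gives $\lambda^2 = (\E[RV])^2 \leq 2\,\E[V^2]$, forcing $\psi_2(\theta_\star)=\E[V^2] \geq \lambda^2/2 \geq \lambda$; this contradicts the active-branch requirement $\psi_2(\theta_\star) < \lambda$, which holds because $\theta_\star > \tau_r$ forces $\Lambda'(\theta_\star)>0$. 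So the active branch cannot occur and the overlap vanishes.

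For $\delta>2$, however, there is a genuine gap. The calculus-of-variations step only produces \emph{candidate} stationary points: the objective is a quadratic-fractional functional, not concave, over a feasible set whose pointwise bounds themselves depend on the unknowns $\theta$ and $c$, so stationarity does not imply global optimality, and the claims ``optimizing over the free parameters'' and ``the supremum is attained in the limit $A\to 0$'' are assertions rather than arguments. Worse, the concluding truncation step --- showing that $T^{(k)}=\max(-k,\,1-1/R)$ yields overlaps converging to $\rho_{\opt}^2$ --- establishes \emph{achievability} of $\rho_{\opt}^2$, which is exactly the content of Proposition~\ref{optimal_prop}, not the \emph{impossibility} bound $\rho^2\leq\rho_{\opt}^2$ for every admissible $\T$. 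Nothing in the sketch rules out some other admissible trimming function exceeding $\rho_{\opt}^2$. To close the gap you would need either a global-optimality argument showing $T_{\opt}$ maximizes the functional over the closure of the admissible set, or a direct inequality $\rho^2 \leq \rho_{\opt}^2$ valid for arbitrary $V\geq 0$ satisfying $\E V = 1$ and $\E[RV]=\lambda$.
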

The work \cite{ma2019spectral} also provided a candidate for the optimal trimming function:
\begin{align*}
    \T_\opt(y) & = 1 - \frac{1}{\delta y^2}.
\end{align*}
They showed that if the characterization given in Theorem \ref{result_lambda1} holds for $\T_\opt$, then it achieves the asymptotic squared correlation $\rho^2_\opt(\delta)$. Unfortunately, since $\T_\opt$ is unbounded, Theorem~\ref{result_lambda1} does not apply to it. Extending Theorem \ref{result_lambda1} to unbounded trimming functions would likely require extending previously known results in free probability to unbounded measures, and we don't pursue this approach in our work. Instead, we suitably modify the arguments of \cite{ma2019spectral} to show that the family of bounded trimming functions:
\begin{align*}
    \T_{\opt,\epsilon}(y) & = 1 - \frac{1}{\delta y^2 + \epsilon}, \; \epsilon > 0,
\end{align*}
attains an asymptotic squared correlation that can be made arbitrarily close to $\rho^2(\delta)$ as $\epsilon \downarrow 0$.

\begin{proposition} \label{optimal_prop} Let $\hat{\bm x}_\epsilon$ denote the spectral estimator for $\bm x_\star$ obtained by using $\T_{\opt,\epsilon}$ as the trimming function. We have, almost surely,
\begin{align*}
    \lim_{\epsilon \downarrow 0} \; \lim_{\substack{m,n \rightarrow \infty\\ m = n \delta}} \; \frac{|\bm x_\star^\UH \hat{\bm x}_\epsilon|^2}{n} & = \rho^2_\opt(\delta).
\end{align*}
\end{proposition}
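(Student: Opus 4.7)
The plan is to apply Theorem \ref{result_lambda1} to $\T_{\opt,\epsilon}$ at each fixed $\epsilon > 0$ to obtain a closed-form limit $\rho^2_\epsilon(\delta)$ for $|\bm x_\star^\UH \hat{\bm x}_\epsilon|^2/n$, and then to show $\rho^2_\epsilon(\delta) \to \rho^2_\opt(\delta)$ as $\epsilon \downarrow 0$. For each $\epsilon > 0$, the function $\T_{\opt,\epsilon}$ is Lipschitz on $\mathbb{R}_{\geq 0}$, is bounded (with range in $[1-1/\epsilon, 1)$), and induces $T_{\opt,\epsilon} = 1 - 1/(|Z|^2 + \epsilon)$, which inherits a Lebesgue density from the exponential law of $|Z|^2$. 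After the affine rescaling of Remark \ref{supp_remark} (which leaves $\hat{\bm x}_\epsilon$ invariant and, as a direct computation under $T \mapsto \alpha T + \beta$, $\tau \mapsto \alpha \tau + \beta$ shows, leaves $\psi_1,\psi_2,\psi_3$ invariant while transforming $\Lambda$ by the same affine map), Assumption \ref{regularity_assumption} is satisfied and Theorem \ref{result_lambda1} furnishes $\rho^2_\epsilon(\delta)$. For $\delta \leq 2$, Proposition \ref{junjie_impossibility_result} gives $0 \leq \rho^2_\epsilon(\delta) \leq \rho^2_\opt(\delta) = 0$, so the claim is automatic; the rest of the argument addresses $\delta > 2$.

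The key identity is $\tau - T_{\opt,\epsilon} = (\tau - 1) + 1/(|Z|^2 + \epsilon)$, which for $\tau > 1$ is bounded below by $\tau - 1 > 0$ and converges monotonically to $\tau - T_\opt = (\tau-1) + 1/|Z|^2$ as $\epsilon \downarrow 0$. Dominated convergence (with dominants $(\tau-1)^{-k}$ and integrable polynomials in $|Z|$) yields pointwise convergence $\psi_j^\epsilon(\tau) \to \psi_j^\opt(\tau)$ and $\Lambda^\epsilon(\tau)\to\Lambda^\opt(\tau)$, locally uniform on $(1,\infty)$, and extends to the endpoint $\tau=1$ where one computes $\psi_1^\opt(1) = \E[|Z|^4]/(\E[|Z|^2])^2 = 2$ and $\psi_2^\opt(1) = 2$. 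Since $\delta>2$ gives $\delta/(\delta-1)<2$, we have $\psi_2^\epsilon(1) > \delta/(\delta-1)$ for all small $\epsilon$; from $(\Lambda^\epsilon)^\prime(\tau) = 1 - (1 - 1/\delta)\,\psi_2^\epsilon(\tau)$ this forces $(\Lambda^\epsilon)^\prime(1)<0$, so the minimizer $\tau_r^\epsilon$ lies in the interior of $(1,\infty)$ and in fact $\tau_r^\epsilon\downarrow 1$.

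Consequently $\psi_1^\epsilon(\tau_r^\epsilon) \to \psi_1^\opt(1) = 2 > \delta/(\delta-1)$, placing the problem in the second branch of Theorem \ref{result_lambda1}. The uniqueness statement following Theorem \ref{result_lambda1} combined with strict monotonicity of $\psi_1^\epsilon$ on $(\tau_r^\epsilon,\infty)$ gives existence of $\theta_\star^\epsilon$; since $\psi_1^\opt(\tau) \to 1 < \delta/(\delta-1)$ as $\tau \to \infty$ (from $\tau\cdot\E[1/(\tau-T_\opt)]\to 1$ and $\tau\cdot\E[|Z|^2/(\tau-T_\opt)]\to \E[|Z|^2]=1$), $\theta_\star^\epsilon$ stays in a compact subset of $(1,\infty)$, and continuous dependence of simple roots on parameters yields $\theta_\star^\epsilon \to \theta_\star^\opt$. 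Passing to the limit in the cosine-similarity formula,
\begin{align*}
\lim_{\epsilon \downarrow 0} \rho^2_\epsilon(\delta) = \frac{\bigl(\delta/(\delta-1)\bigr)^2 - \bigl(\delta/(\delta-1)\bigr)\,\psi_2^\opt(\theta_\star^\opt)}{\psi_3^\opt(\theta_\star^\opt)^2 - \bigl(\delta/(\delta-1)\bigr)\,\psi_2^\opt(\theta_\star^\opt)}.
\end{align*}
The final step is the algebraic identification of this right-hand side with $(\theta_\star^\opt - 1)/(\theta_\star^\opt - 1/\delta)$: substitute $T_\opt = 1 - 1/|Z|^2$ into $\psi_2^\opt,\psi_3^\opt$, use the defining equation $\psi_1^\opt(\theta_\star^\opt)=\delta/(\delta-1)$, and use the complex-Gaussian moments $\E[|Z|^{2k}] = k!$; this calculation is essentially the one already carried out in \cite{ma2019spectral}.

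The hardest part I anticipate is controlling $\theta_\star^\epsilon$ in a compact subset of $(1,\infty)$ uniformly in small $\epsilon$, so that the locally uniform convergence of $\psi_1^\epsilon$ can actually be invoked, ruling out either $\theta_\star^\epsilon\downarrow 1$ or $\theta_\star^\epsilon\to\infty$. A secondary technical point is that the support of $T_{\opt,\epsilon}$ expands to $(-\infty,1)$ as $\epsilon\downarrow 0$; the Gaussian decay of $|Z|^2$ keeps the relevant integrals uniformly dominated, but this must be checked explicitly for $\psi_3^\epsilon$, which carries $|Z|^2$ in the numerator and whose moments are the most sensitive to tail behavior on the negative portion of the support.
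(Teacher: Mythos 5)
Your overall strategy is the same as the paper's: apply Theorem~\ref{result_lambda1} to $\T_{\opt,\epsilon}$ (after the rescale/shift of Remark~\ref{supp_remark}), then take $\epsilon \downarrow 0$ and identify the limit with $\rho^2_\opt(\delta)$ via the algebra in \cite{ma2019spectral}. Your Case~$\delta \leq 2$ argument, which squeezes $\rho^2_\epsilon(\delta)$ between $0$ and $\rho^2_\opt(\delta)=0$ using Proposition~\ref{junjie_impossibility_result}, is valid and actually a bit cleaner than the paper, which instead verifies directly that the first branch of Theorem~\ref{result_lambda1} applies.

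However, there is a genuine error in your Case~$\delta>2$: the claim that ``$\tau_r^\epsilon \downarrow 1$'' is false, and the conclusion you draw from it, $\psi_1^\epsilon(\tau_r^\epsilon)\to\psi_1^\opt(1)=2$, is not correct. Since $\Lambda_\epsilon$ is strictly convex on $[1,\infty)$, the interior minimizer satisfies the first-order condition $\psi_2^\epsilon(\tau_r^\epsilon)=\delta/(\delta-1)$. But one computes $\psi_2^\epsilon(1)=\bigl((1+\epsilon)^2+1\bigr)/(1+\epsilon)^2 \to 2$, and for $\delta>2$ we have $\delta/(\delta-1)<2$; so $\psi_2^\epsilon(1)$ stays strictly above the level $\delta/(\delta-1)$, and since $\psi_2^\epsilon$ is continuous and strictly decreasing, the root $\tau_r^\epsilon$ converges to the root $\tau_r^\opt>1$ of $\psi_2^\opt(\tau)=\delta/(\delta-1)$ --- it is bounded away from $1$. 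Thus $\psi_1^\epsilon(\tau_r^\epsilon)\to\psi_1^\opt(\tau_r^\opt)$, which you have not shown exceeds $\delta/(\delta-1)$. The paper closes this gap with a separate application of Chebychev's association inequality (Fact~\ref{chebychev_association}, taking $B=G_\epsilon(\tau)$, $A=|Z|$, $f(a)=a^2\bigl(\tau-\T_\epsilon(a/\sqrt{\delta})\bigr)$, $g(a)=\bigl(\tau-\T_\epsilon(a/\sqrt{\delta})\bigr)^{-1}$) to prove $\psi_1^\epsilon(\tau) > \psi_2^\epsilon(\tau)$ on all of $[1,\infty)$; combined with the optimality condition at $\tau_r^\epsilon$, this yields $\psi_1^\epsilon(\tau_r^\epsilon)>\delta/(\delta-1)$ for every $\epsilon>0$, placing you in the second branch of Theorem~\ref{result_lambda1} without any need to track where $\tau_r^\epsilon$ lands. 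You need some substitute for this step; the remainder of your argument (compactness of $\theta_\star^\epsilon$ in $(1,\infty)$ by the boundary values $\psi_1^\opt(1^+)=2$ and $\psi_1^\opt(\infty)=1$, convergence $\theta_\star^\epsilon\to\theta_\star^\opt$, dominated convergence for $\psi_i^\epsilon$, and the final algebraic simplification) matches the paper and is sound.
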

We provide a proof of this result in Appendix \ref{proof_optimal_trimming}.

The regularized trimming functions $\T_{\opt,\epsilon}$ are not only useful from a theoretical point of view to prove an achievability result, but also from a computational stand point: In simulations we have observed that the power iterations are slow to converge when $\T_{\opt}$ is used as the trimming function due to presence of large negative eigenvalues and this problem is mitigated by using $\T_{\opt,\epsilon}$ with a small value of $\epsilon$ (such as $0.1$ or $0.01$) with a negligible degradation in performance. 
\section{Proof of Theorem \ref{result_lambda1}}  \label{proof_section}

\subsection{Roadmap}
Our proof follows the general strategy taken by \cite{Lu17}. In this subsection, we state several key lemmas and show how they fit together in the proof of Theorem \ref{main_result}.  First we note that  without loss of generality, for the purpose of analysis of the spectral estimator, we can assume $\bm x_\star = \sqrt{n} \bm e_1$. The following lemma supports this claim. 

\begin{lemma} The distribution of the cosine similarity, $\rho^2 = |\bm x_\star ^\UH \hat{\bm x}|^2/n$ is independent of $\bm x_\star$.
\end{lemma}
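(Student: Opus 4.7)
\medskip

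The plan is a straightforward rotational-invariance argument exploiting the fact that the Haar measure on $\mathbb{U}(m)$ is bi-invariant. Fix an arbitrary $\bm x_\star \in \mathbb{C}^n$ with $\|\bm x_\star\| = \sqrt{n}$. First I would choose a deterministic unitary matrix $\bm U \in \mathbb{U}(n)$ satisfying $\bm U \bm x_\star = \sqrt{n}\,\bm e_1$ (any completion of $\bm x_\star/\sqrt{n}$ to an orthonormal basis works). Then I would lift $\bm U$ to an element of $\mathbb{U}(m)$ by setting $\bm V \triangleq \mathrm{diag}(\bm U, \bm I_{m-n})$.

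The key algebraic identity is $\bm S_{m,n} \bm U^\UH = \bm V^\UH \bm S_{m,n}$, which is immediate from the block structure of $\bm S_{m,n}$ and $\bm V$. Consequently, defining $\tilde{\bm A} \triangleq \bm A \bm U^\UH$, I get
\begin{align*}
\tilde{\bm A} \;=\; \bm H \bm S_{m,n} \bm U^\UH \;=\; \bm H \bm V^\UH \bm S_{m,n}.
\end{align*}
By the right-invariance of the Haar measure on $\mathbb{U}(m)$, $\bm H \bm V^\UH \explain{d}{=} \bm H$, and therefore $\tilde{\bm A} \explain{d}{=} \bm A$.

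Next I would transport the spectral estimator through this change of coordinates. Since $\tilde{\bm A}\cdot \sqrt{n}\bm e_1 = \bm A \bm U^\UH \sqrt{n}\bm e_1 = \bm A \bm x_\star$, the observation vectors coincide: $|\tilde{\bm A}\sqrt{n}\bm e_1| = |\bm A \bm x_\star| = \bm y$, so the trimming matrix $\bm T$ is identical in both problems. Hence $\tilde{\bm M} \triangleq \tilde{\bm A}^\UH \bm T \tilde{\bm A} = \bm U \bm M \bm U^\UH$, and the leading unit-norm eigenvector of $\tilde{\bm M}$ is $\hat{\tilde{\bm x}} = \bm U \hat{\bm x}$ (up to the usual global phase ambiguity, which is irrelevant since we take modulus-squared). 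The cosine similarity for the surrogate problem is
\begin{align*}
\frac{|(\sqrt{n}\bm e_1)^\UH \hat{\tilde{\bm x}}|^2}{n} \;=\; \frac{|(\sqrt{n}\bm e_1)^\UH \bm U \hat{\bm x}|^2}{n} \;=\; \frac{|\bm x_\star^\UH \hat{\bm x}|^2}{n},
\end{align*}
so the cosine similarity for signal $\bm x_\star$ under $\bm A$ equals the cosine similarity for signal $\sqrt{n}\bm e_1$ under $\tilde{\bm A}$. Since $\tilde{\bm A} \explain{d}{=} \bm A$, the two cosine similarities have the same distribution, which proves the claim.

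There is essentially no obstacle here; the only point to be a little careful about is the matching of block dimensions that lets a unitary rotation in $\mathbb{U}(n)$ be realized as right multiplication of $\bm A$ (via a lift in $\mathbb{U}(m)$), which is exactly what the identity $\bm S_{m,n}\bm U^\UH = \bm V^\UH \bm S_{m,n}$ delivers. The phase ambiguity in the leading eigenvector is automatically handled by taking the squared modulus.
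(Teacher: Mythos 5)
Your proof is correct and follows essentially the same rotational-invariance argument as the paper: choose a unitary $\bm U \in \mathbb{U}(n)$ mapping $\bm x_\star$ to $\sqrt{n}\bm e_1$ (the paper calls its inverse $\bm\Gamma$), observe that the trimmed observations are unchanged so the estimator transforms covariantly, and invoke right-invariance of the Haar measure on $\mathbb{U}(m)$ to conclude $\bm A \bm\Gamma \explain{d}{=} \bm A$. The block identity $\bm S_{m,n}\bm U^\UH = \bm V^\UH \bm S_{m,n}$ you make explicit is exactly the mechanism the paper uses implicitly when it writes $\bm H_m \explain{d}{=} \bm H_m \cdot \mathrm{diag}(\bm\Gamma,\bm I_{m-n})$.
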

\begin{proof}
	
	Let $\bm x_\star$ be an arbitrary signal vector with $\|\bm x_\star \|= \sqrt{n}$. Let $\bm y, \bm T, \hat{\bm x}$ denote the measurements, trimmed measurements and spectral estimate generated when the sensing matrix was $\bm A$ and the signal vector was $\bm x_\star$. Note that the cosine similarity $\rho^2$ is a (deterministic) function of $\bm A, \bm x_\star$ and hence we use the notation $\rho^2(\bm A, \bm x_\star)$ to denote the cosine similarity when the sensing matrix is $\bm A$ and the signal vector is $\bm x_\star$.
	
	Let $\bm \Gamma \in \mathbb{U}(n)$ be  such that $\sqrt{n} \bm \Gamma \bm e_1 = \bm x_\star$. We have $ \bm x_\star ^\UH  \hat{ \bm x} = \sqrt{n} \bm e_1^\UH \bm \Gamma^\UH \hat{\bm x}$. Next we note that $\hat{\bm x}^\prime \Mydef \bm \Gamma^\UH \hat{\bm x} $ is the leading eigenvector of the matrix $ \bm M^\prime \Mydef \bm \Gamma^\UH \bm M \bm \Gamma = (\bm A \bm \Gamma)^\UH \bm T \bm  A \bm \Gamma = {\bm A^\prime}^\UH \bm T \bm A^\prime$, where we defined $\bm A^\prime \Mydef \bm A \bm \Gamma$. Noting that $\bm T$ is a diagonal matrix consisting of the trimmed observations $\bm y = |\bm A \bm x_\star| = \sqrt{n}|\bm A^\prime \bm e_1|$, we conclude that $\hat{\bm x}^\prime$ is the spectral estimate generated when the sensing matrix was $\bm A^\prime$ and the signal vector was $\sqrt{n} \bm e_1$. Hence, we have concluded that
	\begin{align*}
	\rho^2(\bm A, \bm x_\star) & = \rho^2(\bm A^\prime, \sqrt{n} \bm e_1).
	\end{align*}
	Next we note that $\bm A$ was generated from the sub-sampled Haar model, that is $\bm A = \bm H_m \bm S_{m,n}$ where ${\bm H_m \sim \Unif(\mathbb U(m))}$.
	Since the Haar measure on $\mathbb{U}(n)$ is invariant to right multiplication by unitary matrices, we have
	\begin{align*}
	\bm H_m \explain{d}{=} \bm H_m \cdot \begin{bmatrix} \bm \Gamma & \bm 0 \\ \bm 0 & \bm I_{m-n} \end{bmatrix},
	\end{align*}
	where the notation $\explain{d}{=}$ means that two random vectors have the same distributions. Consequently $\bm A = \bm H_m \bm S_{m,n} \explain{d}{=} \bm A \bm \Gamma = \bm A^\prime$.  Therefore, $\rho^2(\bm A, \bm x_\star) = \rho^2(\bm A^\prime, \sqrt{n} \bm e_1) \explain{d}{=} \rho^2(\bm A, \sqrt{n} \bm e_1)$, and the distribution of $\rho^2$ is independent of $\bm x_\star$.
\end{proof}

In the light of the above lemma, in the rest of the paper, we will assume $\bm x_\star = \sqrt{n} \bm e_1$. Next, we partition $\bm A$ by separating the first column $$\bm A = [\bm A_1, \bm A_{-1}],$$ where $\bm A_{-1}$ denotes all the remaining columns of $\bm A$ (except $\bm{A}_1$). Hence we can partition $\bm A^\UH \bm T \bm A$ in the following way:
\begin{align}
\bm A^\UH \bm T \bm A & = \begin{bmatrix} \bm A_1^\UH \bm T \bm A_1 & \bm A_1^\UH \bm T \bm A_{-1} \\
\bm A_{-1}^\UH \bm T \bm A_1 & \bm A_{-1}^\UH \bm T \bm A_{-1}
\end{bmatrix}.
\label{partition_eq}
\end{align}
Our strategy will be to reduce questions about the spectrum of the matrix $\bm M$ to questions about the spectrum of a matrix of the form $\bm X = \bm E \bm U \bm F \bm U ^\UH$, where $\bm U$ is a uniformly random unitary matrix, $\bm E$ is a random matrix independent of $\bm U$ and $\bm F$ is deterministic. This matrix model has been well studied in Free Probability \cite{belinschi2017outliers}.
The starting point of our reduction is Proposition 2 from \cite{Lu17}, stated below. 
\begin{proposition}[\cite{Lu17}] Let $\bm D$ be an arbitrary deterministic symmetric matrix partitioned as:
	\begin{align*}
	\bm D & = \begin{bmatrix} a & \bm q^\UH \\
	\bm q & \bm P
	\end{bmatrix}.
	\end{align*}
	Then, we have
	\begin{align*}
	\lambda_1(D) & = L(\vartheta_\star),
	\end{align*}
	where $L(\vartheta) = \lambda_1(\bm P + \vartheta \bm q \bm q^\UH)$, and $\vartheta_\star>0$ is the unique solution to the fixed point equation 
	$L(\vartheta)  = \frac{1}{\vartheta} + a$.
	 Furthermore, let $\bm v_1$ be the eigenvector corresponding to the largest eigenvalue of $\bm D$. Then,
	\begin{align*}
	|\bm e_1^\UH \bm v_1|^2 & \in \left[\frac{\partial_{-}L(\vartheta_\star)}{\partial_{-}L(\vartheta_\star) + (1 / \vartheta_\star)^2}, \frac{\partial_{+}L(\vartheta_\star)}{\partial_{+}L(\vartheta_\star) + (1 / \vartheta_\star)^2}\right],
	\end{align*}
	where $\partial_{-}$ and $\partial_+$ denote the left and right derivatives respectively. In particular, if $L(\vartheta)$ is differentiable at $\vartheta_\star$, then
	\begin{align*}
	|\bm e_1^\UH \bm v_1|^2 = \frac{L'(\vartheta_\star)}{L'(\vartheta_\star) + (1 / \vartheta_\star)^2}.
	\end{align*}
	\label{lu_reduction}
\end{proposition}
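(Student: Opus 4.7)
The plan is to reduce the spectral problem for $\bm D$ to a scalar secular equation via the Schur complement, recognise the same rank-one structure inside the definition of $L(\vartheta)$, and read off the eigenvector weight by implicitly differentiating the secular relation.

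First, I would observe that for any $\lambda > \lambda_1(\bm P)$, the block $\bm P - \lambda \bm I$ is invertible, so the block determinant identity yields
\begin{equation*}
\det(\bm D - \lambda \bm I) = \det(\bm P - \lambda \bm I) \cdot \left( (a - \lambda) - \bm q^\UH (\bm P - \lambda \bm I)^{-1} \bm q \right).
\end{equation*}
Hence eigenvalues of $\bm D$ strictly above $\lambda_1(\bm P)$ are exactly the roots of the secular equation $\lambda - a = \bm q^\UH (\lambda \bm I - \bm P)^{-1} \bm q$. Writing the right-hand side in the eigenbasis of $\bm P$, it is strictly decreasing from $+\infty$ (as $\lambda \downarrow \lambda_1(\bm P)$, generically, i.e.\ when $\bm q$ is not orthogonal to the top eigenspace) down to $0$, while the left-hand side is strictly increasing. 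Thus there is at most one such solution, and by Cauchy interlacing ($\lambda_1(\bm D) \geq \lambda_1(\bm P) \geq \lambda_2(\bm D)$) it must be $\lambda_1(\bm D)$.

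Second, exactly the same rank-one secular argument applied to $L(\vartheta) = \lambda_1(\bm P + \vartheta \bm q \bm q^\UH)$, which satisfies $L(\vartheta) > \lambda_1(\bm P)$ for $\vartheta > 0$, yields the implicit characterisation
\begin{equation*}
\frac{1}{\vartheta} = \bm q^\UH \bigl( L(\vartheta) \bm I - \bm P \bigr)^{-1} \bm q.
\end{equation*}
So if $\vartheta_\star > 0$ solves the fixed point $L(\vartheta_\star) = 1/\vartheta_\star + a$, then $\lambda := L(\vartheta_\star)$ satisfies $\lambda - a = 1/\vartheta_\star = \bm q^\UH(\lambda \bm I - \bm P)^{-1}\bm q$, i.e.\ the secular equation of $\bm D$, forcing $\lambda_1(\bm D) = L(\vartheta_\star)$. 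Uniqueness of $\vartheta_\star$ follows from monotonicity: $L$ is strictly increasing on $(0,\infty)$ (rank-one PSD perturbation), $1/\vartheta + a$ is strictly decreasing, and comparing limits at $\vartheta \to 0^+$ (where $1/\vartheta + a \to \infty$ and $L$ stays finite) and at $\vartheta \to \infty$ (where $L(\vartheta) \gtrsim \vartheta \|\bm q\|^2 \to \infty$ but $1/\vartheta + a$ stays bounded) gives exactly one crossing.

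For the eigenvector assertion, I would partition $\bm v_1 = (\alpha, \bm u^\UH)^\UH$ and solve $\bm D \bm v_1 = \lambda_1(\bm D) \bm v_1$ block-wise. The lower block gives $\bm u = \alpha (\lambda \bm I - \bm P)^{-1} \bm q$ (with $\alpha \neq 0$, since $\lambda > \lambda_1(\bm P)$ rules out $\bm u$ being a $\bm P$-eigenvector), and unit normalisation yields
\begin{equation*}
|\bm e_1^\UH \bm v_1|^2 = |\alpha|^2 = \frac{1}{1 + \bm q^\UH (\lambda \bm I - \bm P)^{-2} \bm q}.
\end{equation*}
Implicit differentiation of the secular relation $1/\vartheta = \bm q^\UH(L(\vartheta)\bm I - \bm P)^{-1}\bm q$ at $\vartheta_\star$ (when $L$ is differentiable there) gives $\bm q^\UH(\lambda \bm I - \bm P)^{-2}\bm q = (1/\vartheta_\star^2)/L'(\vartheta_\star)$, which when substituted above reproduces exactly $|\bm e_1^\UH \bm v_1|^2 = L'(\vartheta_\star)/(L'(\vartheta_\star) + (1/\vartheta_\star)^2)$.

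The main obstacle I anticipate is the non-differentiable case, which corresponds to $\lambda_1(\bm D)$ having multiplicity greater than one so that the "top eigenvector" $\bm v_1$ is not uniquely defined. Here $L$ is convex and hence still admits one-sided derivatives $\partial_\pm L(\vartheta_\star)$ with $\partial_- L \leq \partial_+ L$; I would handle this by a small symmetric perturbation of $\bm D$ that makes the top eigenvalue simple, apply the differentiable formula, and pass to the limit along sequences approaching $\vartheta_\star$ from the left and from the right. The two one-sided limits deliver the endpoints of the stated interval, with the range of values in between corresponding to different choices of unit vector in the (now multi-dimensional) top eigenspace of $\bm D$.
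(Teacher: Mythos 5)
The paper does not prove this proposition; it is quoted verbatim from Lu and Li \cite{Lu17}, so there is no in-paper proof to compare against. Your secular-equation derivation is the standard route and is essentially correct: the Schur-complement determinant identity gives the secular equation for eigenvalues of $\bm D$ above $\lambda_1(\bm P)$, the same rank-one identity applied to $\bm P + \vartheta \bm q\bm q^\UH$ gives the implicit relation $1/\vartheta = \bm q^\UH(L(\vartheta)\bm I - \bm P)^{-1}\bm q$, matching the two at $\vartheta_\star$ yields $\lambda_1(\bm D)=L(\vartheta_\star)$, the block-solve gives $|\bm e_1^\UH\bm v_1|^2 = \bigl(1+\bm q^\UH(\lambda\bm I - \bm P)^{-2}\bm q\bigr)^{-1}$, and implicit differentiation converts the resolvent moment into $(1/\vartheta_\star^2)/L'(\vartheta_\star)$ to produce the stated formula. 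Existence and uniqueness of $\vartheta_\star$ via monotonicity (with $L$ non-decreasing and $1/\vartheta + a$ strictly decreasing, plus the limits at $0^+$ and $\infty$) is also the right argument.

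Two points deserve a caveat. First, you implicitly assume $\bm q\neq 0$ and $\bm q$ not orthogonal to the top eigenspace of $\bm P$ so that $L(\vartheta_\star)>\lambda_1(\bm P)$ and $(\lambda\bm I-\bm P)$ is invertible at $\lambda=\lambda_1(\bm D)$; you flag this as ``generic,'' which matches the level of generality the proposition is actually used at, but it is a genuine restriction of your proof (as stated, the proposition claims to hold for ``an arbitrary deterministic symmetric matrix''). Second, the perturbation argument for the non-differentiable case is described only at the level of a plan: perturbing $\bm D$ moves $\vartheta_\star$ rather than sweeping $\vartheta$ through a fixed $\vartheta_\star$, so one has to argue a little more carefully that the left and right one-sided derivatives of the convex function $L$ at $\vartheta_\star$ are attained as limits of $L'$ at the perturbed fixed points from the two sides; and the claim that all values in between are attained by choices of $\bm v_1$ in the degenerate eigenspace, while plausible, is more than the proposition requires (it only asserts membership in the interval). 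Neither issue undermines the core argument, but both would need to be tightened to have a fully rigorous standalone proof rather than a proof of the generic case.
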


A straightforward corollary of the above proposition to our problem is given below. Define the function
\begin{align*}
L_m(\vartheta) & \Mydef \lambda_1 \left( \bm A_{-1}^\UH( \bm T +  \vartheta \bm T \bm A_1 (\bm T \bm A_1)^\UH) \bm A_{-1}  \right).
\end{align*}

\begin{corollary} 
	Let $\vartheta_m>0$ be the unique solution of $L_m(\vartheta) = 1/\vartheta + \bm A_1^\UH \bm T \bm A_1$. Then, $\lambda_1(\bm A^\UH \bm T \bm A) = L_m(\vartheta_m)$ and
	\begin{align*}
	|\bm e_1^\UH \hat{\bm x}|^2 & \in \left[\frac{\partial_{-}L_m(\vartheta_m)}{\partial_{-}L_m(\vartheta_m) + (1 / \vartheta_m)^2}, \frac{\partial_{+}L_m(\vartheta_m)}{\partial_{+}L(\vartheta_m) + (1 / \vartheta_m)^2}\right].
	\end{align*}
	In particular, if  $L_m(\vartheta)$ is differentiable at $\vartheta_m$, then
	\begin{align*}
	|\bm e_1^\UH \hat{\bm x}|^2 = \frac{L_m'(\vartheta_m)}{L_m'(\vartheta_m) + (1 / \vartheta_m)^2}.
	\end{align*}
	\label{lu_corollary}
\end{corollary}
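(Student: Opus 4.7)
My plan is a direct application of Proposition \ref{lu_reduction} to the matrix $\bm D = \bm A^\UH \bm T \bm A$ using the block decomposition displayed in \eqref{partition_eq}. Since $\bm T$ is real and diagonal, $\bm D$ is Hermitian, so the hypotheses of the proposition (stated for symmetric matrices, but applying verbatim to the complex Hermitian setting) are satisfied once $\bm A$ and $\bm T$ are instantiated.

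With the block identifications $a = \bm A_1^\UH \bm T \bm A_1$, $\bm q = \bm A_{-1}^\UH \bm T \bm A_1$, and $\bm P = \bm A_{-1}^\UH \bm T \bm A_{-1}$, the rank-one perturbation appearing in Proposition \ref{lu_reduction} rearranges as
\begin{align*}
\bm P + \vartheta \bm q \bm q^\UH \; &= \; \bm A_{-1}^\UH \bm T \bm A_{-1} + \vartheta\, \bm A_{-1}^\UH \bm T \bm A_1 \bm A_1^\UH \bm T^\UH \bm A_{-1} \\
&= \; \bm A_{-1}^\UH \bigl(\bm T + \vartheta\, \bm T \bm A_1 (\bm T \bm A_1)^\UH \bigr) \bm A_{-1},
\end{align*}
using $\bm T^\UH = \bm T$. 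Thus $L(\vartheta) \equiv \lambda_1(\bm P + \vartheta \bm q \bm q^\UH)$ from Proposition \ref{lu_reduction} coincides exactly with $L_m(\vartheta)$. The fixed-point equation $L(\vartheta) = 1/\vartheta + a$ therefore becomes $L_m(\vartheta) = 1/\vartheta + \bm A_1^\UH \bm T \bm A_1$, and its unique positive solution is precisely $\vartheta_m$. The identity $\lambda_1(\bm A^\UH \bm T \bm A) = L_m(\vartheta_m)$ then follows immediately.

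For the overlap, the preceding lemma lets me assume $\bm x_\star = \sqrt{n}\, \bm e_1$, so that $|\bm x_\star^\UH \hat{\bm x}|^2/n = |\bm e_1^\UH \hat{\bm x}|^2$ and $\hat{\bm x}$ plays the role of $\bm v_1$ in Proposition \ref{lu_reduction}. The two-sided bracket in terms of $\partial_{\pm} L_m(\vartheta_m)$ and the exact formula in the differentiable case are then verbatim transcriptions of the corresponding assertions in Proposition \ref{lu_reduction}.

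There is essentially no technical obstacle here, as the corollary is a mechanical specialization. The only items to check are the algebraic identification $L = L_m$ (handled above) and inheritance of the existence and uniqueness of $\vartheta_m$, which is automatic because these properties are guaranteed by Proposition \ref{lu_reduction} for any deterministic Hermitian block matrix, hence hold almost surely once $\bm A$ and $\bm T$ are realized.
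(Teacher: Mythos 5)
Your proof is correct and matches the route the paper intends: the paper presents this as a ``straightforward corollary'' of Proposition~\ref{lu_reduction} with no explicit proof, and your write-up supplies exactly the intended mechanical steps — identifying the blocks $a$, $\bm q$, $\bm P$ from~\eqref{partition_eq}, verifying that $\bm P + \vartheta \bm q \bm q^\UH = \bm A_{-1}^\UH(\bm T + \vartheta\,\bm T\bm A_1(\bm T\bm A_1)^\UH)\bm A_{-1}$ so that $L = L_m$, and invoking the reduction to $\bm x_\star = \sqrt{n}\,\bm e_1$. Nothing further is needed.
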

Hence, we shift our focus to characterizing the function $L_m(\vartheta)$. Recall the decomposition of the matrix $\bm M$ given in (\ref{partition_eq}). Recall that since $\bm x_\star = \sqrt{n} \bm e_1$, the diagonal matrix $\bm T$ is a deterministic function of $\bm A_1$. If the sensing matrix $\bm A$ consisted of independent Gaussian entries, then $\bm T, \bm A_1$ would have been independent of $\bm A_{-1}$. This is no longer true when $\bm A$ is a partial unitary matrix. In order to take care of this, the following lemma leverages a conditioning trick to get rid of the dependence. The following lemma also establishes the link between the function $L_m(\vartheta)$ and the study of the spectrum of a matrix of the form $\bm X = \bm E \bm U \bm F \bm U ^\UH$, where $\bm U$ is a uniformly random unitary matrix, $\bm E$ is a random matrix independent of $\bm U$ and $\bm F$ is deterministic.
\begin{lemma}
	We have
	\begin{align}
	L_m(\vartheta) & = \lambda_1 \left( \bm B^\UH (\bm T + \vartheta \bm T \bm A_1 (\bm T \bm A_1)^\UH) \bm B \bm H_{m-1} \bm R \bm H_{m-1}^\UH\right),
	\label{L_m_eq_numbered}
	\end{align}
	where
	\begin{align*}
	\bm R & = \begin{bmatrix} \bm I_{n-1} & \bm 0_{n-1,m-n} \\ \bm 0_{m-n,n-1} & \bm 0_{m-n,m-n}
	\end{bmatrix},
	\end{align*}
	$\bm B \in \mathbb{C}^{m \times m-1}$ is an arbitrary  basis matrix for $\bm A_1^\perp$, which denotes the subspace orthogonal to $\bm A_1$,  and $\bm H_{m-1} \sim \Unif(\mathbb{U}(m-1))$ is independent of $\bm A_1$.
\end{lemma}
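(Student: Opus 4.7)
The plan is to rewrite $\bm A_{-1}$ in the orthonormal basis $\bm B$ for $\bm A_1^\perp$, identify the conditional law of the resulting coefficient matrix via the invariance of the Haar measure, and then translate the identity for $L_m(\vartheta)$ via a cyclic-eigenvalue manipulation. The conditional-distribution step is the only nontrivial ingredient; the rest is algebra.

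Since $\bm A$ has orthonormal columns, every column of $\bm A_{-1}$ lies in $\bm A_1^\perp$, so we may write $\bm A_{-1} = \bm B \bm V$ for some matrix $\bm V \in \mathbb C^{(m-1) \times (n-1)}$ with orthonormal columns. The matrix $\bm P \Mydef [\bm A_1, \bm B]$ is unitary by construction, so I would invoke the standard Haar decomposition
\[
\bm H_m = \bm P \begin{bmatrix} 1 & \bm 0^\UH \\ \bm 0 & \bm H_{m-1} \end{bmatrix},
\]
where $\bm H_{m-1} \sim \Unif(\mathbb{U}(m-1))$ is independent of $\bm A_1$. This follows from the fact that conditional on its first column $\bm A_1$, a Haar-distributed $\bm H_m$ is uniform over the coset $\{\bm U \in \mathbb U(m): \bm U \bm e_1 = \bm A_1\}$, which the map $\bm U \mapsto \bm P^\UH \bm U$ identifies with the Haar measure on the stabilizer $\{1\} \oplus \mathbb U(m-1)$; since this conditional law is independent of $\bm A_1$, unconditional independence follows. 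Reading off the first $n$ columns of the decomposition and discarding the first yields $\bm A_{-1} = \bm B \bm H_{m-1} \bm S_{m-1,n-1}$, so $\bm V = \bm H_{m-1} \bm S_{m-1,n-1}$ and therefore
\[
\bm V \bm V^\UH = \bm H_{m-1} \bm S_{m-1,n-1} \bm S_{m-1,n-1}^\UH \bm H_{m-1}^\UH = \bm H_{m-1} \bm R \bm H_{m-1}^\UH.
\]

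To conclude, set $\bm K \Mydef \bm T + \vartheta \bm T \bm A_1 (\bm T \bm A_1)^\UH$, so that
\[
\bm A_{-1}^\UH \bm K \bm A_{-1} \;=\; \bm V^\UH (\bm B^\UH \bm K \bm B) \bm V.
\]
The products $\bm X \bm Y$ and $\bm Y \bm X$ share the same nonzero eigenvalues, hence the nonzero eigenvalues of $\bm V^\UH (\bm B^\UH \bm K \bm B) \bm V$ coincide with those of $(\bm B^\UH \bm K \bm B) \bm V \bm V^\UH = \bm B^\UH \bm K \bm B \bm H_{m-1} \bm R \bm H_{m-1}^\UH$. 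Since $\vartheta \geq 0$ in the applications of interest (recall that $\vartheta_m$ arises from Proposition \ref{lu_reduction} as a positive solution), $\bm K$ is a sum of two positive semidefinite matrices, so $\bm A_{-1}^\UH \bm K \bm A_{-1}$ is positive semidefinite and its largest eigenvalue is nonnegative; the right-hand $(m-1) \times (m-1)$ matrix carries at least $m - n$ zero eigenvalues from the rank deficiency of $\bm R$, so its largest eigenvalue is nonnegative as well. The two largest eigenvalues therefore agree, which is exactly the claim. I do not foresee any serious obstacle beyond the careful bookkeeping in the Haar decomposition.
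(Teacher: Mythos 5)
Your proof is correct and follows essentially the same route as the paper's: condition on $\bm A_1$, use Haar invariance to realize $\bm A_{-1} = \bm B \bm H_{m-1} \bm S_{m-1,n-1}$ with $\bm H_{m-1}$ independent of $\bm A_1$, and then apply the fact that $\bm \Lambda \bm \Gamma$ and $\bm \Gamma \bm \Lambda$ share nonzero eigenvalues. The one place you are more careful than the paper is in noting that the cyclic swap only preserves \emph{nonzero} eigenvalues, so one also needs the positive-semidefiniteness of both sides to conclude that the \emph{largest} eigenvalues agree even in degenerate cases; the paper leaves that step implicit.
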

\begin{proof} We condition on $\bm A_1$. Conditioned on $\bm A_1$, we can realize $\bm A_{-1}$ as: 
	\begin{align*}
	\bm A_{-1} & = \bm B\bm H_{m-1} \bm S_{m-1,n-1}.
	\end{align*}
	In the above equation, $\bm B \in \mathbb C^{m \times m-1}$ is matrix whose columns form an orthonormal basis of the orthogonal complement of $\bm A_1$ and $\bm H_{m-1}$ is a Haar Unitary of size $m-1$ independent of $\bm A_1$. Hence, we obtain
	\begin{align*}
	&L_m(\vartheta)  = \lambda_1 \left( \bm A_{-1}^\UH (\bm T + \vartheta \bm T \bm A_1 (\bm T \bm A_1)^\UH) \bm A_{-1} \right) \\
	%&\explain{(a)}{=} \lambda_1 \left(   \bm B^\UH (\bm T + \vartheta \bm T \bm A_1 (\bm T \bm A_1)^\UH) \bm B \bm H_{m-1} \bm S_{m-1,n-1} \bm S_{m-1,n-1}^\UH \bm H_{m-1}^\UH \right)  \\ 
	&\explain{a}{=} \lambda_1 \left( \bm B^\UH (\bm T + \vartheta \bm T \bm A_1 (\bm T \bm A_1)^\UH) \bm B \cdot \bm H_{m-1} \bm R \bm H_{m-1}^\UH\right). 
	\end{align*}
	In the step marked (a), We used the fact that for any two matrices $\bm \Lambda, \bm \Gamma$ (of appropriate dimensions), $\bm \Lambda \bm \Gamma$ and $\bm \Gamma \bm \Lambda$ have the same non-zero eigenvalues. In particular, we used this fact with:
	\begin{align*}
	    \bm \Lambda & = \bm S_{m-1,n-1}^\UH \bm H_{m-1}^\UH \\
	    \bm \Gamma & = \bm B^\UH (\bm T + \vartheta \bm T \bm A_1 (\bm T \bm A_1)^\UH) \bm B \bm H_{m-1} \bm S_{m-1,n-1}.
	\end{align*}
\end{proof}
Define the matrix,
\begin{align}
\bm E(\vartheta) \Mydef \bm B^\UH (\bm T + \vartheta \bm T \bm A_1 (\bm T \bm A_1)^\UH) \bm B.
\end{align}
The following lemma characterizes the asymptotic limit of the function $L_m(\vartheta)$.
Define $\Lambda_+(\tau)$ as
\begin{align*}
\Lambda_+(\tau) & = \begin{cases} \tau - \frac{(1-1/\delta)}{\E\left[\frac{1}{\tau- T}\right]} \ \  \ \ \ \ \ \ \ \ \ \ \ \ \ \   {\rm if} \ \tau > \tau_{r},  \\
\min_{\tau \geq 1} \left(  \tau - \frac{(1-1/\delta)}{\E\left[\frac{1}{\tau- T}\right]} \right) \ \ {\rm if} \ \tau \leq \tau_{r},
\end{cases}
\end{align*}
where $T=\T(|Z|/\sqrt{\delta})$ and $Z\sim\mathcal{CN}(0,1)$, and
\begin{align*}
\tau_{r} & \triangleq \arg \min_{\tau \geq 1} \left(  \tau - \frac{(1-1/\delta)}{\E\left[\frac{1}{\tau- T}\right]} \right).
\end{align*}
%\begin{itemize}
%	\item {When $\vartheta> \vartheta_c$: }  $\theta(\vartheta)$ was defined as the unique value of $\lambda> \max(1, \E[ |Z|^2 T]+1/\vartheta)$ that satisfies the equation $\lambda - \E[|Z|^2 T] - 1/\vartheta  = \left(\E \left[ \frac{|Z|^2}{\lambda - T} \right]\right)^{-1}.$
%	\item {When $\vartheta \leq \vartheta_c$: } $\theta(\vartheta) = 1$.
%\end{itemize}
%The critical value 

\begin{lemma}\label{L_m_func_asymptotic}
Let $\vartheta_c  \Mydef \left(1 - \left( \E \left[ \frac{|Z|^2}{1-T} \right] \right)^{-1} - \E[|Z|^2 T]  \right)^{-1}$. Define the function $\theta(\vartheta)$ as:
	\begin{itemize}
		\item {When $\vartheta> \vartheta_c$: } Let $\theta(\vartheta)$ be the unique value of $\lambda$ that satisfies the equation: $$\lambda - \E[|Z|^2 T] - 1/\vartheta  = \left(\E \left[ \frac{|Z|^2}{\lambda - T} \right]\right)^{-1},$$ in the interval:
		\begin{align*}
		    \lambda \in  \left( \max(1, \E[ |Z|^2 T]+1/\vartheta), \infty \right).
		\end{align*}
		\item {When $\vartheta \leq \vartheta_c$: } $\theta(\vartheta) \Mydef 1$.
	\end{itemize}
Then, we have $L_m(\vartheta) \explain{a.s.}{\rightarrow} \Lambda_{+} (\theta(\vartheta))$, where $L_m(\vartheta) $ is defined in \eqref{L_m_eq_numbered}.
\end{lemma}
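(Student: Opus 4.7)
My plan follows the standard BBP + free-probability pipeline in two stages. First, I would treat $\bm E(\vartheta) = \bm B^\UH \bm T \bm B + \vartheta\, \bm v\bm v^\UH$ (with $\bm v = \bm B^\UH \bm T \bm A_1$) as a rank-one perturbation of $\bm B^\UH \bm T \bm B$ and identify the spike location $\theta(\vartheta)$. Second, I would push this spike through the Haar-random compression $\bm H_{m-1} \bm R \bm H_{m-1}^\UH$ using subordination for free multiplicative convolution, producing the outlier at $\Lambda_+(\theta(\vartheta))$. Throughout I would condition on $\bm A_1$, which freezes $\bm T$, $\bm B$, and $\bm v$, and use independence of $\bm H_{m-1}$ for the second stage.

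\emph{Spike analysis.} Since $\T \in [0,1]$, the bulk spectrum of $\bm B^\UH \bm T \bm B$ converges almost surely to $\mu_T$, the law of $T = \T(|Z|/\sqrt\delta)$, supported in $[0,1]$, so any spike of $\bm E(\vartheta)$ must live in $(1,\infty)$. For $\lambda>1$ the secular equation $\vartheta\, \bm v^\UH (\lambda \bm I - \bm B^\UH \bm T \bm B)^{-1} \bm v = 1$ determines the spike. To analyse its left-hand side I would lift the resolvent to the full space using the projection $\bm P = \bm I - \bm A_1\bm A_1^\UH$, expressing $\lambda \bm I - \bm P \bm T \bm P$ as a rank-two perturbation of $\lambda \bm I - \bm T$ supported on $\mathrm{span}(\bm A_1, \bm T\bm A_1)$, and apply the Sherman--Morrison--Woodbury identity. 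After the algebra collapses (using $\bm T(\lambda \bm I - \bm T)^{-1} = -\bm I + \lambda(\lambda \bm I - \bm T)^{-1}$ repeatedly), one obtains the clean identity
\begin{equation*}
\bm v^\UH (\lambda \bm I - \bm B^\UH \bm T \bm B)^{-1} \bm v \;=\; \lambda \;-\; \bm A_1^\UH \bm T \bm A_1 \;-\; \frac{1}{\bm A_1^\UH (\lambda \bm I - \bm T)^{-1} \bm A_1}.
\end{equation*}
Concentration of bilinear forms in the sphere-uniform vector $\bm A_1$, combined with $T_i = \T(\sqrt{n} |A_{1,i}|) \approx \T(|Z_i|/\sqrt\delta)$, yields the almost-sure limits $\bm A_1^\UH \bm T \bm A_1 \to \E[|Z|^2 T]$ and $\bm A_1^\UH (\lambda \bm I - \bm T)^{-1} \bm A_1 \to \E[|Z|^2/(\lambda - T)]$. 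The secular equation then matches the defining equation of $\theta(\vartheta)$ exactly; monotonicity in $\lambda$ supplies uniqueness, and the boundary case $\lambda \downarrow 1$ pinpoints $\vartheta_c$.

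\emph{From spike to outlier.} Because $\bm R$ is an orthogonal projection of rank $n-1$, the nonzero eigenvalues of $\bm E(\vartheta) \bm H_{m-1} \bm R \bm H_{m-1}^\UH$ coincide with those of the $(n-1)\times(n-1)$ leading principal submatrix of $\bm H_{m-1}^\UH \bm E(\vartheta) \bm H_{m-1}$, i.e.\ a Haar compression of $\bm E(\vartheta)$ to a subspace of asymptotic density $1/\delta$. Conditionally on $\bm A_1$ (so $\bm E(\vartheta)$ is deterministic), the limiting bulk of this compression (with zeros included) is the free multiplicative convolution $\mu_T \boxtimes \mu_R$, where $\mu_R$ places mass $1/\delta$ at $1$ and mass $1-1/\delta$ at $0$. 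A direct $S$-transform computation identifies $\Lambda$ as the subordination function of this convolution and $\Lambda(\tau_r) = \min_{\tau \ge 1}\Lambda(\tau)$ as the right edge of its nonzero part. Invoking the outlier theorem of \cite{belinschi2017outliers}, a spike of $\bm E(\vartheta)$ at $\theta(\vartheta) > \tau_r$ produces an outlier of the compression at $\Lambda(\theta(\vartheta))$, while if $\theta(\vartheta) \le \tau_r$ the top eigenvalue sticks at the bulk edge $\Lambda(\tau_r)$. Combining the two regimes yields $L_m(\vartheta) \to \Lambda_+(\theta(\vartheta))$.

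The hardest step will be verifying the hypotheses of the \cite{belinschi2017outliers} outlier theorem in our setting: one needs convergence of the full spectral measure of $\bm E(\vartheta)$ (including the spike) in a sufficiently strong sense, uniform operator-norm control, and the absence of spurious outliers beyond the one coming from $\vartheta \bm v \bm v^\UH$. A subsidiary bookkeeping task is the explicit identification of $\Lambda$ with the subordination function and of $\Lambda(\tau_r)$ with the edge of $\mu_T \boxtimes \mu_R$ via $S$-transform calculus for $\mu_R$. Finally, the edge cases $\vartheta = \vartheta_c$ (spike at the edge of $\bm E(\vartheta)$'s bulk) and $\theta(\vartheta) \in (1,\tau_r]$ (spike detached from $\bm E(\vartheta)$'s bulk but reabsorbed by the free convolution) require separate verification that the top eigenvalue indeed converges to $\Lambda(\tau_r)$ rather than to some intermediate value.
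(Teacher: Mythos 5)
Your two-stage plan (BBP spike of $\bm E(\vartheta)$, then the \cite{belinschi2017outliers} outlier theorem for the Haar compression) is the same overall strategy the paper uses, and your resolvent identity is correct; in fact it has a one-line proof: the $(1,1)$-entry of $(\lambda\bm I-\bm O^\UH\bm T\bm O)^{-1}=\bm O^\UH(\lambda\bm I-\bm T)^{-1}\bm O$ equals $\bm A_1^\UH(\lambda\bm I-\bm T)^{-1}\bm A_1$, and by the Schur-complement inversion formula for the block form $\bm O^\UH\bm T\bm O = \bigl(\begin{smallmatrix} a & \bm v^\UH \\ \bm v & \bm B^\UH\bm T\bm B\end{smallmatrix}\bigr)$ this entry is $\bigl(\lambda - a - \bm v^\UH(\lambda\bm I-\bm B^\UH\bm T\bm B)^{-1}\bm v\bigr)^{-1}$, which gives your identity directly, no Woodbury needed.

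Where you diverge from the paper is purely in how the spike of $\bm E(\vartheta)$ is located and how the hypotheses of the outlier theorem are verified. The paper instead chooses the basis $\bm B$ so that $\bm B_1$ is aligned with the component of $\bm T\bm A_1$ orthogonal to $\bm A_1$, which makes $\bm E(\vartheta)=\bm B^\UH\bm T\bm B + \vartheta(b_m-a_m^2)\bm e_1\bm e_1^\UH$ and allows a direct manipulation of the characteristic polynomial, arriving at the same secular equation $Q_m(\lambda)=1/(\lambda-a_m-1/\vartheta)$. The real advantage of the paper's route is that it exhibits $\bm E(\vartheta)$ as a principal submatrix of a rank-one perturbation of $\bm O^\UH\bm T\bm O$, so two applications of Cauchy interlacing give $T_{(i+1)}\le\lambda_i(\bm E(\vartheta))\le T_{(i-1)}$ for the non-spike eigenvalues; since $\T$ has range $[0,1]$ this immediately verifies hypothesis (iii) of Theorem \ref{outlier_theorem_free_probability} (no spurious outliers, uniform containment of the bulk in $\mathrm{Supp}(\mathcal{L}_T)$). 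You flag the hypothesis verification as the hardest step and leave it open; note that the interlacing argument is available to you too, since your $\bm E(\vartheta)$ is still a compression of $\bm T+\vartheta\bm T\bm A_1(\bm T\bm A_1)^\UH$ to $\bm A_1^\perp$. Beyond that, your use of the $S$-transform to identify $\Lambda$ is a cosmetic difference from the paper's subordination-function fixed point (Lemma \ref{formula_for_cauchy_and_subordination}), and your treatment of the edge cases $\theta(\vartheta)\le\tau_r$ is handled in the paper by Lemma \ref{tau_inv_func} showing $\tau_T^{-1}(\{\theta\})\cap(\mathbb R\setminus\mathrm{Supp}(\gamma\boxtimes\mathcal L_T))=\emptyset$ there, which forces the top eigenvalue to the bulk edge $\Lambda(\tau_r)$; this is the same conclusion you anticipate.
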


 The proof of Lemma \ref{L_m_func_asymptotic} can be found in Section \ref{Sec:technical_lemmas}.

From Corollary \ref{lu_corollary}, we know that $\lambda_1(\bm M)$ solves the fixed point equation (in $\vartheta$): $L_m(\vartheta) = 1/\vartheta + \bm A_1^\UH \bm T \bm A_1.$ Simple concentration arguments (see Lemma \ref{concentration_result}, Section \ref{Sec:spectrum_E}) show that asymptotically: 
\begin{align*}
    \bm A_1^\UH \bm T \bm A_1 \approx \E |Z|^2 T.
\end{align*}
Combining this with Lemma \ref{L_m_func_asymptotic} suggests that asymptotically $\lambda_1(\bm M)$ behaves like the solution to the following fixed point equation (in $\vartheta$):
\begin{align*}
    \Lambda_{+} (\theta(\vartheta)) = 1/\vartheta + \E |Z|^2 T.
\end{align*} 
The following lemma analyzes the behavior of this asymptotic fixed point equation. The proof of this lemma can be found in Section \ref{Sec:technical_lemmas}.
%--------------------
\begin{lemma}\label{Lem:Lambda_plus}
	The following hold for the equation:  $$\Lambda_{+}(\theta(\vartheta)) = 1/\vartheta + \E[|Z|^2 T], \; \vartheta>0.$$
	\begin{enumerate}
		\item This equation has a unique solution. 
		\item Let $\vartheta_\star$ denote the solution of the above equation. Then: 
	\end{enumerate}
			\paragraph{Case 1 } If $ \psi_1(\tau_{r}) \leq \frac{\delta}{\delta - 1},$ we have $$\Lambda_+(\theta(\vartheta_\star)) = \Lambda(\tau_r).$$ Furthermore if $\psi_1(\tau_{r}) < \delta/(\delta - 1)$, then,
			\begin{align*}
			\frac{\diff \Lambda_+(\theta(\vartheta))}{\diff \vartheta} \bigg\rvert_{\vartheta = \vartheta_\star} = 0,
			\end{align*}
			\paragraph{Case 2 } If $ \psi_1(\tau_{r}) > \frac{\delta}{\delta - 1},$ we have $$\Lambda_+(\theta(\vartheta_\star)) = \Lambda(\theta_\star),$$ and, 
			\begin{multline*}
			\frac{\diff \Lambda_+(\theta(\vartheta))}{\diff \vartheta} \bigg\rvert_{\vartheta = \vartheta_\star} = \\  \frac{1}{\vartheta_\star^2} \cdot \frac{\delta}{\delta-1} \cdot \left( \frac{\delta}{\delta -1 } - \psi_2(\theta_\star) \right) \cdot \frac{1}{\psi_3^2(\theta_\star) -\frac{\delta^2}{(\delta-1)^2}}.
			\end{multline*} where $\theta_\star > 1$ is the unique  $\theta \geq \tau_{r}$ that satisfies
			$\psi_1(\theta) = \frac{\delta}{\delta - 1}.$
\end{lemma}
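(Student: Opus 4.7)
The plan is to reparametrize in terms of $\theta = \theta(\vartheta)$. For $\vartheta > \vartheta_c$, the definition of $\theta(\vartheta)$ rearranges as
\begin{align*}
\frac{1}{\vartheta} + \E[|Z|^2 T] \;=\; \theta \;-\; \frac{1}{\E[|Z|^2/(\theta - T)]},
\end{align*}
so the equation $\Lambda_+(\theta(\vartheta)) = 1/\vartheta + \E[|Z|^2 T]$ reduces to the single equation
\begin{align*}
\Lambda_+(\theta) \;=\; \theta \;-\; \frac{1}{\E[|Z|^2/(\theta - T)]}, \qquad \theta \in (1, \infty),
\end{align*}
to be solved for $\theta = \theta(\vartheta_\star)$. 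The flat regime $\vartheta \le \vartheta_c$ (where $\theta(\vartheta) \equiv 1$) is handled as a boundary case, contributing the single candidate $\vartheta = 1/(\Lambda(\tau_r) - \E[|Z|^2 T])$.

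I would then split based on whether $\theta > \tau_r$ or $\theta \in (1, \tau_r]$. On $(\tau_r, \infty)$, $\Lambda_+(\theta) = \Lambda(\theta) = \theta - (1-1/\delta)/\E[1/(\theta-T)]$, so direct substitution reduces the equation to $(1-1/\delta)/\E[1/(\theta-T)] = 1/\E[|Z|^2/(\theta-T)]$, i.e., $\psi_1(\theta) = \delta/(\delta-1)$. On $(1, \tau_r]$, $\Lambda_+(\theta) = \Lambda(\tau_r)$ is constant, and the equation is the transcendental relation $\theta - 1/\E[|Z|^2/(\theta-T)] = \Lambda(\tau_r)$. Since $\psi_1(\tau) \to \E[|Z|^2] = 1 < \delta/(\delta-1)$ as $\tau \to \infty$, the intermediate value theorem supplies $\theta_\star \in (\tau_r, \infty)$ with $\psi_1(\theta_\star) = \delta/(\delta-1)$ exactly when $\psi_1(\tau_r) > \delta/(\delta-1)$ (Case 2); in the complementary regime (Case 1), no such root exists and the unique candidate lives in the flat regime with $\Lambda_+(\theta(\vartheta_\star)) = \Lambda(\tau_r)$. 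Uniqueness follows from strict monotonicity of $\psi_1$ on $(\tau_r, \infty)$, which I would derive from the covariance identity
\begin{align*}
\psi_1'(\tau) \;=\; -\mathrm{Cov}_{\mathbb{Q}_\tau}\bigl(|Z|^2,\, 1/(\tau-T)\bigr), \qquad d\mathbb{Q}_\tau \propto (\tau - T)^{-1}\, d\mathbb{P},
\end{align*}
together with an analogous monotonicity of $\theta \mapsto \theta - 1/\E[|Z|^2/(\theta-T)]$ on $(1, \tau_r]$.

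For the derivative in Case 2, the chain rule gives $d\Lambda(\theta(\vartheta))/d\vartheta = \Lambda'(\theta)\,\theta'(\vartheta)$, where $\Lambda'(\theta) = 1 - (1 - 1/\delta)\psi_2(\theta) = \tfrac{\delta - 1}{\delta}\bigl(\tfrac{\delta}{\delta - 1} - \psi_2(\theta)\bigr)$. Implicit differentiation of the defining relation for $\theta(\vartheta)$ yields
\begin{align*}
\theta'(\vartheta_\star) \;=\; -\frac{1}{\vartheta_\star^2} \cdot \frac{1}{1 - \E[|Z|^2/(\theta_\star-T)^2]/\E[|Z|^2/(\theta_\star-T)]^2}.
\end{align*}
Substituting $\psi_1(\theta_\star) = \delta/(\delta-1)$ converts the denominator into $\tfrac{(\delta-1)^2}{\delta^2}\bigl(\tfrac{\delta^2}{(\delta-1)^2} - \psi_3^2(\theta_\star)\bigr)$, and multiplying out reproduces the formula stated in the lemma. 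In Case 1 with strict inequality $\psi_1(\tau_r) < \delta/(\delta-1)$, one has $\theta(\vartheta_\star) < \tau_r$ strictly (since at $\theta = \tau_r$ the two reductions would force $\psi_1(\tau_r) = \delta/(\delta-1)$), so $\Lambda_+$ is locally constant around $\theta(\vartheta_\star)$ and the derivative vanishes.

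The main obstacle is the monotonicity/uniqueness argument: controlling the sign of $\mathrm{Cov}_{\mathbb{Q}_\tau}(|Z|^2,\, 1/(\tau-T))$ on $(\tau_r, \infty)$ is immediate when $\T$ is monotone (since then $|Z|^2$ and $T = \T(|Z|/\sqrt{\delta})$ are positively correlated), but Assumption 1 does not impose monotonicity of $\T$. A likely workaround is to use Stieltjes-transform identities from the free-probability setup underlying the lemma, or to extract the required sign information from the definition of $\tau_r$ (where $\Lambda'(\tau_r) = 0$ forces $\psi_2(\tau_r) = \delta/(\delta-1)$). Careful bookkeeping is also needed at the boundaries $\theta = \tau_r$ and $\vartheta = \vartheta_c$ to match the piecewise definitions of $\Lambda_+$ and $\theta(\vartheta)$.
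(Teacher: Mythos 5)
Your reparametrization to $\theta$, case split at $\tau_r$, reduction on $(\tau_r,\infty)$ to $\psi_1(\theta)=\delta/(\delta-1)$, and the chain-rule derivative calculation all match the paper, and the algebra in the derivative step is correct. However, there is a genuine gap exactly where you flag one: your uniqueness argument hinges on strict monotonicity of $\psi_1$ on $(\tau_r,\infty)$, derived from the sign of $\mathrm{Cov}_{\mathbb Q_\tau}(|Z|^2,1/(\tau-T))$, and as you yourself note this sign is not controlled for non-monotone $\T$; Assumption~1 does not force it. Without that monotonicity your IVT argument gives existence but not uniqueness of $\theta_\star$, and the dichotomy ``Case~2 iff $\psi_1(\tau_r)>\delta/(\delta-1)$'' is also unjustified, since $\psi_1$ could in principle rise above $\delta/(\delta-1)$ somewhere in $(\tau_r,\infty)$ even when $\psi_1(\tau_r)\le\delta/(\delta-1)$.

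The paper sidesteps this entirely by never analyzing $\psi_1$ directly. It works in the $\vartheta$ variable and observes that $\vartheta\mapsto\Lambda_+(\theta(\vartheta))$ is non-decreasing and continuous (because $\Lambda_+$ and $\theta(\cdot)$ are each non-decreasing and continuous, the latter being established in Lemma~\ref{Q_eq_population}), while $\vartheta\mapsto 1/\vartheta+\E[|Z|^2T]$ is strictly decreasing; comparing the two sides at $\vartheta\to0^+$ and $\vartheta\to\infty$ gives existence, and the opposite monotonicity gives uniqueness of $\vartheta_\star$, which in turn forces uniqueness of $\theta_\star$ in \eqref{theta_star_eq}. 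The Case~1/Case~2 split is then read off by comparing $1/\theta^{-1}(\tau_r)+\E[|Z|^2T]$ with $\lambda_r=\Lambda(\tau_r)$, i.e.\ by checking whether the strictly decreasing right-hand side crosses the flat part or the rising part of $\Lambda_+(\theta(\cdot))$. Equivalently, in your $\theta$-parametrization you should compare the non-decreasing $\Lambda_+(\theta)$ with the strictly decreasing $\theta-1/\E[|Z|^2/(\theta-T)]$ (strict decrease following from $\tE[G^2]>\tE[G]^2$ under the tilted measure, as in Lemma~\ref{Q_eq_population}) rather than trying to control $\psi_1$ alone. That fix makes your route work with essentially the same bookkeeping.
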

%--------------------

We are now in the position to prove our main result (restated below for convenience). Recall the definitions of the functions $\Lambda(\tau), \psi_1(\tau), \psi_2(\tau), \psi_3(\tau)$ from  Section  \ref{main_result}.\vspace{5pt}

\noindent\textbf{Theorem 1} \textit{Define $\tau_{r} \triangleq \arg\min_{\tau \in [1,\infty)} \Lambda(\tau)$. Also, let $\theta_\star$ denote the unique value of $\theta> \tau_{r}$ that satisfies $\psi_1(\theta) = \frac{\delta}{\delta - 1}$. Then, we have
	\begin{align*}
	\lambda_1(\bm M) \explain{a.s.}{\rightarrow} \begin{cases}   \Lambda(\tau_{r}), &\text{if }\psi_1(\tau_{r}) \leq \frac{\delta}{\delta - 1}, \\
	\Lambda(\theta_\star),  &\text{if } \psi_1(\tau_{r}) > \frac{\delta}{\delta - 1}.
	\end{cases}
	\end{align*}
Furthermore,
	\begin{align*}
	 |\bm e_1^\UH  \hat{\bm x}|^2 \explain{a.s.}{\rightarrow} \begin{cases} 0, &\text{if } \psi_1(\tau_{r}) < \frac{\delta}{\delta - 1}, \\
	\frac{\left(\frac{\delta}{\delta -1 } \right)^2 - \frac{\delta}{\delta -1 } \cdot \psi_2(\theta_\star)}{\psi_3(\theta_\star)^2 - \frac{\delta}{\delta -1 } \cdot \psi_2(\theta_\star)}, & \text{if }\psi_1(\tau_{r}) > \frac{\delta}{\delta - 1} .
	\end{cases}
	\end{align*}
}
\begin{proof}

		We start with the analysis of the largest eigenvalue. We recall the claim of Corollary \ref{lu_corollary}, which tells us that $\lambda_1(\bm M)$ is given by $L_m(\vartheta_m)$ where $\vartheta_m$ denotes the solution of $L_m(\vartheta) = 1/\vartheta + a_m$ and $a_m = \bm A_1^\UH \bm T \bm A_1$. 
		
		We also know that there exists a probability 1 event $\mathcal{E}$, on which,  $L_m(\vartheta)  \explain{a.s.}{\rightarrow} \Lambda_+(\theta(\vartheta))$ (Lemma \ref{L_m_func_asymptotic}) and $a_m \explain{a.s.}{\rightarrow} \E[|Z|^2 T]$ (see Lemma \ref{concentration_result} in Section \ref{Sec:spectrum_E}). 
		
		We claim that on $\mathcal{E}$, $\vartheta_m \rightarrow \vartheta_\star$, where $\vartheta_\star$ is the solution of the limiting fixed point equation $\Lambda_+(\theta(\vartheta)) = 1/\vartheta + \E[|Z|^2 T]$ (which was analyzed in Lemma \ref{Lem:Lambda_plus}). To see this let $\overline{\vartheta} = \lim\sup \vartheta_{m}$. Consider a subsequence $\vartheta_{m_k} \rightarrow \overline{\vartheta}$. Then applying Lemma 3 (in Appendix E) of \cite{Lu17}, we obtain, 
		\begin{align*}
		0 &= \lim_{k \rightarrow \infty} \left( L_{m_k}(\vartheta_{m_k}) - \frac{1}{\vartheta_{m_k}} - a_{m_k} \right) \\&= \Lambda_+(\theta(\overline{\vartheta})) - \frac{1}{\overline{\vartheta}} - \E|Z|^2 T.
		\end{align*}
		That is, $\overline{\vartheta}$ is also a solution to the limiting fixed point equation $\Lambda_+(\theta(\vartheta)) = 1/\vartheta + \E[|Z|^2 T]$. But since this equation has a unique solution (Lemma \ref{Lem:Lambda_plus}), we have $\lim\sup \vartheta_{m} =  \overline{\vartheta}   = \vartheta_\star$. Likewise, an analogous argument shows $\lim\inf \vartheta_{m}  = \vartheta_\star$.
		
		Now for any realization in the event $\mathcal{E}$, we have,
		\begin{align*}
		\lambda_1(\bm M)  = L_m(\vartheta_m) \explain{(a)}{\rightarrow} \Lambda_+(\theta(\vartheta_\star)).
		\end{align*}
		In the above display, in the step marked (a), we again appealed to Lemma 3 (Appendix E) of \cite{Lu17} and the fact that $\vartheta_m \rightarrow \vartheta_\star$. Finally, appealing to the alternative characterization of $\Lambda_+(\theta(\vartheta_\star))$ given in Lemma \ref{Lem:Lambda_plus} gives us the claim of the theorem.

We now discuss our result about the cosine similarity. We recall that from Corollary \ref{lu_corollary}, we have
		\begin{align*}
		|\bm e_1^\UH \hat{\bm x}|^2 & \in \left[\frac{\partial_{-}L_m(\vartheta_m)}{\partial_{-}L_m(\vartheta_m) + (1 / \vartheta_m)^2}, \frac{\partial_{+}L_m(\vartheta_n)}{\partial_{+}L(\vartheta_m) + (1 / \vartheta_m)^2}\right].
		\end{align*}
		Appealing to Lemma 4 in Appendix E of \cite{Lu17}, we have,  
		\begin{align*}
		\partial_{-}L_m(\vartheta_m) \rightarrow \partial_{-} \Lambda_{+}(\theta(\vartheta_\star)), \;  
		\partial_{+}L_m(\vartheta_m) \rightarrow \partial_{+} \Lambda_{+}(\theta(\vartheta_\star)).
		\end{align*}
		The derivative of $\Lambda_+(\theta(\vartheta))$ at $\vartheta = \vartheta_\star$ was calculated in Lemma \ref{Lem:Lambda_plus}. Plugging this in the above expression gives the statement of the theorem.
\end{proof}

The remainder of this section is dedicated to the proof of Lemmas \ref{L_m_func_asymptotic} and \ref{Lem:Lambda_plus}, and is organized as follows:
\begin{itemize}
\item Recall that (cf. \ref{L_m_eq_numbered})
\[
L_m(\vartheta)= \lambda_1\left(\bm{E}(\vartheta)\bm{H}_{m-1}\bm{R}\bm{H}_{m-1}^\UH\right),
\]
where
\[
\bm{E}(\vartheta)\Mydef \bm B^\UH (\bm T + \vartheta \bm T \bm A_1 (\bm T \bm A_1)^\UH) \bm B.
\]
Note that $\bm{E}(\vartheta)$ is independent of $\bm{H}_{m-1}$. The spectrum of such a matrix product has been studied in free probability theory, and we collect some results regarding this in Section \ref{Sec:free}.
\item In order to apply the free probability results, we need to understand the spectrum of $\bm{E}(\vartheta)$. This is done in Section \ref{Sec:spectrum_E}.
\item It turns out that the limiting spectrum measure of $\bm{E}(\vartheta)\bm{H}_{m-1}\bm{R}\bm{H}_{m-1}^\UH$ is given by the free convolution (defined in Section \ref{Sec:free}) of the measures $\gamma$ and $\mathcal{L}_T$, where $\gamma\Mydef \frac{1}{\delta}\delta_1 + \left(1-\frac{1}{\delta}\right)\delta_0$ and $\mathcal{L}_T$ is the law of the random variable $T=\T(|Z|/\sqrt{\delta})$. Section \ref{Sec:support} is devoted to understanding the support of the free convolution.
\item Finally, Section \ref{Sec:technical_lemmas} proves lemmas \ref{L_m_func_asymptotic} and \ref{Lem:Lambda_plus}.
\end{itemize}
%=======================
\subsection{Free Probability Background}\label{Sec:free}

	Our analysis of the spectral estimators relies on a well-studied model in the theory of free probability; We will reduce the problem to the problem of understanding the spectrum of  matrices of the form $\bm X = \bm E \bm U \bm F \bm U^\UH$, where $\bm E$ and $\bm F$ are deterministic matrices and $\bm U$ is a Haar-distributed unitary matrix. Then, the limiting spectral distribution of $\bm X$ is the \emph{free multiplicative convolution} of the limiting spectral distributions of $\bm E$ and $\bm F$. This section is a collection of the results and definitions regarding these aspects. Here is the organization of this section. Section \ref{free_harmonic_analysis} collects various facts from free harmonic analysis. 
	Section \ref{RMT_model_X} describes the two fundamental results about the model $\bm X = \bm E \bm U \bm F \bm U^\UH$ that will be used throughout our paper. Section \ref{singular_part_free_convolution} reviews some results about the support of singular part of the free convolution of two measures. Throughout this section, we assume that $\gamma$ and $\nu$ are two arbitrary compactly supported probability measures on $[0,\infty)$ and that neither of the two measures is completely concentrated at a single point.
	
	\subsubsection{Facts from Free Harmonic Analysis}
	\label{free_harmonic_analysis}
	In this section, we collect some facts from the field of free harmonic analysis. All these results can be found in Chapter 3 of \cite{mingo2017free} or the papers \cite{belinschi2017outliers} and \cite{belinschi2003atoms}.
	\begin{definition} The Cauchy transform $G_\gamma$ of $\gamma$ at $z$ is defined as follows: 
		\begin{align*}
		G_\gamma(z) & = \int \frac{\gamma(\diff t)}{z - t}, \; z \in \mathbb{C} \backslash [0,\infty).
		\end{align*}
	\end{definition}	
	\begin{definition}	
		The moment generating function of $\gamma$, $\psi_\gamma$ at $z$ is defined as follows:
		\begin{align*}
		\psi_\gamma(z) & = \int \frac{ zt }{1 - zt} \gamma(\diff t), \; z \in \mathbb{C} \backslash [0,\infty).
		\end{align*}
	\end{definition}
	The Cauchy transform and the moment generating function are related via the relation
	\begin{align*}
	G_\gamma(z) & = \frac{1}{z} \cdot \left( \psi_\gamma \left( \frac{1}{z} \right) + 1 \right).
	\end{align*}
	\begin{definition}
		The $\eta$-transform of a measure is defined as,
		\begin{align*}
		\eta_\gamma(z) & = \frac{\psi_\gamma(z)}{1+\psi_\gamma(z)}.
		\end{align*}
	\end{definition}
	The Cauchy Transform (and hence the Moment Generating function) uniquely characterizes a measure. The measure can be obtained by the following inversion formula. The particular version we state is taken from Section 3.1 of  \cite{belinschi2017outliers}.
	\begin{theorem} For $a<b \in [0,\infty)$, we have
		\begin{align*}
		\gamma((a,b)) + \frac{1}{2} \gamma(\{a,b\}) & = \frac{1}{\pi} \lim_{\epsilon \rightarrow 0^+} \int_a^b \Im(G_\gamma(x-i \epsilon)) \diff x.
		\end{align*}
		Furthermore, if $\gamma$ satisfies $\gamma = \gamma_{ac} + \gamma_{s}$, where $\gamma_{ac}$ and $\gamma_s$ denote the absolutely continuous and the singular part of the measure with respect to the Lebesgue measure, then the density of the absolutely continuous part is given by
		\begin{align*}
		\frac{\diff \gamma_{ac}}{\diff x}(x) & = \lim_{\epsilon \rightarrow 0^+} \frac{1}{\pi}\Im(G_\gamma(x-i\epsilon)).
		\end{align*}
		\label{stieltjes_theorem}
	\end{theorem}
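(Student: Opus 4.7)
The plan is to prove the Stieltjes inversion formula by writing $\Im G_\gamma(x - i\epsilon)$ as a Poisson convolution of $\gamma$ and then applying Fubini together with dominated convergence. First I would observe that
\begin{align*}
\Im G_\gamma(x - i\epsilon) \;=\; \Im \int \frac{\gamma(\diff t)}{(x-t) - i\epsilon} \;=\; \int \frac{\epsilon}{(x-t)^2 + \epsilon^2}\,\gamma(\diff t),
\end{align*}
which is $\pi$ times the convolution of $\gamma$ with the Poisson kernel $P_\epsilon(u) = \frac{\epsilon}{\pi(u^2 + \epsilon^2)}$. Since the integrand is nonnegative and $\gamma$ is finite, Fubini permits interchanging the order of integration, yielding
\begin{align*}
\int_a^b \Im G_\gamma(x - i\epsilon)\,\diff x \;=\; \int \phi_\epsilon(t)\,\gamma(\diff t),
\end{align*}
where $\phi_\epsilon(t) \Mydef \arctan\!\left(\tfrac{b-t}{\epsilon}\right) - \arctan\!\left(\tfrac{a-t}{\epsilon}\right)$.

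Next I would take the pointwise limit $\epsilon \downarrow 0$: the integrand $\phi_\epsilon(t)$ converges to $\pi$ for $t \in (a,b)$, to $0$ for $t \in [0,\infty)\setminus[a,b]$, and to $\pi/2$ at the endpoints $t \in \{a,b\}$. Because $0 \le \phi_\epsilon(t) \le \pi$ uniformly in $\epsilon$ and $\gamma$ is a finite measure, dominated convergence applies and gives
\begin{align*}
\lim_{\epsilon \downarrow 0}\int_a^b \Im G_\gamma(x - i\epsilon)\,\diff x \;=\; \pi\gamma((a,b)) + \frac{\pi}{2}\gamma(\{a,b\}),
\end{align*}
which is precisely the first claim after dividing by $\pi$.

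For the density formula I would appeal to the classical fact that convolution with the Poisson kernel recovers the absolutely continuous part of the underlying measure. Writing $\gamma = \gamma_{ac} + \gamma_s$ and setting $f_\epsilon(x) = \frac{1}{\pi}\Im G_\gamma(x - i\epsilon) = (P_\epsilon \ast \gamma)(x)$, I would split $f_\epsilon = P_\epsilon \ast \gamma_{ac} + P_\epsilon \ast \gamma_s$. Fatou's theorem on nontangential boundary values of Poisson integrals implies that $P_\epsilon \ast \gamma_{ac}(x) \to \frac{\diff \gamma_{ac}}{\diff x}(x)$ for Lebesgue-a.e.\ $x$, while $P_\epsilon \ast \gamma_s(x) \to 0$ for Lebesgue-a.e.\ $x$, because $\gamma_s$ is concentrated on a set of Lebesgue measure zero. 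Combining the two pieces yields the stated density formula. Alternatively, one can obtain the formula directly from the first part by dividing through by $b - a$, letting $b \downarrow a$, and invoking the Lebesgue differentiation theorem, which automatically annihilates the singular contribution.

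The main obstacle lies entirely in the second assertion: one must invoke the nontrivial classical theorem of Fatou on almost-everywhere convergence of Poisson integrals. The first assertion is by contrast routine. Since the statement is quoted verbatim from \cite{belinschi2017outliers}, the paper will in all likelihood just reference it as a known result rather than reproduce a proof.
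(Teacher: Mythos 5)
Your proof is correct, and you accurately anticipated the paper's treatment: the paper does not prove this theorem at all, stating only that ``The particular version we state is taken from Section 3.1 of \cite{belinschi2017outliers}.'' Your derivation is the standard one. For the first claim, writing $\Im G_\gamma(x - i\epsilon) = \int \frac{\epsilon}{(x-t)^2 + \epsilon^2}\,\gamma(\diff t)$ (note the sign works out cleanly precisely because the paper evaluates at $x - i\epsilon$ rather than $x + i\epsilon$), applying Tonelli to get $\int_a^b \Im G_\gamma(x-i\epsilon)\,\diff x = \int \phi_\epsilon(t)\,\gamma(\diff t)$ with $\phi_\epsilon(t) = \arctan\frac{b-t}{\epsilon} - \arctan\frac{a-t}{\epsilon}$, and then using that $0 \le \phi_\epsilon \le \pi$ with pointwise limits $\pi$, $\pi/2$, $0$ on $(a,b)$, $\{a,b\}$, and the complement respectively, is exactly right. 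For the second claim, the appeal to Fatou's theorem on a.e.\ nontangential boundary values of Poisson integrals (or, equivalently, your alternative via the Lebesgue differentiation theorem applied to the first identity) is the canonical argument. Since the paper offers no proof of its own, there is nothing to compare against; your blind proof fills the gap correctly.
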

	Next we recall the definition of  the free convolution based on the subordination functions from \cite{belinschi2007new}. The statement we provide below appears in a more general form as Proposition 2.6 in \cite{belinschi2014operator}.
	
	\begin{definition} Let $(\gamma,\nu)$ be a pair of probability measures. There exist analytic functions $w_\gamma, w_\nu$ defined  on $\mathbb C \backslash [0,\infty)$ such that, for all $z \in \mathbb C^{+}$ we have
		\begin{enumerate}
			\item $w_\gamma(z), w_\nu(z) \in \mathbb C^{+}$; $w_\gamma(\overline{z}) = \overline{ w_\gamma(z)}, w_\nu(\overline{z}) = \overline{ w_\nu(z)}$ and $\Arg(w_\gamma(z)) \geq \Arg(z), \Arg(w_\nu(z)) \geq \Arg(z)$.
			
			\item For any $z \in \mathbb C^+$, $w_\nu(z)$ is the unique solution in $\mathbb C^{+}$ of the fixed point equation $Q_z(w) = w$, where $Q_z$ is given by
			\begin{align*}
			Q_z(w) & = \frac{w}{\eta_\nu(w)} \eta_\gamma \left( \frac{z \eta_\nu(w)}{w} \right).
			\end{align*}
			An analogous characterization holds for $w_\gamma$ with the role of $\gamma$ and $\nu$ changed.
		\end{enumerate}
		The free convolution of the measures $\gamma$ and $\nu$ denoted by $\gamma \boxtimes \nu$ is the measure whose moment generating function satisfies
		$$\psi_{\gamma \boxtimes \nu}(z) = \psi_{\gamma}(w_\gamma(z)) = \psi_{\nu}(w_\nu(z)) = \frac{w_\gamma(z) w_\nu(z)}{z - w_\gamma(z) w_\nu(z)}.  $$
		\label{def_subordination}
	\end{definition}
	
	\begin{remark} We emphasize that each of the subordination functions $w_{\gamma}, w_{\nu}$ depend on both the measures $\gamma, \nu$. This is clear since the function $Q_z(w)$ defining $w_\nu$ depends on both $\nu,\gamma$.	
	\end{remark}
	
	Note that the above definition defines $w_\nu$ and $w_\gamma$ on $\mathbb C \backslash [0,\infty)$. However these functions can be continously extended to $\overline{\mathbb C^{+}} \cup \{\infty\}$ (Lemma 3.2 in \cite{belinschi2017outliers}). These extensions to the real line will be important for Theorem \ref{RMT_model_X}.
	
	\begin{lemma} The restrictions of subordination functions $w_\gamma, w_\nu$ on $\mathbb C^{+}$ have extensions to $\overline{\mathbb C^{+}} \cup \{\infty\}$ with the following properties:
		\begin{enumerate}
			\item $w_\gamma, w_\nu: \overline{\mathbb C^{+}} \cup \{\infty\} \rightarrow \overline{\mathbb C^{+}} \cup \{\infty\}$ are continuous.
			\item If $1/x \in [0,\infty) \backslash \text{Supp}(\gamma \boxtimes \nu)$, then the functions $w_\gamma, w_\nu$ continue analytically to a neighborhood of $x$ and
			\begin{align*}
			\frac{1}{w_\gamma(x)} & = \frac{w_\nu(x)}{x} \cdot \frac{1+\psi_\nu(w_\nu(x))}{\psi_\nu(w_\nu(x))} \in \mathbb R \backslash \text{Supp}(\gamma), \\
			\frac{1}{w_\nu(x)} & = \frac{w_\gamma(x)}{x} \cdot \frac{1+\psi_\gamma(w_\gamma(x))}{\psi_\gamma(w_\gamma(x))} \in \mathbb R \backslash \text{Supp}(\nu).
			\end{align*}
		\end{enumerate}
		\label{extension_lemma}
	\end{lemma}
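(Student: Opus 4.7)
The plan is to obtain the extension in two stages: first a continuous boundary extension of $w_\gamma$ and $w_\nu$ to $\overline{\mathbb{C}^+} \cup \{\infty\}$, and then analytic continuation across the portion of $[0,\infty)$ lying outside $\text{Supp}(\gamma \boxtimes \nu)$. The explicit algebraic identities will then drop out of the defining relations in Definition \ref{def_subordination}.

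For the continuity claim, I would exploit that $w_\gamma$ and $w_\nu$ are analytic self-maps of $\mathbb{C}^+$ satisfying the angle-preservation property $\Arg(w_\gamma(z)) \geq \Arg(z)$ and $\Arg(w_\nu(z)) \geq \Arg(z)$, which places them in a well-studied sub-class of Nevanlinna--Pick functions admitting a Nevanlinna integral representation. General Nevanlinna theory gives non-tangential boundary values almost everywhere on $\mathbb{R}$. To upgrade this to everywhere-continuity with values in the one-point compactification $\overline{\mathbb{C}^+}\cup\{\infty\}$, I would combine the fixed-point characterization $Q_z(w_\nu(z)) = w_\nu(z)$ with a normal-families argument: any subsequential limit of $w_\nu(z_n)$ along $z_n \to x_0 \in \overline{\mathbb{C}^+}$ must solve the limiting fixed-point equation obtained from $Q_{x_0}$, and uniqueness of that solution in $\overline{\mathbb{C}^+}\cup\{\infty\}$ promotes the subsequential limit to an honest limit. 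The same reasoning applies to $w_\gamma$ by symmetry of the definition.

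For the analytic continuation at a real point $x$ with $1/x \notin \text{Supp}(\gamma \boxtimes \nu)$, the starting point is Theorem \ref{stieltjes_theorem}: absence of support at $1/x$ forces the imaginary boundary value of $G_{\gamma\boxtimes\nu}$ to vanish on a neighborhood, so by the Schwarz reflection principle $G_{\gamma\boxtimes\nu}$, and hence $\psi_{\gamma\boxtimes\nu}$, continues analytically across $x$. From the identity
\[
\psi_{\gamma\boxtimes\nu}(z) \;=\; \frac{w_\gamma(z)\,w_\nu(z)}{z - w_\gamma(z)\,w_\nu(z)}
\]
the product $w_\gamma(z)\,w_\nu(z)$ inherits an analytic extension to a neighborhood of $x$ and satisfies $w_\gamma(x)\,w_\nu(x) \ne x$. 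I would then apply the implicit function theorem to $Q_z(w) - w = 0$ to extract analytic extensions of $w_\gamma$ and $w_\nu$ individually. The main obstacle is the non-degeneracy check $\partial_w Q_z(w_\nu(z)) \ne 1$ at the boundary point $x$; on $\mathbb{C}^+$ strict sub-contractivity of $Q_z$ gives this for free, but preserving it in the limit $z \to x$ requires a Julia--Carath\'eodory style boundary analysis exploiting the separation of $1/x$ from $\text{Supp}(\gamma\boxtimes\nu)$, and I expect this to be the most delicate step.

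Once analyticity is in hand, the explicit identities follow by pure algebra. From $\eta_{\gamma\boxtimes\nu}(z) = w_\gamma(z)\,w_\nu(z)/z$ and $\eta_{\gamma\boxtimes\nu}(z) = \eta_\nu(w_\nu(z))$, one obtains $w_\gamma(z) = z\,\eta_\nu(w_\nu(z))/w_\nu(z)$; substituting $\eta_\nu = \psi_\nu/(1+\psi_\nu)$ and taking reciprocals yields the claimed formula for $1/w_\gamma(x)$, and the analogous computation with the roles of $\gamma$ and $\nu$ swapped yields the formula for $1/w_\nu(x)$. Reality of these quantities follows from $\psi_{\gamma\boxtimes\nu}(x) = \psi_\gamma(w_\gamma(x)) \in \mathbb{R}$, and the membership $1/w_\gamma(x) \in \mathbb{R}\setminus\text{Supp}(\gamma)$ is forced because otherwise $\psi_\gamma$ would fail to be analytic at $w_\gamma(x)$, contradicting the analyticity of $\psi_{\gamma\boxtimes\nu}$ near $x$ already established above.
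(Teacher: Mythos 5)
The paper does not supply a proof of this lemma. Immediately before the statement, the text remarks that the subordination functions ``can be continuously extended to $\overline{\mathbb C^{+}} \cup \{\infty\}$ (Lemma~3.2 in [belinschi2017outliers])'', and the opening of Section~\ref{free_harmonic_analysis} explicitly defers to Chapter~3 of \cite{mingo2017free} and to \cite{belinschi2017outliers,belinschi2003atoms} for all the results collected there. So this lemma is quoted from the literature rather than re-proved; there is no paper proof to compare against, and your task here was really to reconstruct the cited argument from Belinschi--Bercovici--Capitaine--F\'evrier.

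Judged on its own terms, your sketch is in the right spirit (Nevanlinna/Julia--Carath\'eodory theory, Stieltjes inversion, the fixed-point identity $Q_z(w_\nu(z))=w_\nu(z)$, and algebraic manipulation of $\eta_{\gamma\boxtimes\nu}=w_\gamma w_\nu/z=\eta_\nu\circ w_\nu=\eta_\gamma\circ w_\gamma$), and the final algebraic derivation of the two reciprocal formulas is correct. However there are two genuine gaps. First, your route to everywhere-continuity rests on the claim that any subsequential boundary limit solves the limiting fixed-point equation and that ``uniqueness of that solution in $\overline{\mathbb C^{+}}\cup\{\infty\}$'' finishes the job; but uniqueness of the Denjoy--Wolff-type fixed point is only available in the open half-plane, and you give no argument for boundary uniqueness --- this is precisely the substance of Belinschi's continuous-extension theorem, not a consequence of it. Second, you correctly flag the non-degeneracy check $\partial_w Q_z(w)\neq 1$ at the boundary as ``the most delicate step,'' but you leave it unresolved, whereas this is exactly where the hypothesis $1/x\notin\text{Supp}(\gamma\boxtimes\nu)$ must be used in earnest. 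A minor additional gap: you invoke ``reality of these quantities'' via $\psi_{\gamma\boxtimes\nu}(x)\in\mathbb R$ without first showing $w_\gamma(x),w_\nu(x)\in\mathbb R$; this does follow because $\psi_\gamma$ strictly increases imaginary parts on $\mathbb C^+$ for nondegenerate $\gamma$, but the step should be made explicit. In short, the outline is plausible and the algebra is right, but the two analytic pillars that carry the lemma are asserted rather than proved; since the paper outsources this lemma to \cite{belinschi2017outliers}, the intended resolution is simply to cite that reference rather than to re-derive it.
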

	
	\subsubsection{Spectrum of $\mathbf{X} = \mathbf{E} \mathbf{U} \mathbf{F} \mathbf{U}^\UH $} 
	\label{RMT_model_X}
	As we discussed before, we will convert the problem of analyzing the spectrum of $\bm{M}$ to problems involving the spectrum of matrices of the form $\mathbf{X}_N = \mathbf{E}_N \mathbf{U}_N \mathbf{F}_N \mathbf{U}_N^\UH$, where $\bm U_N$ is a sequence of Haar distributed $N \times N$ random matrices, and $\bm E_N$ and $\bm F_N$ are sequences of deterministic positive semidefinite matrices. In this section, we review two important results from the field of free probability regarding such matrices. 
	
	Suppose that  $\bm E_N$ and $\bm F_N$ satisfy the following hypotheses:
	\begin{enumerate}
		\item[(i)] $\mu_{\bm E_N} \explain{d}{\rightarrow} \mu_e$ and $\mu_{\bm F_N} \explain{d}{\rightarrow} \mu_f $, where $\mu_e,\mu_f$ are compactly supported measures on $[0,\infty)$.
		\item[(ii)] $\bm E_N$ has a single outlying eigenvalue $\theta$ not contained in $\text{Supp}(\mu_e)$. $\bm F_N$ has no eigenvalues outside $\text{Supp}(\mu_f)$.
		\item [(iii)]The set of eigenvalues of $\bm E_N$ not equal to $\theta$ converge uniformly to $\text{Supp}(\mu_e)$ in the sense,
		\begin{align*}
		\lim_{N \rightarrow \infty}\max_{i: \lambda_i(\bm E_N) \neq \theta}\text{dist}(\lambda_i(\bm E_N), \text{Supp}(\mu_e)) = 0.
		\end{align*}
	\end{enumerate}
	
{Our next theorem characterizes the bulk distribution of $\bm X_N$. The first part of this theorem is
		due to \cite{voiculescu1991limit} and the second and third parts are due to \cite{belinschi2017outliers} (Theorem 2.3).}

	\begin{theorem} Let $w_e$ and $w_f$ denote the subordination functions for the free multiplicative convolution of $\mu_e$ and $\mu_f$. Define
		\begin{align*}
		\tau_e(1/z)  = \frac{1}{w_e(1/z)}, \; K= \text{Supp}(\mu_e \boxtimes \nu_f) \cup \tau_e^{-1}(\theta).
		\end{align*}
		Then we have, almost surely for large enough $N$,
		\begin{enumerate}
			\item $\mu_{\bm X_N} \explain{d}{\rightarrow} \mu_e \boxtimes \mu_f$.
			\item Given $\epsilon>0$, we have $\sigma(\bm X_N) \subset K_\epsilon$, where $K_\epsilon$ is the $\epsilon$-neighborhood of $K$ and $\sigma(\bm X_N)$ denotes the set of eigenvalues of $\bm X_N$.
			\item For any $\rho \in \tau_e^{-1}(\theta)$ such that $\exists \epsilon > 0$ with $(\rho - 2 \epsilon, \rho + 2 \epsilon) \cap K = \{\rho \}$, we have $|\sigma(\bm X_N) \cap (\rho-\epsilon,\rho + \epsilon)| = 1$.
		\end{enumerate}
		\label{outlier_theorem_free_probability}
	\end{theorem}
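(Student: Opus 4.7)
The plan is to establish the three conclusions of Theorem \ref{outlier_theorem_free_probability} via a two-stage argument: first handle the bulk convergence (part 1) using Voiculescu's asymptotic freeness machinery, then handle the spectral containment and outlier localization (parts 2 and 3) via a subordination-based deterministic equivalent for the resolvent of $\bm X_N$.

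For part 1, I would invoke Voiculescu's theorem on asymptotic freeness under Haar unitary conjugation: since $\bm U_N$ is Haar distributed on $\mathbb{U}(N)$, the pair $(\bm E_N, \bm U_N \bm F_N \bm U_N^\UH)$ is asymptotically free almost surely with marginals $\mu_e, \mu_f$. The spectral distribution of the product $\bm E_N \cdot \bm U_N \bm F_N \bm U_N^\UH$ of two asymptotically free sequences converges to $\mu_e \boxtimes \mu_f$ by definition of free multiplicative convolution; one technicality is that $\bm X_N$ is not Hermitian, but it is similar to $\bm E_N^{1/2}\bm U_N\bm F_N\bm U_N^\UH \bm E_N^{1/2}$ (after writing $\bm E_N = \bm E_N^{1/2}\bm E_N^{1/2}$ since $\bm E_N \succeq 0$), which has the same spectrum and is Hermitian, so the moment method applies cleanly.

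For parts 2 and 3, the natural approach is to decompose $\bm E_N = \widetilde{\bm E}_N + \theta\, \bm v \bm v^\UH - \lambda_\mathrm{min}\bm v\bm v^\UH$ where $\widetilde{\bm E}_N$ has no outlier. One then studies the eigenvalue equation for $\bm X_N = \bm E_N \bm U_N \bm F_N \bm U_N^\UH$ through the resolvent $(z\bm I - \bm X_N)^{-1}$. The key input is a deterministic equivalent: for $z$ away from $\text{Supp}(\mu_e \boxtimes \mu_f)$, the normalized trace of the resolvent converges to $G_{\mu_e \boxtimes \mu_f}(z)$, and more importantly, bilinear forms $\bm a^\UH (z\bm I - \bm X_N)^{-1} \bm b$ concentrate around a deterministic quantity expressible through the subordination functions $w_e, w_f$ of Definition \ref{def_subordination}. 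Applying the Schur complement formula to the rank-one perturbation by the outlier direction, the equation for outlier eigenvalues reduces, after taking $N\to\infty$, to a scalar equation involving $\tau_e(1/z) = 1/w_e(1/z)$ that is satisfied precisely when $\tau_e(1/z) = \theta$, i.e. $z \in \tau_e^{-1}(\theta)$.

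The main obstacle will be rigorously establishing the local resolvent estimates (anisotropic local law) needed to pin down outlier locations and their multiplicity in part 3, rather than merely proving convergence on compacts. This requires showing that bilinear forms of the resolvent with the outlier eigenvector of $\bm E_N$ concentrate at the right rate uniformly in a neighborhood of each isolated solution of the scalar equation, so that Rouché-type arguments (or an implicit function theorem applied to the deterministic equivalent) yield exactly one eigenvalue in each isolated pre-image under $\tau_e$. Since this entire statement is attributed to \cite{voiculescu1991limit} for part 1 and \cite{belinschi2017outliers} (Theorem 2.3) for parts 2 and 3, I would simply cite those works rather than reprove these facts, treating Theorem \ref{outlier_theorem_free_probability} as a black-box input for the downstream analysis of $L_m(\vartheta)$.
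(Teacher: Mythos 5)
Your proposal is correct and matches the paper's treatment exactly: the paper does not prove Theorem \ref{outlier_theorem_free_probability} but imports it directly, citing \cite{voiculescu1991limit} for part 1 and \cite{belinschi2017outliers} (Theorem 2.3) for parts 2 and 3, which is precisely what you conclude by treating it as a black-box input. Your preliminary sketch of the freeness argument (via the Hermitization $\bm E_N^{1/2}\bm U_N\bm F_N\bm U_N^\UH\bm E_N^{1/2}$) and the subordination-based resolvent analysis for the outliers is a fair summary of what those references establish, but the paper itself adds nothing beyond the citation and a remark that the stated version is a single-spike simplification of the more general multi-spike result in \cite{belinschi2017outliers}.
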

	
	\begin{remark} The hypothesis in the above theorem can be relaxed (as mentioned in Remark 5.11 of \cite{belinschi2017outliers}) in the following two ways: 1) $\bm E_N$ is random, independent of $\bm U_N$ and $\bm F_N$ is deterministic, provided $\mu_{\bm E_N} \explain{d}{\rightarrow} \mu_e$ occurs almost surely, 2) The spike locations depend on $N$, $\theta_N$ provided $\theta_N \rightarrow \theta$ almost surely.
		\label{outlier_theorem_remark}
	\end{remark}

	\begin{remark} The above theorem is a simplified version of Theorem 2.3 in \cite{belinschi2017outliers} which allows for multiple spikes in both $\bm E_N$ and $\bm F_N$.
	\end{remark}
	\begin{remark} The function $\tau$ might not be invertible. In such cases, $\tau^{-1}(\theta)$ can be a non-singleton set, and hence a single spike in $\bm E_N$ can create multiple spikes in $\bm X_N$. But we will see that this doesn't happen in our problem.
	\end{remark}

	\subsubsection{Singular Part of Free Convolution}
	\label{singular_part_free_convolution}
	In the last section we discussed the bulk distribution of $\bm{X}_N = \bm{E}_N \bm{U}_N \bm{F}_N \bm{U}_N$. The main objective of this section is to mention a result regarding the largest eigenvalue of $\bm{X}_N$. We state regularity results for the singular part of $\gamma \boxtimes \nu$ from \cite{belinschi2006note} (Corollary 3.4) and \cite{belinschi2003atoms} (Theorem 4.1).

	\begin{theorem}[Singular Part of $\gamma \boxtimes \nu$]  Decompose the singular part of $\gamma \boxtimes \nu$ as $(\gamma \boxtimes \nu)_s = (\gamma \boxtimes \nu)_d + (\gamma \boxtimes \nu)_{sc} $ where $(\gamma \boxtimes \nu)_d$ denotes the discrete part and $(\gamma \boxtimes \nu)_{sc}$ denotes the singular continous part. Then we have,
		\begin{enumerate}
			\item There can be at most two atoms. The possible locations of the atoms are: 
			\begin{enumerate}
				\item $0$, with $\gamma \boxtimes \nu(\{0\}) = \max(\gamma(\{0\}),\nu(\{0\}))$.
				\item Any $a \in (0,\infty)$ such that there exist $u,v \in (0,\infty)$ with $uv = a$ and $\gamma(\{u\}) + \nu(\{v\}) > 1$ and we have, $\gamma \boxtimes \nu(\{a\}) = \gamma(\{u\}) + \nu(\{v\}) - 1$. Note that there can be atmost one such $a$.
			\end{enumerate}
			\item Suppose neither of $\gamma,\nu$ is completely concentrated at a single point. We have, $\text{Supp}((\gamma \boxtimes \nu)_{sc}) \subset \text{Supp}((\gamma \boxtimes \nu)_{ac}) $. Hence,
			\begin{align*}
			\text{Supp}(\gamma \boxtimes \nu) & = \text{Supp}((\gamma \boxtimes \nu)_{ac}) \cup \text{Supp}((\gamma \boxtimes \nu)_d).
			\end{align*}
		\end{enumerate}
		\label{regularity_theorem}
	\end{theorem}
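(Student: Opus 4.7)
The plan is to retrace the strategy of \cite{belinschi2003atoms,belinschi2006note}, from which the theorem is quoted. The unifying tool is the subordination characterization in Definition \ref{def_subordination} together with the boundary regularity of the subordination functions $w_\gamma, w_\nu$ from Lemma \ref{extension_lemma}: atoms of $\gamma \boxtimes \nu$ at $x \in [0,\infty)$ correspond to simple poles of $G_{\gamma \boxtimes \nu}$ on the real axis, while singular continuous mass corresponds to non-tangential boundary values $\Im(G_{\gamma \boxtimes \nu}(x - \mathbf{i} 0^+)) = +\infty$ on a Lebesgue-null subset of $\mathbb{R}$. The plan is to translate each of these analytic features into conditions on $\gamma$ and $\nu$ via the identities $\psi_{\gamma \boxtimes \nu}(z) = \psi_\gamma(w_\gamma(z)) = \psi_\nu(w_\nu(z))$ and, equivalently, $\eta_{\gamma \boxtimes \nu}(z) = w_\gamma(z) w_\nu(z)/z$.

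For Part 1, I would handle the atom at $0$ via the classical free-probability fact that for free nonnegative $U, V$ with laws $\gamma, \nu$, one has $\gamma \boxtimes \nu(\{0\}) = \max(\gamma(\{0\}), \nu(\{0\}))$. This is most cleanly established using multiplicativity of the $S$-transform together with the fact that the atom of $\mu$ at $0$ determines the leftmost endpoint of the domain of $S_\mu$. For atoms at $a > 0$, a pole of $G_{\gamma \boxtimes \nu}$ at $a$ corresponds to $w_\gamma(z) w_\nu(z)/z \to 1$ as $z \to 1/a$ non-tangentially, which via Lemma \ref{extension_lemma} forces $w_\gamma(1/a) = 1/u$ and $w_\nu(1/a) = 1/v$ with $u, v$ atoms of $\gamma, \nu$ satisfying $uv = a$. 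Expanding the fixed-point equation $Q_z(w) = w$ for $w_\nu$ around this boundary solution to first order and computing the residue of $\eta_{\gamma \boxtimes \nu}$ at $z = 1/a$ then identifies the atom mass as $\gamma(\{u\}) + \nu(\{v\}) - 1$, which must be strictly positive. Uniqueness of such $a$ follows because two distinct pairs $(u_1,v_1), (u_2,v_2)$ each with atom-sum exceeding $1$ would force the total atomic mass in either $\gamma$ or $\nu$ to exceed $1$.

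For Part 2, the absolutely continuous density furnished by the Stieltjes inversion formula (Theorem \ref{stieltjes_theorem}) is real-analytic on the open set where it is strictly positive, so $\text{Supp}((\gamma \boxtimes \nu)_{ac})$ is a finite or countable union of closed intervals. On any open interval $I \subset (0,\infty)$ disjoint from $\overline{\text{Supp}((\gamma \boxtimes \nu)_{ac})} \cup \text{Supp}((\gamma \boxtimes \nu)_d)$, Lemma \ref{extension_lemma} yields analytic continuation of $w_\gamma, w_\nu$ across $I$, hence of $G_{\gamma \boxtimes \nu}$ across $I$, which rules out any singular continuous mass on $I$. This yields $\text{Supp}((\gamma \boxtimes \nu)_{sc}) \subset \overline{\text{Supp}((\gamma \boxtimes \nu)_{ac})}$, after which the claimed support equality follows from the Lebesgue decomposition. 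The non-concentration hypothesis on $\gamma, \nu$ enters here to exclude the degenerate situation where one of the subordination functions is constant, which is the only case in which the analytic-continuation argument can fail.

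The main obstacle will be the residue computation in Part 1(b): pinning down \emph{exactly} the mass $\gamma(\{u\}) + \nu(\{v\}) - 1$ rather than some other combination such as $\gamma(\{u\})\nu(\{v\})$ requires a careful non-tangential expansion of the fixed-point equation $Q_z(w) = w$ at its boundary singularity, including ruling out higher-order tangency that would alter the residue and verifying that the resulting mass is indeed nonnegative only when the strict inequality holds. This step is the technical heart of \cite{belinschi2003atoms} and would need the full machinery developed there rather than a short sketch.
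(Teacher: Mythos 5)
The paper does not prove this theorem; it is stated in Section IV-B3 as background material quoted directly from \cite{belinschi2006note} (Corollary 3.4) and \cite{belinschi2003atoms} (Theorem 4.1), so there is no in-paper argument to compare your sketch against. What you have written is a reasonable outline of the subordination-based strategy used in those two references, and you correctly identify the residue computation for 1(b) as the technical heart, but note that the paper itself treats this result as a black box and your task was really to reconstruct the cited proofs.

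One concrete gap in your sketch: the uniqueness argument for 1(b) does not close as stated. You argue that two distinct pairs $(u_1,v_1)\neq(u_2,v_2)$ with $\gamma(\{u_i\})+\nu(\{v_i\})>1$ would force the total atomic mass of $\gamma$ or of $\nu$ to exceed $1$, but this only follows when both coordinates differ. If $u_1=u_2=u$ with $\gamma(\{u\})>\tfrac{1}{2}$ and $v_1\neq v_2$ with $\nu(\{v_1\}),\nu(\{v_2\})>1-\gamma(\{u\})$, then both sums exceed $1$ while $\gamma$ and $\nu$ remain perfectly valid probability measures and $uv_1\neq uv_2$, so no contradiction arises from your mass-counting alone. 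A complete argument has to separately rule out the shared-coordinate configuration (or show why it cannot occur for the specific measures under consideration), which is part of what makes the full proof in \cite{belinschi2003atoms} nontrivial. In the setting of the paper this subtlety is moot, since $\mathcal{L}_T$ has a density and hence no atoms, but as a proof of the general statement your uniqueness step needs repair.
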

%%==================================	
\subsection{Analysis of the Spectrum of $\mathbf{E}(\vartheta)$}\label{Sec:spectrum_E}

In order to apply Theorem \ref{outlier_theorem_free_probability}, we need to understand the spectrum of $\bm B^\UH (\bm T + \vartheta \bm T \bm A_1 (\bm T \bm A_1)^\UH) \bm B$. This is done in the following lemma.

\begin{lemma} Let $$T_{(1)} \geq  T_{(2)} \dots \geq T_{(m)}$$ denote the sorted trimmed measurements. Let ${\bm E(\vartheta) \Mydef \bm B^\UH (\bm T + \vartheta \bm T \bm A_1 (\bm T \bm A_1)^\UH) \bm B}$. Then, 
	\begin{enumerate}
		\item The eigenvalues of $\bm E(\vartheta)$ interlace with $T_{(1)}, T_{(2)} \dots T_{(m)}$ in the sense, 
		\begin{align*}
		\lambda_i(\bm E(\vartheta))  &\leq T_{(i-1)}\; \forall \; i=2,3, \dots m, \;  \& \\  \lambda_i(\bm E(\vartheta)) &\geq T_{(i+1)} \; \forall  \; i=1,3, \dots m-1.
		\end{align*}
		\item $\bm E(\vartheta)$ can have at most one eigenvalue bigger than $T_{(1)}$, which (if it exists) is given by the root of the following equation: 
		\begin{align*}
		Q_m(\lambda) = \frac{1}{\lambda - a_m - 1/\vartheta}, \; \lambda > \max(a_m+1/\vartheta, T_{(1)}),
		\end{align*}
		where $Q_m(\lambda)$ is defined as
		\begin{align*}
		Q_m(\lambda)  \Mydef \sum_{i=1}^m \frac{|A_{1i}|^2}{\lambda - T_i}.
		\end{align*}
		\item Furthermore, $\lambda_1(\bm E(\vartheta)) \leq 1 + \vartheta$ and $\lambda_{m-1}(\bm E(\vartheta)) \geq 0$.
	\end{enumerate}
	\label{spectrum_E_empirical}
\end{lemma}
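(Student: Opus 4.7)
The starting point is to expose the structure of $\bm E(\vartheta)$: writing $\bm w \Mydef \bm B^\UH \bm T \bm A_1$, one has
\[
\bm E(\vartheta) = \bm B^\UH \bm T \bm B + \vartheta\, \bm w \bm w^\UH.
\]
Since $\|\bm A_1\| = 1$, the block $[\bm A_1,\bm B]$ is a full unitary matrix, and so $\bm B^\UH \bm T \bm B$ is unitarily equivalent to the trailing $(m-1)\times(m-1)$ principal submatrix of $\bm T$ in the basis $[\bm A_1,\bm B]$. Thus $\bm E(\vartheta)$ is a rank-one positive semidefinite perturbation of a compression of the diagonal matrix $\bm T$, and all three claims reduce to two interlacing arguments plus a short secular-equation derivation.

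For part (1), I would chain two interlacing inequalities. Cauchy's interlacing theorem applied to the submatrix interpretation of $\bm B^\UH \bm T \bm B$ gives $T_{(i+1)} \leq \lambda_i(\bm B^\UH \bm T \bm B) \leq T_{(i)}$ for $i=1,\dots,m-1$, and rank-one PSD perturbation interlacing gives $\lambda_i(\bm B^\UH \bm T \bm B) \leq \lambda_i(\bm E(\vartheta)) \leq \lambda_{i-1}(\bm B^\UH \bm T \bm B)$ in the valid range of $i$. Composing these two chains yields the two interlacing bounds stated in part (1). The same composition already shows $\lambda_2(\bm E(\vartheta)) \leq T_{(1)}$, which is the ``at most one outlier'' half of part (2).

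For the secular equation in part (2), I would avoid computing the $(m-1)$-dimensional resolvent directly and instead lift the eigenvalue problem to the ambient $m$-dimensional space. If $\bm y$ is a top eigenvector of $\bm E(\vartheta)$ with eigenvalue $\lambda > T_{(1)}$, the vector $\bm x \Mydef \bm B \bm y$ lies in $\bm A_1^\perp$ and satisfies $(\bm I - \bm A_1 \bm A_1^\UH)(\bm T + \vartheta \bm T \bm A_1 \bm A_1^\UH \bm T)\bm x = \lambda \bm x$. Introducing $c \Mydef \bm A_1^\UH \bm T \bm x$ and expanding, this rearranges to
\[
(\bm T - \lambda \bm I)\bm x = c\bigl[(1+\vartheta a_m)\bm I - \vartheta \bm T\bigr]\bm A_1.
\]
Since $\lambda > T_{(1)}$ makes $\bm T - \lambda \bm I$ invertible, I can solve for $\bm x$ and then enforce the constraint $\bm A_1^\UH \bm x = 0$. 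Writing the resulting scalar identity over the diagonal entries of $\bm T$, and using $\|\bm A_1\|^2 = 1$ to rewrite $\sum_i |A_{1i}|^2 T_i/(T_i-\lambda) = 1 + \lambda \sum_i |A_{1i}|^2/(T_i-\lambda)$, collapses everything to exactly $Q_m(\lambda) = (\lambda - a_m - 1/\vartheta)^{-1}$. Positivity of $Q_m$ on $(T_{(1)},\infty)$ then forces $\lambda > a_m + 1/\vartheta$ as well.

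Part (3) is immediate: $\lambda_{m-1}(\bm E(\vartheta)) \geq 0$ follows from $\bm E(\vartheta) \succeq 0$, and
\[
\lambda_1(\bm E(\vartheta)) \leq \|\bm B^\UH \bm T \bm B\| + \vartheta \|\bm w\|^2 \leq \|\bm T\| + \vartheta \|\bm T\|^2 \|\bm A_1\|^2 \leq 1+\vartheta,
\]
using $T_i \in [0,1]$ and $\|\bm A_1\|=1$. I do not foresee any genuine obstacle; the lemma is pure linear algebra, and the only non-mechanical step is the secular-equation derivation, whose one subtlety is verifying $c \neq 0$ --- this is automatic, since $c = 0$ would force $(\bm T - \lambda \bm I)\bm x = 0$ and hence $\bm x = 0$, contradicting $\bm y \neq 0$.
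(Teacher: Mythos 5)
Your proposal is correct, and it takes a modestly different route from the paper on the two substantive steps. For the interlacing, the paper perturbs first ($\bm T \mapsto \bm T + \vartheta (\bm T \bm A_1)(\bm T \bm A_1)^\UH$) and then compresses to a principal submatrix, whereas you compress first (to $\bm B^\UH \bm T \bm B$) and then apply rank-one PSD perturbation interlacing; the two chains compose to the same inequalities, so this is purely a matter of ordering. For the secular equation, the paper picks a very specific first basis vector $\bm B_1 \propto \bm T\bm A_1 - a_m\bm A_1$ so that the rank-one perturbation becomes $\vartheta(b_m - a_m^2)\bm e_1 \bm e_1^\UH$, writes an explicit block-matrix identity $\mathrm{Diag}(a_m+1/\vartheta, \bm E(\vartheta)) = \bm O^\UH(\bm T + \bm u \bm u^\UH)\bm O$, and grinds through $\det(\lambda\bm I - \bm T - \bm u\bm u^\UH)$; your lift of the eigenvector equation to $\mathbb C^m$ via $\bm x = \bm B\bm y$ and the observation $\bm B\bm B^\UH = \bm I - \bm A_1\bm A_1^\UH$ avoids fixing a basis for $\bm A_1^\perp$ and is substantially lighter on algebra. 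Your approach also cleanly isolates the one potential subtlety ($c\neq 0$), which the paper handles only implicitly through its discussion of when $\det(\lambda\bm I - \bm T)$ vanishes. One small thing worth spelling out in a final write-up: after solving $\bm x = c(\bm T-\lambda\bm I)^{-1}[(1+\vartheta a_m)\bm I - \vartheta\bm T]\bm A_1$ you enforce $\bm A_1^\UH\bm x = 0$, but you should also check the consistency $c = \bm A_1^\UH\bm T\bm x$; this follows automatically from $\bm A_1^\UH\bm x = 0$ via $\bm T(\bm T-\lambda\bm I)^{-1} = \bm I + \lambda(\bm T-\lambda\bm I)^{-1}$ and $\bm A_1^\UH[(1+\vartheta a_m)\bm I - \vartheta\bm T]\bm A_1 = 1$, so the argument is complete, but the reader will want to see that check.
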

\begin{proof}
	Define the matrix $\bm E(\vartheta) =\bm B^\UH (\bm T + \vartheta \bm T \bm A_1 (\bm T \bm A_1)^\UH) \bm B$. The main trick will be to choose the orthonormal basis matrix $\bm B$ conveniently, which will make our calculations easier. Recall that the columns of  matrix $\bm B$, i.e. $\bm B_1, \bm B_2 \dots \bm B_{m-1}$, span the subspace $\bm A_1^\perp$. Any basis for subspace $\bm A_1^\perp$ can serve as matrix $\bm{B}$. Hence, we chose the following specific construction of $\bm{B}$:  
	\begin{align*}
	\bm B_1 & = \frac{\bm T \bm A_1 - a_m \bm A_1}{\sqrt{b_m-a_m^2}},
	\end{align*}
	where $a_m  = \bm A_1^\UH \bm T \bm A_1$ and $b_m = \bm A_1^\UH \bm T^2 \bm A_1.$
	With this choice, we note that 
	\begin{align*}
	\bm B^\UH \bm T \bm A_1 & = [\bm B_1^\UH \bm T \bm A_1, \bm B_2^\UH \bm T \bm A_1 \dots \bm B_{m-1}^\UH \bm T \bm A_1]^\UH \\ &= \sqrt{b_m - a_m^2} \bm e_1.
	\end{align*}
	Hence $\bm E(\vartheta) = \bm B^\UH \bm T \bm B + \vartheta (b_m-a_m^2)\bm e_1 \bm e_1^\UH$. To obtain the eigenvalues of $\bm E(\vartheta)$ we use its characteristic polynomial. To evaluate the characteristic polynomial of $\bm E(\vartheta)$, we connect it to the characteristic polynomial of $\bm O^\UH \bm T \bm O$, where $\bm O = [\bm A_1, \bm B]$. Note that $\bm O $ is a unitary matrix. First, we have
	\begin{align*}
	\bm O \bm ^\UH \bm T \bm O & = \begin{bmatrix} \bm A_1^\UH \bm T \bm A_1 & \bm A_1^\UH \bm T \bm B \\ \bm B^\UH \bm T \bm A_1 & \bm B^\UH \bm T \bm B
	\end{bmatrix} \\
	&= \begin{bmatrix} a_m & \sqrt{b_m - a_m^2} \bm e_1^\UH \\
	\sqrt{b_m - a_m^2} \bm e_1 & \bm B^\UH \bm T \bm B
	\end{bmatrix}.
	\end{align*}
	Consider the following matrix equation: 
	\begin{align}
	&\begin{bmatrix}
	a_m + \frac{1}{\vartheta} & \bm 0^\UH \\
	\bm 0 & \bm E(\vartheta)
	\end{bmatrix}  = \begin{bmatrix}
	a_m + \frac{1}{\vartheta} & \bm 0^\UH \\
	\bm 0 & \bm B^\UH \bm T \bm B 
	\end{bmatrix} \nonumber  \\&\hspace{5cm} +\vartheta (b_m - a_m^2) \bm e_2 \bm e_2^\UH  \nonumber \\
	 &= \begin{bmatrix}
	a_m & \sqrt{b_m - a_m^2} \bm e_1^\UH \\
	\sqrt{b_m - a_m^2} \bm e_1 & \bm B^\UH \bm T \bm B 
	\end{bmatrix}   \nonumber\\ {}  & \hspace{1.5cm} +  \begin{bmatrix} 1/\vartheta & -\sqrt{b_m-a_m^2} & \bm 0_{m-2,1}^\UH \\ -\sqrt{b_m-a_m^2} & \vartheta(b_m-a_m^2) & \bm 0^\UH_{m-2,1} \\ \bm 0_{m-2,1} & \bm 0_{m-2,1} & \bm 0_{m-2,m-2} \end{bmatrix} \nonumber \\
	 &= \bm O^\UH \bm T \bm O + \begin{bmatrix} 1/\sqrt{\vartheta} \\ - \sqrt{\vartheta(b_m-a_m^2)} \\ \bm 0_{m-2,1} \end{bmatrix} \begin{bmatrix} 1/\sqrt{\vartheta} \\ - \sqrt{\vartheta(b_m-a_m^2)} \\ \bm 0_{m-2,1} \end{bmatrix}^{\UH} \nonumber \\
	 &= \bm O^\UH (\bm T + \bm u \bm u^\UH) \bm O, \label{key_matrix_eq}
	\end{align}
	where 
	\begin{align*}
	\bm u & = \bm O \cdot \begin{bmatrix} 1/\sqrt{\vartheta} \\ - \sqrt{\vartheta(b_m-a_m^2)} \\ \bm 0_{m-2,1} \end{bmatrix} = \frac{1}{\sqrt{\vartheta}} \bm A_1 - \sqrt{\vartheta(b_m - a_m^2)} \bm B_1 \\
	& = \left( \frac{1}{\sqrt{\vartheta}} + a_m \sqrt{\vartheta} \right) \bm A_1 - \sqrt{\vartheta} \bm T \bm A_1 
	\end{align*}
	Therefore, $$|u_i|^2 =  \frac{(1+a_m\vartheta - \vartheta T_i)^2 |A_{1i}|^2}{\vartheta}.$$
	Now, we can compute the characteristic polynomial of $\bm E(\vartheta)$. We have 
	\begin{align*}
	&\det(\lambda \bm I - \bm E(\vartheta)) \\&\hspace{1cm}= \frac{1}{\lambda - a_m - \frac{1}{\vartheta}} \det \left( \lambda \bm I -  \begin{bmatrix}
	a_m + \frac{1}{\vartheta} & \bm 0^\UH \\
	\bm 0 & \bm E(\vartheta)
	\end{bmatrix} \right) \\
	&\hspace{1cm} = \frac{1}{\lambda - a_m - 1/\vartheta} \cdot \det(\lambda \bm I - \bm T - \bm u\bm u^\UH) \\
	&\hspace{1cm} = \frac{\det(\lambda I - \bm T)}{\lambda - a_m - 1/\vartheta} \cdot (1-\bm u^\UH (\lambda \bm I - \bm T)^{-1} \bm u).
	\end{align*}
	Note that
	\begin{align*}
	&1-\bm u^\UH (\lambda \bm I - \bm T)^{-1} \bm u  = 1 - \sum_{i=1}^m \frac{|u_i|^2}{\lambda - T_i}  \\
	&= 1 - \frac{1}{\vartheta}\sum_{i=1}^m \frac{(1+ a_m\vartheta - \lambda \vartheta + (\lambda - T_i)\vartheta)^2 |A_{1i}|^2}{\lambda - T_i} \\
	& = 1 - \frac{(1+ a_m\vartheta - \lambda \vartheta)^2}{\vartheta} \cdot \left(\sum_{i=1}^m \frac{|A_{1i}|^2}{\lambda - T_i} \right) \\& \hspace{1cm} - \vartheta \cdot \left( \sum_{i=1}^m (\lambda - T_i) \cdot |A_{1i}|^2 \right) - 2(1+a_m\vartheta - \lambda \vartheta) \\
	& = -(1-\lambda \vartheta + a_m\vartheta) \cdot \left(1 + \frac{1-\lambda \vartheta + a_m\vartheta}{\vartheta} Q_m(\lambda) \right),
	\end{align*}
	Where $Q_m(\lambda)$ is defined in the following way: 
	\begin{align*}
	Q_m(\lambda)  \Mydef \sum_{i=1}^m \frac{|A_{1i}|^2}{\lambda - T_i}.
	\end{align*}
	Hence,
	\begin{align} \label{eq: characteristic of F}
	&\det(\lambda \bm I - \bm E(\vartheta))  =\nonumber\\ & \hspace{2cm}  \det(\lambda \bm I - \bm T) (\vartheta + (1-\lambda \vartheta + a_m\vartheta) Q_m(\lambda)).
	\end{align}
	We emphasize that the above equation \emph{does not imply} that $T_1,T_2, \dots ,T_m$ are the eigenvalues of $\bm E(\vartheta)$. This is because while $\det(\lambda \bm I - \bm T)$ has zeros at $T_i$, the function $Q_m(\lambda)$ has poles at $T_i$. This prevents us from concluding that ${\det(\lambda \bm I - \bm E(\vartheta)) = 0}$ when $\lambda = T_i$. However, we can make the following observations: 
	\begin{enumerate}
		\item By Cauchy's interlacing theorem, we have \begin{align}
		&\lambda_1(\bm T + \vartheta (\bm T \bm A_1)(\bm T \bm A_1)^\UH) \geq T_{(1)} \nonumber \\ & \hspace{3cm}\geq \lambda_2(\bm T + \vartheta (\bm T \bm A_1)(\bm T \bm A_1)^\UH) \nonumber \\ & \hspace{3cm}\geq T_{(2)}.
		\label{interlacing_1}
		\end{align}
		The above is also true for the eigenvalues of: $$\bm O^\UH (\bm T + \vartheta (\bm T \bm A_1)(\bm T \bm A_1)^\UH  ) \bm O,$$ since $\bm O$ is a unitary matrix.
		\item \eqref{key_matrix_eq} shows that  $\bm E(\vartheta)$ is a principal submatrix of $$\bm O^\UH (\bm T + \vartheta (\bm T \bm A_1)(\bm T \bm A_1)^\UH  ) \bm O.$$ Hence, the eigenvalues of $\bm E(\vartheta)$ will interlace the eigenvalues of $\bm O^\UH (\bm T + \vartheta (\bm T \bm A_1)(\bm T \bm A_1)^\UH  ) \bm O$: 
		\begin{align}
		\lambda_1(\bm T + \vartheta (\bm T \bm A_1)(\bm T \bm A_1)^\UH &\geq \lambda_1(\bm E(\vartheta)) \nonumber \\ &\geq  \lambda_2(\bm T + \vartheta (\bm T \bm A_1)(\bm T \bm A_1)^\UH \nonumber \\&\geq \lambda_2(\bm E(\vartheta)).
		\label{interlacing_2}
		\end{align}
		Combining \eqref{interlacing_1} and \eqref{interlacing_2}, one obtains
		\begin{align*}
		\lambda_2(\bm E(\vartheta)) \leq T_{(1)}, \; \lambda_1(\bm E(\vartheta)) \geq T_{(2)}.
		\end{align*}
		This proves statement (1) in the lemma.
		This means that $\bm E(\vartheta)$ has atmost one eigenvalue bigger than $T_{(1)}$. If $\lambda_1(\bm E(\vartheta)) \leq T_{(1)}$, then it has no outlying eigenvalue, if $\lambda_1(\bm E(\vartheta)) > T_{(1)}$, it has exactly  one. We call this eigenvalue an outlying eigenvalue for reasons that will be clear later.
		\item The outlying eigenvalue of $\bm E(\vartheta)$ (if it exists) is a root of the characteristic polynomial:
		\begin{align*}
		&\det(\lambda \bm I - \bm E(\vartheta))  = \\&\hspace{1cm} \det(\lambda I - \bm T) \cdot(\vartheta + (1-\lambda \vartheta + a_m\vartheta) Q_m(\lambda)).
		\end{align*}
		Since this root lies in $(T_{(1)},\infty)$, it must be a root of: 
		\begin{align} \label{eq: Q hat root}
		Q_m(\lambda) & = \frac{1}{\lambda - a_m - 1/\vartheta}, \; \lambda > T_{(1)}.
		\end{align}
		Observing that:
		\begin{align*}
		    \lambda > T_{(1)} & \implies Q_m(\lambda) > 0, \\
		    \lambda > a_m + 1/\vartheta & \implies (\lambda - a_m - 1/\vartheta)^{-1} > 0,
		\end{align*} we conclude the outlying eigenvalue is the unique solution (if it exists) to: 
		\begin{align*}
		Q_m(\lambda) = \frac{1}{\lambda - a_m - 1/\vartheta}, \; \lambda > \max(a_m+1/\vartheta, T_{(1)}).
		\end{align*}
		This proves statement (2).
		\item Finally, we observe that $\bm E(\vartheta)$ is a positive semidefinite matrix for all $\vartheta \geq 0$, which shows $\lambda_{m-1}(\bm E(\vartheta)) \geq 0$. Also, we have $\lambda_1(\bm E(\vartheta)) \leq \| \bm E (\vartheta)\|\leq \|\bm B\|^2 \|\bm T + \vartheta \bm T \bm A_1 (\bm T \bm A_1)^\UH \|$. Note that $\|\bm B\| \leq 1$ and $\|\bm T\| \leq 1$ and $\|\bm T \bm A_1 (\bm T \bm A_1)^\UH\| = \bm A_1^\UH \bm T^2 \bm A_1 \leq T_{(1)}^2 \leq 1$. Hence, by the triangle inequality we have $\lambda_1(\bm E(\vartheta)) \leq 1 + \vartheta$. This proves statement (3) of the lemma.
	\end{enumerate}
\end{proof}

The following lemma analyzes the concentration of the function $Q_m(\lambda)$ to the deterministic function $Q(\lambda)$.

\begin{lemma} Suppose $ \frac{m}{n} = \delta $.  For a Lipschitz function $\T$  whose range is in $[0,1]$, there exists an event of probability 1, on which the following three statements hold:
	\begin{enumerate}		
		\item   $  \frac{1}{m} \sum_{i=1}^m \delta_{T_i} \explain{d}{\rightarrow} \mathcal{L}_T$,
		\item $Q_m(\lambda) \rightarrow Q(\lambda) \quad \forall \; \lambda \in (1, \infty)$,
		\item $a_m \rightarrow \E |Z|^2 T$.
	\end{enumerate}
	\label{concentration_result}
\end{lemma}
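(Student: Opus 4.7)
My plan is to pass from the Haar-column model to an i.i.d.\ Gaussian surrogate and then apply the strong law of large numbers (SLLN), using Lipschitz continuity of $\T$ to control the residual error. I would represent $\bm A_1 \explain{d}{=} \bm g/\|\bm g\|$ with $\bm g = (g_1,\dots,g_m) \sim \cgauss{\bm 0}{\bm I_m}$, so that $g_1,\dots,g_m$ are i.i.d.\ $\cgauss{0}{1}$. Since $\bm x_\star = \sqrt n\,\bm e_1$, this gives $T_i = \T\bigl(|g_i|/\sqrt\delta\cdot\sqrt m/\|\bm g\|\bigr)$, and the SLLN yields $\|\bm g\|^2/m \to 1$ a.s., so $\epsilon_m := \sqrt m/\|\bm g\| - 1 \to 0$ a.s. The key step is to introduce the i.i.d.\ surrogate $\tilde T_i := \T(|g_i|/\sqrt\delta)$, each distributed as $\mathcal L_T$. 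Lipschitz continuity of $\T$ (with constant $L$) then gives the pointwise bound $|T_i - \tilde T_i| \le (L/\sqrt\delta)\,|g_i|\,|\epsilon_m|$.

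Items (2) and (3) follow from a common factorization
$$a_m = \frac{m}{\|\bm g\|^2}\cdot \frac{1}{m}\sum_{i=1}^m |g_i|^2 T_i,\qquad Q_m(\lambda) = \frac{m}{\|\bm g\|^2}\cdot \frac{1}{m}\sum_{i=1}^m \frac{|g_i|^2}{\lambda - T_i},$$
whose prefactor tends to $1$ a.s. Swapping $T_i$ for $\tilde T_i$ inside the averages incurs errors controlled by the Lipschitz bound: for $a_m$ the error is $O(|\epsilon_m|\cdot m^{-1}\sum_i |g_i|^3)$; for $Q_m(\lambda)$ the inequality $(\lambda - T_i)(\lambda - \tilde T_i) \ge (\lambda - 1)^2 > 0$ (valid because $T_i,\tilde T_i\in[0,1]$ and $\lambda > 1$) yields an error of the same form, multiplied by $(\lambda-1)^{-2}$. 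Since $m^{-1}\sum_i |g_i|^3 \to \E|Z|^3 < \infty$ by SLLN, both errors vanish a.s. The unperturbed i.i.d.\ averages $m^{-1}\sum_i |g_i|^2 \tilde T_i$ and $m^{-1}\sum_i |g_i|^2/(\lambda - \tilde T_i)$ converge by SLLN to $\E|Z|^2 T$ and $Q(\lambda) := \E|Z|^2/(\lambda - T)$ respectively, yielding (3) and the pointwise version of (2). For simultaneity over all $\lambda \in (1,\infty)$, I would establish pointwise convergence on a countable dense subset of $(1,\infty)$ on a probability-one event, then use monotonicity of $Q_m$ (decreasing on $(1,\infty)$ since $T_{(1)} \le 1$) together with continuity of the limit $Q$ to extend to every $\lambda > 1$.

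For (1), for any bounded Lipschitz $f$ on $\mathbb R$ the same Lipschitz trick yields
$$\left|\frac{1}{m}\sum_i f(T_i) - \frac{1}{m}\sum_i f(\tilde T_i)\right| \le \|f\|_{\mathrm{Lip}}\cdot \frac{L}{\sqrt\delta}\cdot|\epsilon_m|\cdot \frac{1}{m}\sum_i |g_i| \;\to\; 0\text{ a.s.},$$
while $m^{-1}\sum_i f(\tilde T_i) \to \int f\,d\mathcal L_T$ by SLLN. Testing against a countable bounded-Lipschitz family that separates Borel probability measures upgrades this to weak convergence $m^{-1}\sum_i \delta_{T_i} \explain{d}{\rightarrow} \mathcal L_T$ almost surely. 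Intersecting the countably many probability-one events appearing in (1)--(3) places all three conclusions on a single probability-one event. The only mild subtlety, which is precisely what the surrogate addresses, is that $T_i$ depends on the whole vector $\bm g$ through the normalization $\|\bm g\|$, so the raw summands in $Q_m$ and $a_m$ are not i.i.d.\ and the SLLN cannot be applied directly; the Lipschitz bound $|T_i - \tilde T_i| \lesssim |g_i||\epsilon_m|$ makes this non-independence negligible.
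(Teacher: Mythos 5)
Your proposal is correct and follows the same overall strategy as the paper: realize $\bm A_1 \overset{d}{=} \bm g/\|\bm g\|$ with $\bm g$ i.i.d.\ complex Gaussian, introduce the i.i.d.\ surrogate $\tilde T_i = \T(|g_i|/\sqrt\delta)$, use Lipschitz continuity of $\T$ together with $\sqrt m/\|\bm g\| \to 1$ a.s.\ to control $|T_i - \tilde T_i|$, apply the SLLN to the surrogate averages, and pass to all $\lambda\in(1,\infty)$ via a countable dense set, monotonicity of $Q_m$, and continuity of $Q$. The one genuine difference is the error-control mechanism. The paper bounds $\sup_{\lambda\in[1+1/k,\,k]}|Q_m(\lambda)-\tilde Q_m(\lambda)|$ by decomposing into two pieces, bounding $\max_i |\T(|z_i|/\sqrt\delta) - \T(\sqrt n |z_i|/\|\bm z\|)|$ via a $\sqrt{6\log m}$ truncation of $\sup_i|z_i|$ on an explicit event, and then closing with Borel--Cantelli using Gaussian and $\chi^2$ tail bounds. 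You instead absorb the weight $|g_i|$ into the sum and control $m^{-1}\sum_i |g_i|^3$ directly via the SLLN, which trades the paper's quantitative uniform bound on compacts for a cleaner pointwise argument; this is a legitimate simplification since the lemma only asserts pointwise convergence, and the monotonicity-plus-dense-set argument (which you also use, applied directly to $Q_m$ rather than only to $\tilde Q_m$ as the paper does) upgrades pointwise to simultaneous. For item (1), the paper argues via Glivenko--Cantelli for the empirical CDF of $\sqrt m|A_{1i}|$ whereas you test against a countable convergence-determining family of bounded Lipschitz functions; since $T_i$ lives in the compact set $[0,1]$ this is equally valid (a countable dense subset of $C([0,1])$ suffices). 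Both routes are sound; yours is marginally more elementary, at the cost of not producing the explicit quantitative rate that the Borel--Cantelli argument yields.
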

In the above equations,   $Z \sim \cgauss{0}{1}$, and $T = \T(|Z|/\sqrt{\delta})$. Furthermore, $\mathcal{L}_T$ denotes the law of the random variable $T$, and  
\begin{align*}
Q(\lambda) & = \E \left[ \frac{|Z|^2}{\lambda - T} \right].
\end{align*}
\begin{proof}
It is sufficient to show each item holds almost surely.
\begin{enumerate}
\item The argument for this part is a minor modification of the argument sketched in \cite{spruill2007asymptotic}. To prove statement (1) it suffices to show that 
\begin{equation} \label{emprical convergence of A1i}
\frac{1}{m} \sum_{i = 1}^n \delta_{\sqrt{m} {\envert{ A_{i1}}}} \xrightarrow{d} Z,
\end{equation}
almost surely.  Because if we have \eqref{emprical convergence of A1i}, then for every bounded continuous function $f$,
\begin{equation*}
f \intoo{ \T \intoo{ \sqrt{n} \envert{A_{i 1}} }} = g \intoo{\sqrt{m}  \envert{A_{1 i}}},
\end{equation*}
where $g(x) = f (\T(\frac{\envert{x}}{\sqrt{\delta}}))$ is a bounded continuous function as well.  Hence by \eqref{emprical convergence of A1i},
\begin{equation*}
\frac{1}{m} \sum_{i = 1}^m {f(T_i)} \to \e{g(Z)} = \e{ f \intoo{\T \intoo{\frac{Z}{\sqrt{\delta}}}} },
\end{equation*}
which implies
$ \frac{1}{m} \sum_{i = 1}^m \delta_{T_i} \xrightarrow{d} \mathcal{L}_T$.

To show \eqref{emprical convergence of A1i}, note that $\bf A_1$ has the same distribution as $  \frac{\bf z}{\enVert{\bf z}}$, where $ \bf z = \intoo{z_1, ..., z_m}$, and $z_i \overset{i.i.d.}{\sim}  \mathcal{CN} (0 ,1) $.  Let $\Phi$ denote the cumulative distribution function of a standard normal random variable and define 
\begin{eqnarray*}
F_m(t) &\Mydef& \frac{1}{m} \sum\limits_{i = 1}^m \mathbf{1} \intoo{\sqrt{m} \envert{A_{1i}} \leq t}, \nonumber \\
G_m(t) &\Mydef& \frac{1}{m} \sum\limits_{i = 1}^m \mathbf{1} \intoo{z_i \leq t}. \nonumber
\end{eqnarray*}
 Then, we have
\begin{equation} \label{eq: F_m same dist as G_m}
F_m(t) \overset{d}{=} G_m \intoo{t \frac{\enVert{\bf z}}{\sqrt{m}}}.
\end{equation}
Moreover,
\begin{align*} \label{eq: G_m to phi}
&G_m \intoo{t \frac{\enVert{\bf z}}{\sqrt{m}}} - \Phi(t) = \\& \hspace{0.8cm} G_m \intoo{t \frac{\enVert{\bf z}}{\sqrt{m}}} - \Phi \intoo{t \frac{\enVert{\bf z}}{\sqrt{m}}} + \Phi \intoo{t \frac{\enVert{\bf z}}{\sqrt{m}}} - \Phi(t)
\\& \hspace{1cm}\xrightarrow{a.s.} 0 + 0.
\end{align*}
$G_m(t \enVert{\bf z}) - \Phi(t \enVert{\bf z})$ goes to $0$ almost surely by Glivenko-Cantelli lemma. Furthermore, since $$\frac{\enVert{\bf z}}{\sqrt{m}} \xrightarrow{a.s.} 1, $$ and $\Phi$ is a continuous function we conclude that $$\Phi \intoo{t \frac{\enVert{\bf z}}{\sqrt{m}}} - \Phi(t) \overset{a.s.}{\rightarrow} 0.$$  Hence,
\begin{equation*}
F_m(t) \to \Phi(t),
\end{equation*}
almost surely which yields \eqref{emprical convergence of A1i}.
\item We now focus on the proof of statement (2). Let $$\mathcal{C}_k \Mydef \intcc{1 + \frac{1}{k}, k}, \quad k \in \mathbb{N}.$$  We will show that
\begin{equation} \label{convergence in C_k}
Q_m(\lambda) \to Q(\lambda) \quad \forall \; \lambda \; \in \; \mathcal{C}_k, 
\end{equation}
  almost surely. This means there is a set $\mathcal{C}_k^\prime$, with measure $0$, out of which we have the convergence for all $ \lambda \in \mathcal{C}_k$. If we define $ \mathcal{C}^\prime \Mydef \bigcup \limits_{k = 1}^\infty \mathcal{C}_k^\prime$, then $Q_m(\lambda) \to Q(\lambda) \quad \forall \lambda \in \intoo{1, \infty}$ out of $\mathcal{C'}$ and clearly $ \p{\mathcal{C'}} = 0 $.

First note that $ \bf A_1 \overset{d}{=} \frac{\bf z}{\enVert{\bf z}} $, where $$ \bf z = \intoo{z_1, ..., z_m}, \quad z_i \overset{i.i.d.}{\sim} \mathcal{CN} (0 ,1). $$  Define
\begin{equation} \label{def: tild Q}
\tilde{Q}_m(\lambda) \Mydef \frac{1}{m} \sum_{i = 1}^m \frac{\envert{z_i}^2}{\lambda - \T \intoo{\frac{\envert{z_i}}{\sqrt{\delta}}}}.
\end{equation}

Note that  for a fixed $\lambda$ we have $ \tilde{Q}_m(\lambda) \to Q(\lambda) $ almost surely by the strong law of large numbers.  Since $ \tilde{Q}_m (\lambda) $ is a decreasing function in $\lambda$ and we have $ \tilde{Q}_m (\lambda) \to Q(\lambda) \quad \forall \lambda \in \mathcal{C}_k \cap \mathbb{Q} $  almost surely, we obtain $\tilde{Q}_m (\lambda) \to Q(\lambda)$ for all $ \lambda \in \mathcal{C}_k$ with probability $1$. Hence, it suffices to show under an event that holds with probability 1,
\begin{equation} \label{Q - tilde Q to 0}
Q_m(\lambda) - \tilde{Q}_m (\lambda)  \to 0 \quad \forall \lambda \in \mathcal{C}_k.
\end{equation}

To prove \eqref{Q - tilde Q to 0}, we will find a sequence $\tau_m$ such that $\tau_m \rightarrow 0$ as $m \rightarrow \infty$, and,
\begin{align*}
    \sum_{m \geq 1} \p{\sup_{\lambda \in \mathcal{C}_k} \envert{ Q_m(\lambda) - \tilde{Q}_m (\lambda) } > \tau_m} < \infty.
\end{align*} With this, Borel-Cantelli lemma yields that event $$E = \cbr{ \sup\limits_{\lambda \in \mathcal{C}_k} \envert{Q_m(\lambda) - \tilde{Q}_m (\lambda)} > \tau_m \; \text{infinitely often}  }$$ has measure $0$.  Out of the event $E$ we have \eqref{Q - tilde Q to 0} as it was desired.

Define the events:
\begin{align*}
    E_1 &\triangleq \cbr{ \sup\limits_{i \leq m } \envert{z_i} \leq \sqrt{ 6 \log m} }, \\
    E_{2, \epsilon} &\triangleq \cbr{ \envert{\frac{\enVert{\bf z}^2}{m} - 1} \leq \epsilon },
\end{align*}  where $\epsilon$ is parameter we will set later. Note that,
\begin{align*}
    &\envert{ Q_m(\lambda) - \tilde{Q}_m (\lambda) }  \leq \\ & \hspace{1.5cm}
\sum_{i = 1}^m \frac{\envert{z_i}^2}{\enVert{\bf z}^2} \envert{\frac{\frac{\enVert{\bf z}^2}{m}}{\lambda - \T \intoo{\frac{\envert{z_i}}{\sqrt{\delta}}}} - \frac{1}{\lambda - \T \intoo{\frac{\sqrt{n}}{\enVert{\bf z}} \envert{z_i} }}} \\&  \hspace{1.5cm} \leq\mathsf{I} + \mathsf{II},
\end{align*}
where we defined the terms $\mathsf{I}, \mathsf{II}$ as:
\begin{align*}
    \mathsf{I} & = \left| \frac{\|\bm z\|^2}{m}  - 1\right| \cdot  \sum_{i=1}^m \frac{|z_i|^2}{\|\bm z\|^2}  \cdot \left| \frac{1}{\lambda - \T \big( \frac{|z_i|}{\sqrt{\delta}}\big)} \right| \\
    \mathsf{II} & = \sum_{i=1}^m \frac{|z_i|^2}{\|\bm z\|^2} \cdot  \frac{\left|\T \big( \frac{|z_i|}{\sqrt{\delta}}\big) -  \T \big( \frac{\sqrt{n} |z_i|}{\|\bm z\|}\big)\right|}{ \left|\lambda - \T \big( \frac{|z_i|}{\sqrt{\delta}}\big) \right| \cdot \left| \lambda - \T \big( \frac{\sqrt{n} |z_i|}{\|\bm z\|}\big)\right|} . 
\end{align*}
Using the fact that $\bm z \in E_1 \cap E_{2,\epsilon}$ and $\lambda \in \mathcal{C}_k$, we have,
\begin{align*}
    \mathsf{I} & \leq k\epsilon  , \\
    \mathsf{II} & \leq k^2 \cdot \max_{i \leq n} \left|\T \bigg( \frac{|z_i|}{\sqrt{\delta}}\bigg) -  \T \big( \frac{\sqrt{n} |z_i|}{\|\bm z\|}\big)\right|.
\end{align*}
Observe that, on the event $E_1 \cap E_{2,\epsilon}$,
\begin{align*}
    \envert{ \frac{\envert{z_i}}{\sqrt{\delta}} - \frac{\sqrt{n}}{\enVert{ \bf z}} \envert{z_i}} & \leq \frac{\envert{z_i}}{\sqrt{\delta}} \envert{ 1 - \frac{\sqrt{m}}{\enVert{ \bf z}} } \\
    & \leq \sqrt{6\log(m)} \cdot  \envert{ 1 - \frac{\sqrt{m}}{\enVert{ \bf z}} } \\
    & \leq \sqrt{6\log(m)} \cdot \envert{ 1 - \frac{{m}}{\enVert{ \bf z}^2} }  \\
& \leq \sqrt{6\log(m)} \cdot \frac{\epsilon}{1 - \epsilon}.
\end{align*}
Since $\T$ was assumed to be Lipchitz, 
\begin{align*}
     \mathsf{II} & \leq k^2 \cdot \max_{i \leq n} \left|\T \bigg( \frac{|z_i|}{\sqrt{\delta}}\bigg) -  \T \big( \frac{\sqrt{n} |z_i|}{\|\bm z\|}\big)\right| \\
     & \leq k^2 \cdot \|\T\|_\mathsf{Lip} \cdot \sqrt{6\log(m)} \cdot \frac{\epsilon}{1-\epsilon},
\end{align*}
where $\|\T\|_{\mathsf{Lip}}$ denotes the Lipchitz constant of $\T$.
% Under the event $E_1 \cap E_{2, \epsilon}$, we have
% \begin{align*}
% & \leq \frac{\log m}{m (1 - \epsilon)} \sum_{i = 1}^m \envert{ \frac{\lambda \intoo{\frac{\enVert{\bf z}^2}{m} - 1} + \T \intoo{\envert{z_i}/\sqrt{\delta}} - \frac{\enVert{\bf z}^2}{m} \T \intoo{\frac{\sqrt{n}}{\enVert{ \bf z}} \envert{z_i}} }{\intoo{\lambda - \T \intoo{\envert{z_i}/\sqrt{\delta}} } \intoo{\lambda - \T \intoo{\frac{\sqrt{n}}{\enVert{ \bf z}} \envert{z_i}}} } } 
% \\ & \leq
% \frac{k^2 \log m}{m (1 - \epsilon)} \sum_{i = 1}^m \lambda \epsilon + \envert{\T \intoo{\envert{z_i}/\sqrt{\delta}} - \T \intoo{\frac{\sqrt{n}}{\enVert{ \bf z}} \envert{z_i}}} + \envert{1 - \frac{\enVert{\bf z}^2}{m}} \T \intoo{\frac{\sqrt{n}}{\enVert{ \bf z}} \envert{z_i}}
% \\ & \leq
% \frac{k^2 \log m}{m (1 - \epsilon)} \sum_{i = 1}^m \lambda \epsilon + c_{\T} \envert{ \frac{\envert{z_i}}{\sqrt{\delta}} - \frac{\sqrt{n}}{\enVert{ \bf z}} \envert{z_i}} + \epsilon
% \\ & \leq \label{proof: Q convergence 1}
% \frac{k^2 \log m}{m (1 - \epsilon)} \sum_{i = 1}^m (k + 1) \epsilon + \frac{c_{\T} \envert{z_i}}{\sqrt{\delta}} \envert{ 1 - \frac{\sqrt{m}}{\enVert{ \bf z}} } 
% \end{align*}
% Note that 
% \begin{align}
% \envert{ 1 - \frac{\sqrt{m}}{\enVert{ \bf z}} }  \leq
% \envert{ 1 - \frac{{m}}{\enVert{ \bf z}^2} } 
% \leq
% \frac{\epsilon}{1 - \epsilon}.
% \end{align}
Hence, when $m \geq e^2$, setting $\epsilon  = \frac{1}{\log(m)} \leq 0.5$, we obtain, on the event $E_1 \cap E_{2,\epsilon}$
\begin{align} \label{Q_m - Q less than epsilon log}
\envert{ Q_m(\lambda) - \tilde{Q}_m (\lambda) } & \leq \tau_m, \; \forall \; \lambda \; \in \; \mathcal{C}_k.
\end{align}
where
\begin{align*}
     \tau_m & = \frac{k}{\log(m)} + \frac{2k^2 \cdot \|\T \|_{\mathsf{Lip}}}{\sqrt{\log(m)}}.
\end{align*}
Note that $\tau_m \rightarrow 0$ as $m \rightarrow \infty$ as required. And, 
\begin{align*}
    &\p{\sup_{\lambda \in \mathcal{C}_k} \envert{ Q_m(\lambda) - \tilde{Q}_m (\lambda) } > \tau_m}  \\&\hspace{3cm}\leq \p{{E}_1^c} + \p{E_{2,\epsilon}^c} \\
    & \hspace{3cm} \leq  2 \cdot m^{-2} + 2 e^{-\frac{m}{8\log^2(m)}}, 
\end{align*}
where the last step follows from standard bounds on the tail Gaussian random variables and $\chi^2$ random variables. 
In particular, we have,
\begin{align*}
    \sum_{m \geq 1} \p{\sup_{\lambda \in \mathcal{C}_k} \envert{ Q_m(\lambda) - \tilde{Q}_m (\lambda) } > \tau_m} < \infty,
\end{align*}
as required. 
\item The proof is similar to the proof of the second statement. Hence, we skip the details. Note that if we define
\begin{equation*}
W_m = \sum_{i = 1}^n \envert{A_{1 i}}^2 \T(\envert{A_{1 i}} \sqrt{n}),
\end{equation*}
then it again converges under the event $E_1 \cap E_{2, \epsilon}$, defined in the proof of statement (2).
\end{enumerate}
\end{proof}
The next lemma analyzes the properties of the limiting fixed point equation $Q(\lambda) = (\lambda - \E |Z|^2 T - 1/\vartheta)^{-1}$. 	Define the critical value $\vartheta_c$ as: 
	\begin{align*}
	\vartheta_c & \Mydef \left(1 - \left( \E \left[ \frac{|Z|^2}{1-T} \right] \right)^{-1} - \E[|Z|^2 T]  \right)^{-1} \geq 0.
	\end{align*}

\begin{lemma} \label{Q_eq_population}
	Consider the fixed point equation (in $\lambda$)
	\begin{align}\label{eq:fixedpoint_as}
	\lambda - \E[|Z|^2 T] - 1/\vartheta & = \frac{1}{\E \left[ \frac{|Z|^2}{\lambda - T} \right]},
	\end{align}
	on the domain:
	\begin{align*}
	    \lambda >  \max(1, \E[ |Z|^2 T]+1/\vartheta).
	\end{align*}
We have
	\begin{enumerate}
		\item If $\vartheta > \vartheta_c$, then the above equation has exactly 1 solution, denoted by $\lambda = \theta(\vartheta)$. Furthermore, 
		\begin{align*}
		&\lambda - \E[|Z|^2 T] - 1/\vartheta  >  \frac{1}{\E \left[ \frac{|Z|^2}{\lambda - T} \right]} \\  & \hspace{2cm} \forall \; \lambda \in \left( \max(1, \E[ |Z|^2 T]+1/\vartheta), \theta(\vartheta) 
		\right), \\
		&\lambda - \E[|Z|^2 T] - 1/\vartheta  < \frac{1}{\E \left[ \frac{|Z|^2}{\lambda - T} \right]} \; \forall \; \lambda \in \left(\theta(\vartheta),\infty  \right).
		\end{align*}
		Furthermore, we have $\theta(\vartheta)$ is an increasing function of $\vartheta$ and $\lim_{\vartheta \rightarrow \infty} \theta(\vartheta) = \infty$.
		\item If $\vartheta \leq \vartheta_c$, then the equation has no solutions. For any $\vartheta \leq \vartheta_c$, we define $\theta(\vartheta) = 1$.
	\end{enumerate}

\end{lemma}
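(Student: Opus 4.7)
\medskip

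\noindent\textbf{Proof proposal.} The plan is to reduce the statement to a monotonicity analysis of the function
\[
h(\lambda) \;\Mydef\; f(\lambda) - g(\lambda), \qquad f(\lambda) \Mydef \lambda - \E[|Z|^2 T] - \tfrac{1}{\vartheta}, \qquad g(\lambda) \Mydef \frac{1}{\E\!\left[\frac{|Z|^2}{\lambda - T}\right]},
\]
on the domain $D_\vartheta \Mydef \bigl(\max(1,\E[|Z|^2 T]+1/\vartheta),\infty\bigr)$. The key structural fact is that $h$ is \emph{strictly decreasing} on $D_\vartheta$. Indeed, $f'(\lambda) = 1$, and differentiating under the expectation yields $g'(\lambda) = \E\!\bigl[|Z|^2/(\lambda - T)^2\bigr]/\bigl(\E\!\bigl[|Z|^2/(\lambda - T)\bigr]\bigr)^2$. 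Applying Cauchy–Schwarz to the pair $(|Z|, |Z|/(\lambda - T))$ and using $\E|Z|^2 = 1$ gives
\[
\bigl(\E\!\bigl[|Z|^2/(\lambda-T)\bigr]\bigr)^2 \;\le\; \E|Z|^2 \cdot \E\!\bigl[|Z|^2/(\lambda-T)^2\bigr] \;=\; \E\!\bigl[|Z|^2/(\lambda-T)^2\bigr],
\]
so $g'(\lambda) \ge 1$, with strict inequality because Assumption \ref{regularity_assumption}(3) prevents $T$ (hence the integrand) from being constant. Thus $h' < 0$ on $D_\vartheta$, so $h$ has at most one zero.

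\medskip

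Next I would evaluate $h$ at the two endpoints of $D_\vartheta$. Expanding $g(\lambda) = \lambda/\bigl(1 + \E[|Z|^2 T]/\lambda + O(1/\lambda^2)\bigr)$ as $\lambda \to \infty$ gives $g(\lambda) = \lambda - \E[|Z|^2 T] + O(1/\lambda)$, hence $h(\lambda) \to -1/\vartheta < 0$. For the left endpoint, split into two cases. If $\E[|Z|^2 T] + 1/\vartheta \ge 1$, then $f$ vanishes at the left endpoint and $g$ is positive there, so $h < 0$ on all of $D_\vartheta$ by monotonicity, giving no solution. Otherwise $D_\vartheta = (1,\infty)$ and a direct computation yields
\[
\lim_{\lambda \downarrow 1} h(\lambda) \;=\; \Bigl(1 - \E[|Z|^2 T] - \tfrac{1}{\vartheta}\Bigr) - \frac{1}{\E\!\bigl[|Z|^2/(1-T)\bigr]} \;=\; \frac{1}{\vartheta_c} - \frac{1}{\vartheta},
\]
where the convention $1/\infty = 0$ handles the case $\E[|Z|^2/(1-T)] = \infty$. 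The Cauchy–Schwarz inequality $1 = (\E|Z|^2)^2 \le \E[|Z|^2(1-T)] \cdot \E[|Z|^2/(1-T)]$ confirms that $\vartheta_c \ge 1/(1 - \E[|Z|^2 T])$, so the boundary case $\E[|Z|^2 T] + 1/\vartheta \ge 1$ is contained in $\vartheta \le \vartheta_c$. Combining the two cases: $h(1^+) > 0$ iff $\vartheta > \vartheta_c$. By the intermediate value theorem together with strict monotonicity, this gives a unique solution $\theta(\vartheta)$ when $\vartheta > \vartheta_c$, and no solution when $\vartheta \le \vartheta_c$. The sign statements $f(\lambda) > g(\lambda)$ on $(\max(\dots),\theta(\vartheta))$ and $f(\lambda) < g(\lambda)$ on $(\theta(\vartheta),\infty)$ follow immediately from the strict decrease of $h$.

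\medskip

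For the qualitative properties of $\vartheta \mapsto \theta(\vartheta)$, implicit differentiation of the defining equation yields
\[
\frac{\diff \theta}{\diff \vartheta} \;=\; \frac{1/\vartheta^2}{g'(\theta) - 1},
\]
which is strictly positive since $g'(\theta) > 1$; hence $\theta(\vartheta)$ is strictly increasing on $(\vartheta_c,\infty)$. To show $\theta(\vartheta) \to \infty$, fix any $M > 1$ and observe that $h(M)$, viewed as a function of $\vartheta$, converges to $h_\infty(M) \Mydef (M - \E[|Z|^2 T]) - g(M)$ as $\vartheta \to \infty$; a second-order expansion of $g$ shows $h_\infty(M) \to 0^+$ as $M \to \infty$ (the correction term involves the weighted variance $\E[|Z|^2 T^2] - (\E[|Z|^2 T])^2 > 0$), and since $h_\infty$ is strictly decreasing with limit $0$ at infinity, $h_\infty(M) > 0$ for every finite $M$. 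Hence for $\vartheta$ sufficiently large, $h(M) > 0$, which forces $\theta(\vartheta) > M$.

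\medskip

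The main delicate point is the boundary case $\E[|Z|^2/(1-T)] = \infty$, where one has to interpret $g(1^+) = 0$ and $\vartheta_c = 1/(1-\E[|Z|^2 T])$ consistently with the convention of the preceding remark; once this is done, the calculation of $h(1^+)$ above proceeds unchanged. Beyond this, the argument is essentially a careful monotonicity analysis based on Cauchy–Schwarz, and I do not anticipate any other obstacles.
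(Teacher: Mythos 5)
Your proof is correct and follows essentially the same route as the paper: uniqueness via showing the derivative of $\lambda \mapsto 1/\E[|Z|^2/(\lambda-T)]$ exceeds $1$ (your direct Cauchy--Schwarz step is precisely the paper's Jensen argument after its cosmetic change of measure $\tilde{\E}[\cdot] = \E[|Z|^2\,\cdot\,]$), then endpoint comparison to decide existence. The only minor divergence is the proof that $\theta(\vartheta)\to\infty$: the paper argues by contradiction (a finite limit would force equality in Jensen, i.e.\ a degenerate $T$), whereas you fix $M$ and show $h(M)>0$ for large $\vartheta$; both are fine.
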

\begin{proof}
	The following change of measure simplifies some of the proofs: 
	\begin{align*}
	p(z) & \Mydef \frac{|z|^2}{\pi} \exp(-|z|^2), \\
	\tE[f(Z)] & \Mydef \int f(z) p(z) \diff z.
	\end{align*}
	Note that $p(z)$ is a proper probability density function since $\int p(z) \diff z = \E[|Z|^2] = 1$.  With this notation, \eqref{eq:fixedpoint_as} can be written as
	\begin{align*}
	\lambda - \tE[T] - 1/\vartheta & = \frac{1}{\tE \left[ \frac{1}{\lambda - T} \right]}, \; \lambda > \max(1, \tE[ T]+1/\vartheta).
	\end{align*}
	Define the random variable $G(\lambda)  = (\lambda-T)^{-1}$. Note that $G^\prime(\lambda) = - G^2(\lambda)$. Further, define
	\begin{align*}
	f(\lambda) & \Mydef \frac{1}{\tE \left[ G(\lambda) \right]}; \; \lambda \in [1,\infty).
	\end{align*}
	The first two derivatives of $f(\lambda)$ are
	\begin{align*}
	f^\prime(\lambda) &= \frac{\tE[G^2]}{\tE[G]^2}, \nonumber \\
	 f^{\prime \prime}(\lambda) &= -2 \cdot  \frac{\tE[G^3]\tE[G] - \tE[G^2]^2 }{\tE[G]^3}. 
	\end{align*}
	First, since $f^\prime(\lambda) \geq 0$, the function $f(\lambda)$ is increasing. By Jensen's Inequality $f^\prime(\lambda) \geq 1$. Since the equality holds if and only if $G$ is deterministic, and we have assumed that the support of $T$ is  $[0,1]$, we conclude that $f(\lambda)>1$. Noting that $G \geq 0$ and applying Chebychev's association inequality (See Fact \ref{chebychev_association}, Appendix \ref{analytic_appendix}) with $B=A=G$ and $f(a) = g(a) = a$ gives ${f}^{\prime\prime}(\lambda) \leq 0$. Hence $f(\lambda)$ is an increasing, concave function and ${f}^\prime(\lambda) > 1$. 
	
	Next, we claim that $f(\lambda) = \lambda - \tE[ T] - 1/\vartheta$ can have atmost one solution in $(1,\infty)$. To see this, let $\lambda_1$ be the first point at which the two curves intersect. Hence $f(\lambda_1) = \lambda_1 - \tE[T] - 1/\vartheta$. Furthermore $$f^\prime(\lambda) > 1 = \frac{\diff (\lambda - \tE[T] - 1/\vartheta)}{ \diff \lambda}. $$
	Hence there can be no other intersection point of the two curves after $\lambda_1$.
	
	Now consider the following two cases:
	
		\textit{Case 1: $\vartheta >  \vartheta_c$.}
		First note that since $(1-x)^{-1}$ is a convex function on $(-\infty,1]$, according to Jensen's Inequality
		\begin{align*}
		\tE \left[ \frac{1}{1-T} \right] \geq \frac{1}{1-\tE[T]} \geq 0.
		\end{align*}
		Hence,
		\begin{align*}
		\frac{1}{\vartheta_c}  = 1- \left( \tE \left[ \frac{1}{1-T} \right]  \right)^{-1} - \tE[T] \geq 0.
		\end{align*}
		This shows that $\vartheta_c \geq 0$. Furthermore, 
		\begin{align*}
		\vartheta > \vartheta_c \Longleftrightarrow (\lambda - \tE[T] - 1/\vartheta)_{\lambda = 1} > f(1).
		\end{align*}
		On the other hand, we can also compare the limiting behavior of $\lambda - \tE[T] - 1/\vartheta$ and $f(\lambda)$ as $\lambda \rightarrow \infty$. We have
		\begin{align*}
		\frac{\lambda - \tE[T] - 1/\vartheta}{ \lambda } & = 1 - \frac{\tE[T] + 1/\vartheta}{\lambda}, 
		\end{align*}
		and
		\begin{align*}
		\frac{f(\lambda)}{\lambda} &= \frac{1}{\tE \left[ \frac{1}{1 - T/\lambda} \right]} = \left( \tE \left[ \sum_{n=0}^\infty \left(\frac{T}{\lambda} \right)^n \right] \right)^{-1} \\
		& = \left(1 + \tE[T]/\lambda + o(1/\lambda) \right)^{-1}\\ &= 1 - \frac{\tE[T]}{\lambda} + o(\lambda^{-1}).
		\end{align*}
		Hence, $f(\lambda) > \lambda - \tE[T]-1/\vartheta$ for $\lambda$ large enough and $f(1) < 1 - \tE[T]-1/\vartheta$. Hence the functions $f(\lambda)$ and $1-\tE[T]-1/\vartheta$ intersect once in $(1,\infty)$. Finally note that,
		\begin{align*}
		\frac{1}{\vartheta} + \tE[T] &< \frac{1}{\vartheta_c} + \tE[T] = 1- \left( \tE \left[ \frac{1}{1-T} \right]  \right)^{-1} \\ &\leq 1.
		\end{align*}
		Hence $f(\lambda) = \lambda - \tE[T]-1/\vartheta$ has exactly one solution in $\lambda \geq \max(1,\tE[T]+1/\vartheta)$ as claimed. By the Implicit Function Theorem, we can compute
		\begin{align}
		\theta^\prime(\vartheta) & = \frac{1/\vartheta^2}{f^\prime(\theta(\vartheta))-1} \geq 0.
		\end{align}
		Hence $\theta(\vartheta)$ is an increasing function of $\vartheta$. Finally, we verify that $\lim_{\vartheta \rightarrow \infty} \theta(\vartheta) = \infty$. Suppose that this is not the case, i.e. $\theta(\vartheta) \rightarrow \theta_\infty < \infty$ as $\vartheta \rightarrow \infty$. Recalling the fixed point characterization of $\theta(\vartheta)$, we obtain that $\theta_\infty$ satisfies the fixed point equation
		\begin{align*}
		\theta_\infty - \tE[T]  = \frac{1}{\tE\left[ \frac{1}{\theta_\infty - T}\right]}.
		\end{align*}
		This means that Jensen's Inequality applied to the strictly convex function $(\theta_\infty - t)^{-1}$ should be tight. This means under the tilted measure ($\tE$), $T$ is deterministic. This is not possible since we have assumed that $T$ is supported on $[0,1]$. 
		
		\textit{Case 2: $\vartheta \leq \vartheta_c$} As in Case 1 we argue (this time with the opposite conclusion) that $$\vartheta \leq \vartheta_c \implies f(1) \geq (\lambda-\tE[T]-1/\vartheta)_{\lambda = 1}$$
	Furthermore, since ${f}^\prime(\lambda) > \frac{\diff (\lambda - \tE[T]-1/\vartheta)}{\diff \lambda} = 1,$
	$f(\lambda) = \lambda - \tE[T] - 1/\vartheta$ has no solution in $(1,\infty)$.
\end{proof}
Combining the above sequence of lemmas, we obtain the following proposition about the spectrum of the matrix $\bm E(\vartheta)$.
\begin{proposition} Let $\bm E(\vartheta)  = \bm B^\UH (\bm T + \vartheta \bm T \bm A_1 (\bm T \bm A_1)^\UH)) \bm B$. Then, there exists an event of probability 1, on which we have,
	\begin{enumerate}
		\item $\mu_{\bm E(\vartheta)} \explain{d}{\rightarrow} \mathcal{L}_T$.
		\item If $\vartheta \leq \vartheta_c$, $\sigma(\bm E(\vartheta)) \subset [0,1]$.
		\item If $\vartheta > \vartheta_c$, then $\lambda_i(\bm E(\vartheta)) \in [0,1] \; \forall \; i \geq 2$, and,
		\begin{align*}
		\lambda_1(\bm E(\vartheta)) \explain{a.s.}{\rightarrow} \theta(\vartheta),
		\end{align*}
		where $\theta(\vartheta)$ is the unique solution to the equation (in $\lambda$): 
		\begin{align*}
		\lambda - \E[|Z|^2 T] - 1/\vartheta & = \frac{1}{\E \left[ \frac{|Z|^2}{\lambda - T} \right]}, 
		\end{align*}
		in the domain:
		\begin{align*}
		    \lambda >  \max(1, \E[ |Z|^2 T]+1/\vartheta).
		\end{align*}
	\end{enumerate}
	\label{E_spectrum_proposition}
\end{proposition}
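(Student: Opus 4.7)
The plan is to work on the almost-sure event supplied by Lemma~\ref{concentration_result}, on which $Q_m(\lambda) \to Q(\lambda)$ for every $\lambda \in (1,\infty)$, $a_m \to \E|Z|^2 T$, and $\frac{1}{m}\sum_{i}\delta_{T_i}$ converges weakly to $\mathcal{L}_T$; since $\mathcal{L}_T$ has a density and $\sup\T = 1$, this also forces $T_{(1)} \to 1$. Part (1) is then immediate from Lemma~\ref{spectrum_E_empirical}(1): the bracketing $T_{(i+1)} \le \lambda_i(\bm E(\vartheta)) \le T_{(i-1)}$ yields a Kolmogorov distance of order $1/m$ between $\mu_{\bm E(\vartheta)}$ and the empirical measure of the $T_i$'s, so both share the weak limit $\mathcal{L}_T$.

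For the outlier statement (Parts (2) and (3)), I would combine Lemma~\ref{spectrum_E_empirical}(2) with Lemma~\ref{Q_eq_population}. The finite-sample lemma tells us that an eigenvalue of $\bm E(\vartheta)$ strictly exceeds $T_{(1)}$ if and only if the equation $Q_m(\lambda) = 1/(\lambda - a_m - 1/\vartheta)$ admits a (necessarily unique) root on $\lambda > \max(T_{(1)}, a_m + 1/\vartheta)$, and in that case this root equals $\lambda_1(\bm E(\vartheta))$. Introducing $\Phi_m(\lambda) \triangleq (\lambda - a_m - 1/\vartheta) Q_m(\lambda) - 1$ and $\Phi(\lambda) \triangleq (\lambda - \E|Z|^2 T - 1/\vartheta) Q(\lambda) - 1$, on our event we have $\Phi_m \to \Phi$ pointwise on $(1,\infty)$, and Lemma~\ref{Q_eq_population} completely describes the zero set of $\Phi$.

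In the supercritical case $\vartheta > \vartheta_c$, Lemma~\ref{Q_eq_population}(1) supplies a unique root $\theta(\vartheta) > \max(1, \E|Z|^2 T + 1/\vartheta)$ with $\Phi < 0$ to its left and $\Phi > 0$ to its right. For small $\epsilon > 0$, pointwise convergence gives $\Phi_m(\theta(\vartheta) - \epsilon) < 0 < \Phi_m(\theta(\vartheta) + \epsilon)$ eventually, while $T_{(1)} \to 1 < \theta(\vartheta) - \epsilon$ ensures both evaluation points lie in the admissible interval; the intermediate value theorem then produces a root of $\Phi_m$ within $\epsilon$ of $\theta(\vartheta)$, which by Lemma~\ref{spectrum_E_empirical}(2) must be $\lambda_1(\bm E(\vartheta))$. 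Sending $\epsilon \downarrow 0$ yields $\lambda_1(\bm E(\vartheta)) \to \theta(\vartheta)$. In the subcritical case $\vartheta < \vartheta_c$, Lemma~\ref{Q_eq_population}(2) gives $\Phi(\lambda) < 0$ for every $\lambda > 1$; combined with the deterministic bound $\lambda_1(\bm E(\vartheta)) \le 1 + \vartheta$ from Lemma~\ref{spectrum_E_empirical}(3), it suffices to rule out a root of $\Phi_m$ on each compact interval $[1+\eta, 1+\vartheta]$, which follows from locally uniform convergence of the decreasing family $Q_m$ on compact subsets of $(1,\infty)$. Hence $\lambda_1(\bm E(\vartheta)) \le T_{(1)} \to 1$ eventually, and PSD-ness of $\bm E(\vartheta)$ (since $\bm T \succeq 0$ and the rank-one perturbation is PSD) closes the containment $\sigma(\bm E(\vartheta)) \subset [0,1]$.

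The main obstacles I anticipate are twofold: (i) obtaining locally uniform control on $Q_m$ over intervals approaching $\lambda = 1$, since $Q(\lambda)$ may diverge there when $\E[|Z|^2/(1-T)] = \infty$, and (ii) the boundary $\vartheta = \vartheta_c$, where the limiting curves $f$ and $\lambda \mapsto \lambda - \E|Z|^2T - 1/\vartheta$ become tangent at $\lambda = 1$ so the sign-flip argument used for $\vartheta < \vartheta_c$ breaks down. For (i), I would upgrade pointwise convergence of the decreasing functions $Q_m$ on a countable dense subset of $(1,\infty)$ to locally uniform convergence on compact subsets via a Dini-type argument exploiting monotonicity and continuity of the limit. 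For (ii), I would argue by a perturbation in $\vartheta$: since $\theta(\vartheta) \downarrow 1$ as $\vartheta \downarrow \vartheta_c$ by the implicit-function statement in Lemma~\ref{Q_eq_population}, any candidate outlier must merge with the bulk edge $T_{(1)} \to 1$, preserving $\sigma(\bm E(\vartheta)) \subset [0,1]$ at the boundary and matching the convention $\theta(\vartheta_c) = 1$ built into the proposition.
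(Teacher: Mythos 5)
Your proposal is correct and follows the same three-step skeleton as the paper: restrict to the probability-one event from Lemma~\ref{concentration_result}, reduce the outlier of $\bm E(\vartheta)$ to a scalar fixed-point equation via Lemma~\ref{spectrum_E_empirical}, and invoke Lemma~\ref{Q_eq_population} for the limiting behavior. The one structural difference is in the subcritical regime: the paper argues by contradiction along a subsequence with the help of Lemma~3 of \cite{Lu17} (which handles $f_m$ evaluated at a moving argument), whereas you deduce a uniform strict sign $\Phi_m<0$ on compact intervals $[1+\eta,1+\vartheta]$ via a Dini/P\'{o}lya upgrading of the pointwise convergence of the monotone $Q_m$, then combine with the deterministic bound $\lambda_1(\bm E(\vartheta))\le 1+\vartheta$; both routes work, and yours has the modest advantage of being self-contained. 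One sign slip to correct: Lemma~\ref{Q_eq_population}(1) states that $\lambda-\E[|Z|^2T]-1/\vartheta > 1/\E[|Z|^2/(\lambda-T)]$ for $\lambda$ in the domain to the \emph{left} of $\theta(\vartheta)$ and the reverse to the right, so $\Phi(\lambda)=(\lambda-\E[|Z|^2T]-1/\vartheta)Q(\lambda)-1$ is \emph{positive} on the left and \emph{negative} on the right of $\theta(\vartheta)$, the opposite of what you wrote; this does not break the intermediate-value step (any sign change suffices), but the labels should be swapped. Finally, the two obstacles you anticipate are in fact non-issues: (i) your compact window $[1+\eta,1+\vartheta]$ is bounded away from $\lambda=1$, so a possible divergence of $Q$ at the edge never enters, and (ii) Lemma~\ref{Q_eq_population}(2) already covers $\vartheta=\vartheta_c$ (there one gets $1/Q(1)=\text{line}(1)$ with strict inequality for $\lambda>1$, hence $\Phi(\lambda)<0$ on $(1,\infty)$), so no separate perturbation argument is needed.
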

\begin{proof} 
	We restrict ourselves to the event guaranteed by Lemma \ref{concentration_result}, on which,
	\begin{enumerate}
		\item $a_m \rightarrow \E |Z|^2 T$
		\item $\frac{1}{m} \sum_{i=1}^m \delta_{T_i} \explain{d}{\rightarrow} \mathcal{L}_T$
		\item $Q_m(\lambda) \rightarrow Q(\lambda) \; \forall \; \lambda \in (1, \infty)$.
	\end{enumerate}
	Let us denote this event by $\mathcal{E}$.
	Define the sequence of (random) functions $f_m(\lambda)$ as: 
	\begin{align*}
	f_m(\lambda) & = \lambda - a_m - 1/\vartheta - \left( \sum_{i=1}^m \frac{|A_{1i}|^2}{\lambda - T_i}\right)^{-1}, 
	\end{align*}
	with the domain:
	\begin{align*}
	  \lambda &> \max(1, a_m + 1/\vartheta).
	\end{align*}
	Define the (deterministic) function $f(\lambda)$: 
	\begin{align*}
	f(\lambda) & = \lambda - \E[|Z|^2 T] - 1/\vartheta-  \left(\E \left[ \frac{|Z|^2}{\lambda - T} \right]\right)^{-1}, 
	\end{align*}
	with the domain:
	\begin{align*}
	    \lambda &>  \max(1, \E[ |Z|^2 T]+1/\vartheta).
	\end{align*}
	Note that on $\mathcal{E}$, we have $f_m(\lambda) \rightarrow f(\lambda) \; \forall \; \lambda >1$.
	
	\begin{enumerate}
		\item By Lemma \ref{spectrum_E_empirical}, we know that the eigenvalues of $\bm E(\vartheta)$ interlace with the eigenvalues of the diagonal matrix $\bm T$. On the event $\mathcal{E}$, $\mu_{\bm T} \rightarrow \mathcal{L}_T$. Hence indeed $\mu_{\bm E(\vartheta)} \explain{d}{\rightarrow} \mathcal{L}_T$. This proves statement (1) of the proposition.
		\item Consider the case $\vartheta \leq \vartheta_c$. By Lemma \ref{spectrum_E_empirical}, we already know that $\lambda_2(\bm E(\vartheta)) \leq T_{(1)} \leq 1$ and $\lambda_{m-1}(\bm E(\vartheta)) \geq 0$. Hence to prove (2), it is sufficient to show that 
		$$\bar{\lambda}_1 \Mydef \underset{m \rightarrow \infty}{\lim \sup} \    \lambda_1(\bm E(\vartheta)) \leq 1, \; \text{on} \  \mathcal{E}.$$
		For the sake of contradiction, suppose that there is a realization in $\mathcal{E}$ such that $\bar{\lambda}_1> 1$.  On this realization we consider a subsequence such that $\lambda_1(\bm E(\vartheta)) \rightarrow \bar{\lambda}_1$. All the analysis henceforth is along this subsequence.  Since for all $m$ large enough $\lambda_1(\bm E(\vartheta))>1$, by Lemma \ref{spectrum_E_empirical}, we must have $f_m(\lambda_1(\bm E(\vartheta))=0$. Applying Lemma 3 from \cite{Lu17} (Appendix E), we obtain
		\begin{align*}
		0 & = f_m(\lambda_1(\bm E(\vartheta)) \rightarrow f(\bar{\lambda}_1).
		\end{align*}
		Since $\vartheta \leq \vartheta_c$, we know by Lemma \ref{Q_eq_population} that $f(\lambda) = 0$ does not have any solution in $\lambda > \max(1,\E[|Z|^2 T] + 1/\vartheta)$. Hence,
		\begin{align*}
		1 < \bar{\lambda}_1 \leq \E[|Z|^2 T] + 1/\vartheta.
		\end{align*}
		However,
		\begin{align*}
		f(\bar{\lambda}_1) & = \underbrace{\bar{\lambda}_1 - \E[|Z|^2 T] - 1/\vartheta}_{\leq 0}-  \bigg(\underbrace{\E \left[ \frac{|Z|^2}{\lambda - T} \right]}_{> 0}\bigg)^{-1} \\&< 0.
		\end{align*}
		This contradicts $f(\bar{\lambda}_1) = 0$. Hence, $\underset{m \rightarrow \infty}{\lim \sup} \ \lambda_1(\bm E(\vartheta)) \leq 1, \; \text{on $\mathcal{E}$.}$ This concludes the proof of statement (2).
		\item Now consider the case $\vartheta > \vartheta_c$. Again by Lemma \ref{spectrum_E_empirical}, we know $\lambda_i(\bm E(\vartheta)) \in [0,1]$ for all $i \geq 2$. By Lemma \ref{Q_eq_population}, we know that $f(\lambda) = 0$ has a unique solution in $\lambda> \max(1, \E|Z|^2 T  + 1/\vartheta)$ denoted by $\theta(\vartheta)$. Fix an $\epsilon$ small enough such that $[\theta(\vartheta)-\epsilon, \theta(\vartheta)+\epsilon]$ lies in the domain of $f(\lambda)$. Note that $f(\theta(\vartheta)) = 0$, while $f(\theta(\vartheta) - \epsilon)>0$ and $f(\theta(\vartheta) + \epsilon)<0$ (by Lemma \ref{Q_eq_population}).
		
		Since $a_m \rightarrow \E|Z|^2 T$, for all $m$ large enough, $[\theta(\vartheta) - \epsilon, \theta(\vartheta) + \epsilon]$ also lies in the domain of $f_m(\lambda)$. By Lemma \ref{concentration_result}, we have $f_m(\lambda) \rightarrow f(\lambda)$ for all $\lambda \in [\theta(\vartheta) - \epsilon, \theta(\vartheta) + \epsilon] $. In particular, we have, for all $n$ large enough $f_m(\theta(\vartheta)-\epsilon)>0$ while $f_m(\theta(\vartheta)+\epsilon)<0$. Hence, by Lemma \ref{spectrum_E_empirical}, we have $\lambda_1(\bm E(\vartheta)) \in [\theta(\vartheta) - \epsilon, \theta(\vartheta) + \epsilon]$ for all $n$ large enough. Hence indeed, $\lambda_1(\bm E(\vartheta)) \explain{a.s.}{\rightarrow} \theta(\vartheta)$. This proves (3).
	\end{enumerate}
\end{proof}
%
%%=============================
%%=============================
%%=============================
\subsection{Analysis of the Support of $\gamma \boxtimes \mathcal{L}_T$}\label{Sec:support}

	We recall that $\mathcal{L}_T$ is the law of the random variable $T = \T(|Z|/\sqrt{\delta})$, and $\gamma = \frac{1}{\delta} \delta_1 + \left( 1 - \frac{1}{\delta} \right) \delta_0$. To keep the notation clean, we will refer to the analytic transforms corresponding to the measure $\mathcal{L}_T$ with the subscript $T$, for example the Cauchy transform for the measure $\mathcal{L}_T$ will be referred to as $G_T$.We begin by computing the Cauchy Transform of $\gamma \boxtimes T$.
	\begin{lemma} Let $z \in \mathbb C^{-}$. Then, we have,
		\begin{align*}
		G_{\gamma \boxtimes T}(z) & = \frac{1}{z}\cdot \frac{1-1/\delta}{1-z w_T(1/z)}. 
		\end{align*}
		In the above display, the subordination function, $w_T(1/z)$, is the unique solution in $\mathbb C^+$ to the equation $\Lambda(1/w) = z$, where the function $\Lambda$ is defined as:
		\begin{align*}
		\Lambda(\tau) & \Mydef  \tau - \frac{(1-1/\delta)}{\E\left[\frac{1}{\tau- T}\right]}.
		\end{align*}
		\label{formula_for_cauchy_and_subordination}
	\end{lemma}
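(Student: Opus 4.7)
The plan is to work directly from the subordination-function definition of free multiplicative convolution (Definition~2). The only measure-specific input needed is an explicit formula for $\eta_\gamma$: since $\gamma = \frac{1}{\delta}\delta_1 + (1 - \frac{1}{\delta})\delta_0$, a direct computation gives $\psi_\gamma(r) = r/[\delta(1-r)]$ and hence
$\eta_\gamma(r) = r/[\delta - (\delta-1)r]$,
a M\"obius transformation that is globally injective. Everything else is bookkeeping.

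Specializing the fixed-point equation in Definition~2 to $\nu = \mathcal{L}_T$ and evaluation point $s = 1/z$, the subordination function $w = w_T(1/z) \in \mathbb{C}^+$ is characterized as the unique solution of $\eta_T(w) = \eta_\gamma\bigl(\eta_T(w)/(zw)\bigr)$. Writing $v = \eta_T(w)$ and inverting the M\"obius map $\eta_\gamma$ produces, after short algebra, $v = (\delta z w - 1)/(\delta - 1)$. I would then switch to the variable $\tau = 1/w$ and use the elementary identity $\psi_T(1/\tau) = \tau\,\E\left[\frac{1}{\tau - T}\right] - 1$ to also obtain $v = \eta_T(1/\tau) = 1 - \bigl(\tau\,\E\left[\frac{1}{\tau - T}\right]\bigr)^{-1}$. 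Equating the two expressions for $v$ and simplifying collapses to $\E\left[\frac{1}{\tau - T}\right] = (1 - 1/\delta)/(\tau - z)$, which is exactly $\Lambda(\tau) = z$; hence $\Lambda(1/w_T(1/z)) = z$.

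For the Cauchy transform, I would start from $G_{\gamma \boxtimes T}(z) = z^{-1}\bigl(\psi_T(w_T(1/z)) + 1\bigr)$, which follows by combining the standard identity $G_\mu(z) = z^{-1}(\psi_\mu(1/z) + 1)$ with the subordination relation $\psi_{\gamma \boxtimes T}(1/z) = \psi_T(w_T(1/z))$. Rewriting $\psi_T(w) + 1 = 1/(1 - \eta_T(w)) = 1/(1 - v)$ and inserting the explicit value $1 - v = \delta(1 - zw)/(\delta - 1)$ computed above yields the claimed formula $G_{\gamma \boxtimes T}(z) = z^{-1}(1 - 1/\delta)/(1 - z w_T(1/z))$. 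Uniqueness of $w_T(1/z)$ in $\mathbb{C}^+$ as a solution of $\Lambda(1/w) = z$ then follows from the uniqueness clause in Definition~2, since every manipulation above is reversible: $\eta_\gamma$ is injective and $w \mapsto 1/w$ is a bijection $\mathbb{C}^+ \to \mathbb{C}^-$. The only delicate point, and the main place where care is required, is executing the changes of variable $w \leftrightarrow \tau = 1/w$ and $r \leftrightarrow v$ carefully so that no branch ambiguity is introduced; beyond M\"obius injectivity no further analytic input is needed.
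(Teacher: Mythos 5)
Your proof is correct and takes essentially the same route as the paper's: both compute $\psi_\gamma,\eta_\gamma$, specialize the fixed-point equation from Definition~\ref{def_subordination} with $\nu=\mathcal{L}_T$, simplify it to $\Lambda(1/w)=z$, and then pass to the Cauchy transform via $\psi_{\gamma\boxtimes T}=\psi_T\circ w_T$ and $G_\mu(z)=z^{-1}(\psi_\mu(1/z)+1)$. The only cosmetic difference is that you invert the M\"obius map $\eta_\gamma$ explicitly and bookkeep through $v=\eta_T(w)$ and $\tau=1/w$, whereas the paper plugs directly into the formula for $Q_z(w)$ and simplifies; the algebra is the same.
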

	\begin{proof}
		First we can compute the moment generating functions: 
		\begin{align*}
		\psi_{\gamma}(z) &= \frac{1}{\delta} \cdot \frac{ z}{1-z}, \\
		 \psi_{T}(z) &= -1 + \E\left[ \frac{1}{1-zT} \right].
		\end{align*}
		The $\eta$-transforms of the two measures are given by,
		\begin{align*}
		\eta_\gamma(z) &= \frac{z/\delta}{  z/\delta - z +1}, \\
		 \eta_T(z) &= \frac{\E\left[ \frac{zT}{1-zT} \right]}{\E\left[ \frac{1}{1-zT} \right]}.
		\end{align*}
		Hence, we can compute the function $Q_z$, given in Definition \ref{def_subordination},
		\begin{align*}
		Q_z(w) & = \frac{1/\delta}{(1/\delta - 1) \frac{\E\left[\frac{T}{1-w T}\right]}{\E\left[\frac{1}{1-w T}\right]} + 1/z}.
		\end{align*}
		Hence $w_T$ is the unique solution in $\mathbb C^{+}$ of the equation $Q_z(w) = w$. This equation can be simplified to
		\begin{align*}
		\frac{1}{z} & = \Lambda(1/w),
		\end{align*}
		where the function $\Lambda$ is defined as $\Lambda(\tau)  \Mydef  \tau - \frac{(1-1/\delta)}{\E\left[\frac{1}{\tau- T}\right]}.$ Hence, we can compute the moment generating function of $\gamma \boxtimes T$ in the following way:
		\begin{align*}
		\psi_{\gamma \boxtimes T}(z) & = \psi_T(w_T(z))\\ &= -1 + \E \left[\frac{1}{1-w_T(z) T} \right] \\ & \explain{(a)}{=} -1 + \frac{1-1/\delta}{1-w_T(z)/z}.
		\end{align*}
		In the above display, in the step marked (a), we used the fact that $w_T$ solves $\Lambda(1/w) = 1/z$. Finally, the Cauchy Transform of $\gamma \boxtimes T$ is given by
		\begin{align*}
		G_{\gamma \boxtimes T}(z) & = \frac{1}{z} \left( \psi_{\gamma\boxtimes T} \left( \frac{1}{z}\right) + 1 \right) \\
		& = \frac{1}{z}\cdot \frac{1-1/\delta}{1-z w_T(1/z)}. 
		\end{align*}
	\end{proof}
	Our next goal is to characterize $\text{Supp}(\gamma \boxtimes T)$. Theorem \ref{regularity_theorem} gives a complete characterization of the support of the singular part of $\gamma \boxtimes T$. Hence, we now need to understand the support of the absolutely continuous part of $\gamma \boxtimes T$. According to the Stieltjes Inversion theorem, (Theorem \ref{stieltjes_theorem}) the density of the continuous part is given by
	\begin{align*}
	&\frac{\diff (\gamma \boxtimes T)_{ac}}{\diff x}(x)  = \frac{1}{\pi} \lim_{\epsilon \rightarrow 0^{+}}\Im \;  G_{\gamma \boxtimes T}(x-i \epsilon) \\
	&\hspace{2cm}= \frac{1}{\pi x}  \Im \left( \frac{1-\frac{1}{\delta}}{1-x  \lim_{\epsilon \rightarrow 0^{+}} w_T(1/(x-i \epsilon))} \right).
	\end{align*}
	Since $\tau_T(x-i \epsilon) \Mydef 1/w_T(1/(x-i \epsilon))$ uniquely solves $\Lambda(\tau) = x-i \epsilon$ in $\mathbb C^-$, our interest will be to study the solutions of this equation for $\epsilon \approx 0$. Hence, we begin by studying the solutions of $\Lambda(\tau) = x$. Before doing so, we clarify the definition of $\Lambda(\tau)$ at $\tau = 1$ which is a subtle case because $1 \in \text{Supp}(T)$. We note that the random variable $(1-T)^{-1}$ is non-negative and hence the expectation $\E[(1-T)^{-1}]$ is well defined but might be $\infty$. If it is finite, then $\Lambda(\tau)$ is well defined at $\tau = 1$. If the expectation is $\infty$, we define $\Lambda(1) = 1$ which is consistent with intepreting $1/\infty = 0$. $\Lambda(\tau)$ is defined at $\tau = 0$ analogously. This definition ensures $\Lambda(\tau)$ is a continuous function on $(-\infty,0] \cup [1,\infty)$. Next we discuss the solutions of $\Lambda(\tau) = x$. Figure \ref{Lambda_eq_on_reals_fig} shows a typical plot $\Lambda(\tau)$. As is clear from this figure we expect the following two quantities to play major roles in determining the existence of a solution of $\Lambda(\tau) = x$:
	Define
			\begin{align*}
			\lambda_{l} & = \max_{\tau \in (-\infty,0]} \Lambda(\tau),  \; \tau_l = \argmax_{\tau \in (-\infty,0]} \Lambda(\tau)\\
			\lambda_{r} & = \min_{\tau \in [1,\infty)} \Lambda(\tau), \; \tau_{r}  = \argmin_{\tau \in [1,\infty)} \Lambda(\tau). 
			\end{align*}
Our next lemma proves the properties of $\Lambda(\tau)$ suggested by Figure \ref{Lambda_eq_on_reals_fig}.

\begin{figure*}[hbpt]
\centering
\includegraphics[width=.6\textwidth]{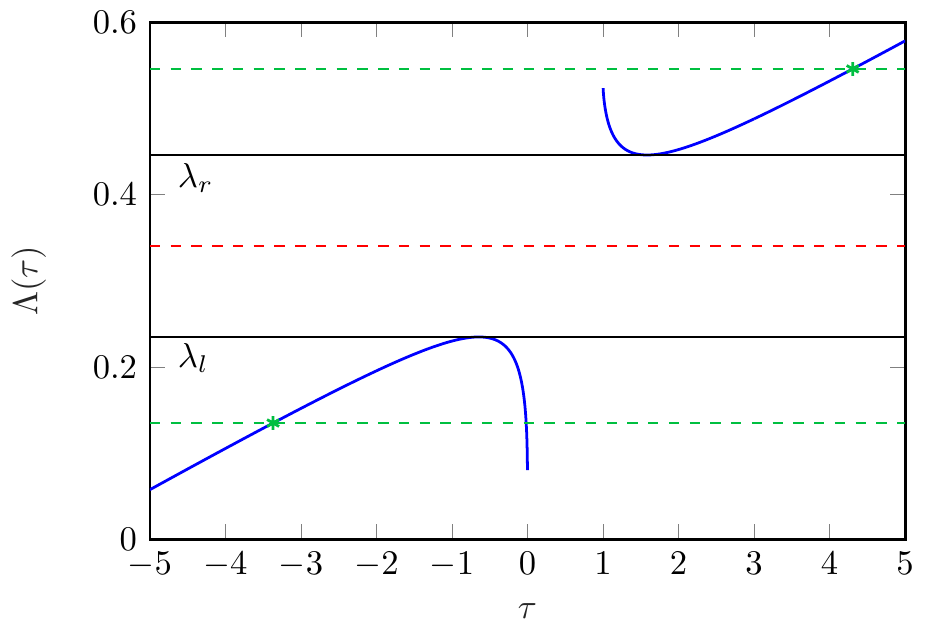}
\caption{An Illustrative plot of the function $\Lambda(\tau)$: When $\lambda_l<x<\lambda_r$, the equation $\Lambda(\tau) = x$ has no solutions. When $x\geq \lambda_r$, the equation $\Lambda(\tau) = x, \Lambda^\prime(\tau)>0$ has a unique solution in $[1,\infty)$. When $x< \lambda_l$, then $\Lambda(\tau) = x, \Lambda^\prime(\tau)>0$ has a unique solution in $(-\infty,0]$. }
		\label{Lambda_eq_on_reals_fig}
\end{figure*}

%	\begin{figure}[h]
%	\begin{center}
%		\includegraphics[]{Lambda_eq_on_reals_3.pdf}
%		\caption{An Illustrative plot of the function $\Lambda(\tau)$: When $\lambda_l<x<\lambda_r$, the equation $\Lambda(\tau) = x$ has no solutions. When $x\geq \lambda_r$, the equation $\Lambda(\tau) = x, \Lambda^\prime(\tau)>0$ has a unique solution in $[1,\infty)$. When $x< \lambda_l$, then $\Lambda(\tau) = x, \Lambda^\prime(\tau)>0$ has a unique solution in $(-\infty,0]$. }
%		\label{Lambda_eq_on_reals_fig}
%	\end{center}
%\end{figure}
%\label{Lambda_eq_on_reals_fig}

	\begin{lemma} 
The following statements are true about $\Lambda(\tau)$:	
		\begin{enumerate}
			\item $\Lambda(\tau)$ is a convex function on $[1,\infty)$ and a concave function on $(-\infty,0]$.
			\item $\lim_{\tau \rightarrow \infty} \Lambda(\tau) = \infty, \; \lim_{\tau \rightarrow -\infty} \Lambda(\tau) = -\infty$.
			\item $\lambda_r> \lambda_l \geq 0$. 
			\item Consider the 3 mutually exclusive and exhaustive cases:
			\begin{description}
				\item{Case A: $x \leq \lambda_{l}$.} There is at least one and at most two solutions to $\Lambda(\tau) = x$. All solutions lie in $(-\infty,0]$. Furthermore, when $x < \lambda_{l}$, there is exactly one solution for the equation $\Lambda(\tau) = x, {\Lambda}^\prime(\tau) > 0$. This unique solution additionally satisfies $\tau < \tau_l \leq 0$.
				\item{Case B: $\lambda_{l} < x < \lambda_{r}$.} There are no solutions of the equation $\Lambda(\tau) = x, \; \tau \in (-\infty,0] \cup [1,\infty)$.
				\item {Case C: $x \geq \lambda_{r}$.} There is at least one and at most two solutions to $\Lambda(\tau) = x$. All solutions lie in $[1,\infty)$. Furthermore, when, $x > \lambda_{r}$, there is a unique solution to $\Lambda(\tau)=x, {\Lambda}^\prime(\tau)>0$. This  solution additionally satisfies $\tau > \tau_r \geq 1$.
			\end{description}
		\end{enumerate}
		\label{solutions_real_case}
	\end{lemma}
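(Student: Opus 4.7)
The whole statement is a real-analytic study of $\Lambda(\tau) = \tau - (1-1/\delta)/g(\tau)$, where I write $g(\tau) \Mydef \E[1/(\tau-T)]$. My proof will have three self-contained ingredients: (a) a convexity calculation based on Cauchy--Schwarz, (b) an asymptotic expansion of $1/g$ at infinity, and (c) a Jensen bound that separates $\lambda_l$ from $\lambda_r$. Once these are in hand, part~(4) is a mechanical application of monotonicity and the intermediate value theorem. Throughout, I handle the boundary points $\tau=0,1$ (where $\E[1/T]$ or $\E[1/(1-T)]$ may be infinite) via monotone convergence, which makes $\Lambda$ continuous on $(-\infty,0]\cup[1,\infty)$ under the conventions stated just before the theorem.

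For (1), I differentiate twice:
\[
(1/g)''(\tau) = \frac{2\,g'(\tau)^2 - g(\tau)\,g''(\tau)}{g(\tau)^3},\qquad g'(\tau) = -\E\!\left[\tfrac{1}{(\tau-T)^2}\right],\;\; g''(\tau) = 2\,\E\!\left[\tfrac{1}{(\tau-T)^3}\right].
\]
On $[1,\infty)$, $\tau-T>0$, so Cauchy--Schwarz applied to $(\tau-T)^{-1/2}$ and $(\tau-T)^{-3/2}$ gives $\E[(\tau-T)^{-2}]^2 \leq \E[(\tau-T)^{-1}]\,\E[(\tau-T)^{-3}]$, i.e.\ $2g'^2\leq g\,g''$, so $1/g$ is concave and $\Lambda$ is convex. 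On $(-\infty,0]$ I repeat the argument for $\tilde g(\tau) \Mydef -g(\tau) = \E[1/(T-\tau)]>0$, noting $\Lambda(\tau)=\tau+(1-1/\delta)/\tilde g(\tau)$, and conclude $\Lambda$ is concave there.

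For (2) and (3), I use the expansion $1/g(\tau) = \tau - \E T + o(1)$ as $|\tau|\to\infty$, which gives $\Lambda(\tau) = \tau/\delta + (1-1/\delta)\E T + o(1)$, and hence the two limits in~(2). For~(3), I use the fact that $t\mapsto 1/(\tau-t)$ is convex on $[0,1]$ when $\tau\geq 1$ and concave when $\tau\leq 0$. Jensen therefore gives $1/g(\tau)\leq \tau-\E T$ on $[1,\infty)$, so
\[
\Lambda(\tau)\;\geq\;\tau/\delta + (1-1/\delta)\E T\;\geq\; 1/\delta + (1-1/\delta)\E T \quad \text{for all }\tau\geq 1,
\]
while on $(-\infty,0]$ the reversed inequality yields $\Lambda(\tau)\leq \tau/\delta + (1-1/\delta)\E T\leq (1-1/\delta)\E T$. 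Taking inf over the right piece and sup over the left piece gives $\lambda_r-\lambda_l\geq 1/\delta>0$. Non-negativity of $\lambda_l$ follows by evaluating at $\tau=0$: $\lambda_l\geq \Lambda(0)=(1-1/\delta)/\E[1/T]\geq 0$, where the convention $1/\infty=0$ covers the case $\E[1/T]=\infty$.

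For (4), parts (1)--(3) already force the geometry of $\Lambda$. On $[1,\infty)$ convexity plus $\Lambda(\tau)\to\infty$ at infinity yields a (unique) minimizer $\tau_r$; the function is nonincreasing on $[1,\tau_r]$ and strictly increasing on $(\tau_r,\infty)$. Hence if $x>\lambda_r$, the intermediate value theorem gives exactly one solution in $(\tau_r,\infty)$ with $\Lambda'(\tau)>0$, and at most one additional solution in $[1,\tau_r]$ where $\Lambda'\leq 0$; and when $x=\lambda_r$ only $\tau_r$ itself is a solution. The concave picture on $(-\infty,0]$ is exactly analogous, yielding Case~A. For Case~B, the bounds $\Lambda\geq \lambda_r$ on the right piece and $\Lambda\leq \lambda_l$ on the left piece immediately rule out any solution. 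The only nontrivial step in the whole proof is the Cauchy--Schwarz argument for the concavity of $1/g$; everything else is bookkeeping with Jensen and IVT.
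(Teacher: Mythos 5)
Your proof follows essentially the same structure as the paper's: differentiate $\Lambda$ twice, establish the sign of $\Lambda''$ via a second-moment inequality, get the limits at $\pm\infty$, separate $\lambda_l$ from $\lambda_r$ by Jensen, and deduce part (4) by monotonicity and the intermediate value theorem. Your only technical substitution is using Cauchy--Schwarz on the pair $(\tau-T)^{-1/2}, (\tau-T)^{-3/2}$ rather than Chebyshev's association inequality with $A=B=G$, $f=g=\mathrm{id}$; these produce the identical inequality $\E[G]\,\E[G^3] \geq (\E[G^2])^2$, so this is a cosmetic difference, not a different route.

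There is, however, a gap you should close. You establish only \emph{weak} convexity on $[1,\infty)$ and weak concavity on $(-\infty,0]$, since Cauchy--Schwarz as you invoke it gives a non-strict inequality. But in part (4) you rely on strictness: a ``(unique) minimizer $\tau_r$,'' ``strictly increasing on $(\tau_r,\infty)$,'' ``at most two solutions,'' and ``when $x=\lambda_r$ only $\tau_r$ itself is a solution'' all fail if $\Lambda$ has a flat region. You must add that equality in your Cauchy--Schwarz step would force $(\tau-T)^{-1/2}$ and $(\tau-T)^{-3/2}$ to be a.s.\ proportional, hence $T$ a.s.\ constant, which is excluded by Assumption~\ref{regularity_assumption} (the distribution of $T$ has a density on $[0,1]$); this upgrades the conclusion to $\Lambda''>0$ on $(1,\infty)$ and $\Lambda''<0$ on $(-\infty,0)$. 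The paper handles exactly this point by inspecting the proof of Chebyshev's association inequality to rule out the equality case. With that one sentence added, your argument matches the paper's.
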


	\begin{proof}
		\begin{enumerate}
			\item We define the random variable $G(\tau)$, 
			\begin{align*}
			G(\tau) & \Mydef \frac{1}{\tau- T}.
			\end{align*} 
			We observe that for any $\tau \in [1,\infty)$, $G(\tau) \geq 0$ where as for $\tau \in (-\infty,0]$, $G(\tau) \leq 0$. It is straightforward to see that $G^\prime(\tau) = - G^2(\tau) \leq 0.$ For notational simplicity, we will often short hand $G(\tau)$ as $G$. We have
			\begin{align*}
			{\Lambda}^\prime(\tau) &= 1 - \left(1-\frac{1}{\delta}\right) \cdot \frac{\E G^2}{(\E G)^2}, \\ {\Lambda}^{\prime\prime}(\tau) &= 2\left(1-\frac{1}{\delta}\right) \cdot \frac{ (\E G^3)\cdot (\E G) - (\E G^2)^2 }{(\E G)^3}.
			\end{align*}
			Consider the following two cases,
			\paragraph{Case 1: $\tau \in [1,\infty)$.} Applying Chebychev's Association Inequality (Fact \ref{chebychev_association}) with $A=B=G$ and $f(a) = g(a) = a$ gives us that $\Lambda^{\prime \prime}(\tau) \geq 0$. In fact, an inspection of the proof of the Chebychev's Association Inequality from \cite{boucheron2013concentration} allows us to rule out the equality case under the assumptions imposed on $\mathcal{T}$, and we have $\Lambda^{\prime\prime}(\tau) > 0$. Hence, $\Lambda$ is strictly convex in $(1,\infty)$. Since $\Lambda(\tau)$ is continuous on $[1,\infty)$, we have $\Lambda$ is convex on $[1,\infty)$
			\paragraph{Case 2: $\tau \in (-\infty,0]$. } Again, applying Chebychev's Association Inequality with $A = B = -G$ and $f(a) = f(b) = a$ gives us ${\Lambda}^{\prime\prime}(\tau) \leq 0$, Hence $\Lambda$ is concave in this region. As before, an inspection of the proof of Chebychev's Association inequality allows us to rule out the equality case under the assumptions imposed on $\mathcal{T}$, and we have $\Lambda^{\prime\prime}(\tau) < 0$. Hence, $\Lambda$ is strictly concave in $(-\infty,0)$. Since $\Lambda(\tau)$ is continuous on $(-\infty,0)$, we have $\Lambda$ is concave on $(-\infty,0]$. This concludes the proof of statement (1) in the lemma.
			\item Note that, 
			\begin{align*}
			\lim_{\tau \rightarrow \infty} \tau - \frac{(1-1/\delta)}{\E\left[\frac{1}{\tau- T}\right]} & = \tau \left(1 - \frac{(1-1/\delta)}{\E\left[\frac{\tau}{\tau- T}\right]} \right) = \infty.
			\end{align*}
			This shows $\lim_{\tau \rightarrow \infty} \Lambda(\tau) = \infty$. The claim about the limit as $\tau \rightarrow - \infty$ can be analogously obtained. This proves item (2) in the statement of the lemma.
			\item The infimum in the definition of $\lambda_{r}$ is attained due to item (2) in the statement of the lemma. Analogously,  the supremum in the definition of $\lambda_{l}$ is attained. Next consider any $\tau_+ \in (1,\infty)$ and any $\tau_{-} \in (-\infty,0)$. Since the function $f(t) = (\tau_{+}-t)^{-1}$ is convex on $[0,1]$, according to Jensen's Inequality, we have
			\begin{align*}
			\Lambda(\tau_{+}) &\geq \tau_{+} - \left( 1 - \frac{1}{\delta} \right) \cdot (\tau_{+} - \E[T]) \\
			& = \frac{\tau_+}{\delta} +  \left( 1 - \frac{1}{\delta} \right) \cdot \E[T].
			\end{align*}
			On the other hand, since the function $f(t) = (\tau_{-}-t)^{-1}$ is concave on $[0,1]$, we have
			\begin{align*}
			\Lambda(\tau_{-}) &\leq \tau_{-} - \left( 1 - \frac{1}{\delta} \right) \cdot (\tau_{-} - \E[T]) \\
			& = \frac{\tau_-}{\delta} +  \left( 1 - \frac{1}{\delta} \right) \cdot \E[T].
			\end{align*}
Hence,
			\begin{align*}
			\Lambda({\tau_+}) &\geq \frac{1}{\delta} + \left( 1 - \frac{1}{\delta} \right) \cdot \E[T] \\&>  \left( 1 - \frac{1}{\delta} \right) \cdot \E[T] \\&\geq \Lambda(\tau_{-}).
			\end{align*}
			Taking the minimum over $\tau_{+}$ and maximum of $\tau_{-}$ gives us $\lambda_{r} > \lambda_{l}$. Furthermore we note that $\Lambda(0^{-}) \geq 0$. Hence $\lambda_{l} \geq 0$.  This concludes the proof of item (3) in the statement of the lemma.
			\item For any $x \in (\lambda_{l},\lambda_{r})$, $\Lambda(\tau) = x$ doesn't have a solution in $(-\infty,0] \cup [1,\infty)$ since $\Lambda(\tau) \leq \lambda_{l} \; \forall \; \tau \leq 0$ and $\Lambda(\tau) \geq \lambda_{r} \; \forall \; \tau \geq 1$. Now consider any $x \geq \lambda_{r}$. Since $\lambda(\tau) \leq \lambda_{l} < \lambda_{r} \; \forall \; \tau \leq 0$, we know that all solutions of $\Lambda(\tau) = x$ lie in $[1,\infty)$. Since $\Lambda$ is strictly convex in $(1,\infty)$, there can be atmost 2 solutions. Now consider any $x> \lambda_{r}$. Let $\tau_{r} = \arg\min_{\tau \geq 1} \Lambda (\tau)$.  Due to strict convexity of $\Lambda(\tau)$, we have $\Lambda^\prime(\tau) > 0$ for any $\tau \in (\tau_{r},\infty)$. Hence $\Lambda(\tau)$ is strictly increasing on $[\tau_{r}, \infty)$. Since $\lambda_{r} = \Lambda(\tau_{r}) < x < \Lambda(\infty) = \infty $, we are guaranteed to have exactly one solution to $\Lambda(\tau) = x$ on $(\tau_{r},\infty)$ which indeed satisfies $\Lambda^\prime(\tau) > 0$. The analysis for the case when $x \leq \lambda_{l}$ can be done in a similar way. This concludes the proof of item (4) in the statement of the lemma.
		\end{enumerate}
	\end{proof}
	We are now in the position to characterize the support of $\gamma \boxtimes T$ which is the content of the following proposition.
	\begin{proposition} \label{support_prop}The support of $\gamma \boxtimes T$ is given by
		\begin{align*}
		\text{Supp}(\gamma \boxtimes T) & = [\lambda_{l},\lambda_{r}] \cup \text{Supp}((\gamma \boxtimes T)_d),
		\end{align*}
		where $(\gamma \boxtimes T)_d$ denotes the discrete part of the measure $\gamma \boxtimes T$. If the random variable $T$ has a density with respect to the Lebesgue measure, then, 
		\begin{align*}
		    \text{Supp}(\gamma \boxtimes T) & = [\lambda_{l},\lambda_{r}].
		\end{align*}
	\end{proposition}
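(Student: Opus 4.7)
The plan is to combine the structure theorem for the singular parts of free multiplicative convolutions (Theorem \ref{regularity_theorem}) with a Stieltjes inversion argument based on the Cauchy transform formula of Lemma \ref{formula_for_cauchy_and_subordination}. Since neither $\gamma$ nor $\mathcal{L}_T$ is a single point mass, Theorem \ref{regularity_theorem}(2) gives $\text{Supp}(\gamma \boxtimes T) = \text{Supp}((\gamma \boxtimes T)_{ac}) \cup \text{Supp}((\gamma \boxtimes T)_d)$, so the first claim reduces to identifying the absolutely continuous support. Theorem \ref{stieltjes_theorem} applied to Lemma \ref{formula_for_cauchy_and_subordination} expresses the a.c.\ density at $x>0$ as $\rho(x) = (1-1/\delta)/(\pi x) \cdot \Im[1/(1-x\tilde w(x))]$ where $\tilde w(x) \Mydef \lim_{\epsilon \downarrow 0} w_T(1/(x-i\epsilon))$. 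Setting $\tilde\tau(x) \Mydef 1/\tilde w(x)$, the fixed-point relation of Lemma \ref{formula_for_cauchy_and_subordination} becomes $\Lambda(\tilde\tau(x))=x$ in the limit, and a short algebraic manipulation using this identity rewrites $\rho(x) = -\Im(\tilde\tau(x))/(\pi x) \cdot \E[T/|\tilde\tau(x)-T|^2]$.

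To prove $(\lambda_l,\lambda_r) \subseteq \text{Supp}((\gamma\boxtimes T)_{ac})$, I will argue that $\tilde\tau(x)$ lies strictly in $\mathbb{C}^-$ there. Lemma \ref{solutions_real_case} (Case B) rules out any real solution of $\Lambda(\tau)=x$ in $(-\infty,0]\cup[1,\infty)$, and a real boundary value $\tilde\tau(x) \in (0,1)$ is incompatible with the limiting equation $\Lambda(\tilde\tau)=x$ because the expectation defining $\Lambda$ diverges on the support of $T$. Since $z \mapsto 1/w_T(1/z)$ maps $\mathbb{C}^-$ into $\mathbb{C}^-$, the boundary value must therefore satisfy $\Im\tilde\tau(x)<0$, which makes $\rho(x)>0$ (using $\sup \T = 1$ so $T>0$ with positive probability). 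Conversely, for $x\geq 0$ with $x \notin [\lambda_l,\lambda_r]$, Lemma \ref{solutions_real_case} (Cases A and C) supplies a real $\tau_0$ with $\Lambda(\tau_0)=x$ and $\Lambda'(\tau_0)>0$; the inverse function theorem then produces a real-analytic local inverse of $\Lambda$ through $x$ that must coincide with $\tau(z)=1/w_T(1/z)$ near $x$ by uniqueness of analytic continuation. Hence $\tilde\tau(x)=\tau_0\in\mathbb{R}$ and $\rho(x)=0$, and taking closures pins down $\text{Supp}((\gamma\boxtimes T)_{ac})=[\lambda_l,\lambda_r]$.

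For the second claim, Theorem \ref{regularity_theorem}(1) constrains the atoms of $\gamma \boxtimes T$: an atom at some $a>0$ would require $\gamma(\{u\})+T(\{v\})>1$ for some $u,v>0$, which is impossible once $T$ has a density (so $T(\{v\})=0$) and $\gamma(\{u\}) \leq 1/\delta < 1$. Thus $\text{Supp}((\gamma\boxtimes T)_d) \subseteq \{0\}$, and the claim closes by verifying that $0 \in [\lambda_l,\lambda_r]$ via the behavior of $\Lambda$ near $\tau=0$ under the density hypothesis. The main obstacle I anticipate is the matching step in the a.c.\ support argument: proving that the branch of $\Lambda^{-1}$ realized as the boundary value of the subordination function from $\mathbb{C}^-$ coincides with the specific real branch selected by Lemma \ref{solutions_real_case} (the one with $\Lambda'(\tau_0)>0$), rather than one of the other real solutions that $\Lambda(\tau)=x$ may admit. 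A careful Schwarz-reflection argument that tracks the continuous extension of $\tau(z)$ to the real axis and leverages the strict sign of $\Lambda'$ at the chosen branch should resolve this.
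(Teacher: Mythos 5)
Your high-level strategy matches the paper's: combine Theorem \ref{regularity_theorem} with Stieltjes inversion through the Cauchy-transform formula of Lemma \ref{formula_for_cauchy_and_subordination}, and use Lemma \ref{solutions_real_case} to locate the real solutions of $\Lambda(\tau)=x$. Your algebraic simplification of the density to $\rho(x)=-\Im(\tilde\tau(x))/(\pi x)\cdot\E\bigl[T/|\tilde\tau(x)-T|^2\bigr]$ is correct. However, there are two places where the argument as written does not close.

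First, and most seriously, the step ruling out a real boundary value $\tilde\tau(x)\in(0,1)$ for $x\in(\lambda_l,\lambda_r)$ does not work as stated. You assert that ``the expectation defining $\Lambda$ diverges on the support of $T$,'' but when $T$ has a (bounded) density, $\E[1/(\tau_T(x-i\epsilon)-T)]$ remains bounded as $\tau_T(x-i\epsilon)$ approaches $(0,1)$ from below: by the Sokhotski--Plemelj formula its imaginary part tends to $-\pi f_T(\tilde\tau)$ while the real part tends to a finite principal value. The conclusion you want (that the imaginary part of $\Lambda$ cannot vanish, so no real boundary value in $(0,1)$ is possible) is not automatic either: it requires $f_T(\tilde\tau)>0$ at the specific point, which is not guaranteed on all of $(0,1)$, and it requires justifying the passage to the limit inside a non-absolutely convergent integral. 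The paper sidesteps this entirely by proving the contrapositive: assume $\lambda\notin\text{Supp}(\gamma\boxtimes T)$, invoke Lemma \ref{extension_lemma}(2) (a regularity result from Belinschi--Bercovici--Capitaine) to conclude $\tau_T(\lambda)\in\mathbb R\setminus\text{Supp}(T)=\mathbb R\setminus[0,1]$, pass to the limit by dominated convergence (now legitimate because $\tau_T(\lambda)$ is bounded away from $[0,1]$), and contradict Case B of Lemma \ref{solutions_real_case}. If you want to keep your density-positivity route, you will still need to import Lemma \ref{extension_lemma}(2) or an equivalent to dispose of the $(0,1)$ case; the elementary divergence heuristic does not replace it.

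Second, two steps you flag as ``to be resolved'' are indeed the crux and deserve more than a promissory note. For the branch matching off $[\lambda_l,\lambda_r]$, the paper gives a concrete resolution: the implicit function theorem yields $\tau(\epsilon)=\tau_\star-i\epsilon/\Lambda^\prime(\tau_\star)+o(\epsilon)$, the sign of $\Lambda^\prime(\tau_\star)>0$ forces this expansion into $\mathbb C^-$, and uniqueness of the subordination fixed point in $\mathbb C^-$ (Definition \ref{def_subordination}) identifies it with $\tau_T(x-i\epsilon)$. Your Schwarz-reflection sketch is pointing in the same direction, but you should make this uniqueness-in-$\mathbb C^-$ step explicit since it is what selects the correct real branch. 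Finally, you correctly observe --- more carefully than the paper's own wording does --- that $\gamma(\{0\})=1-1/\delta>0$ forces an atom of $\gamma\boxtimes T$ at $0$ via Theorem \ref{regularity_theorem}(1a), so the second claim requires $0\in[\lambda_l,\lambda_r]$, i.e.\ $\lambda_l=0$. You leave this unverified; it needs the behavior of $\Lambda$ as $\tau\uparrow 0$ and the assumption $\inf\T=0$, and it is not automatic. This is a genuine loose end worth closing.
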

\begin{proof}
We first claim that $(\lambda_{l}, \lambda_{r}) \subset \text{Supp}(\gamma \boxtimes T)$. Since the support of a measure is closed, this means that $[\lambda_{l}, \lambda_{r}] \subset \text{Supp}(\gamma \boxtimes T)$. We prove this claim by contradiction. Suppose that $\exists \lambda \in (\lambda_{l},\lambda_{r})$ such that $\lambda \not\in \text{Supp}(\gamma \boxtimes T)$. To simplify notation, for $z \in \mathbb C^{-}$, we introduce the following reciprocal subordination function $\tau_T(z)$
		\begin{align*}
		\tau_T(z) & \Mydef \frac{1}{w_T(1/z)}.
		\end{align*}
		According to Lemma \ref{extension_lemma}, we have
		\begin{align*}
		\tau_T(\lambda) \Mydef \lim_{\epsilon \rightarrow 0^{+}} \tau_T(\lambda - i \epsilon)  \in (-\infty,0) \cup (1,\infty).
		\end{align*}
		By Lemma \ref{formula_for_cauchy_and_subordination}, $\tau_T(\lambda - i \epsilon)$ uniquely solves the equation $\Lambda(\tau) = \lambda - i \epsilon$ in $\mathbb C^{-}$. Taking $\epsilon \rightarrow 0$, we obtain,
		\begin{align*}
		\lambda & = \lim_{\epsilon \rightarrow 0^+} \Lambda(\tau_T(\lambda - i \epsilon)) \\
		& = \lim_{\epsilon \rightarrow 0^{+}} \left( \tau_T(\lambda - i \epsilon) - \frac{1-1/\delta}{\E \left[ \frac{1}{\tau_T(\lambda - i \epsilon)- T} \right]} \right) \\
		& \explain{(a)}{=} \tau_T(\lambda) - \frac{1-1/\delta}{\E \left[ \frac{1}{\tau_T(\lambda)- T} \right]}.
		\end{align*}
		In the step marked (a), we used the fact that since $\lim_{\epsilon \rightarrow 0^{+}} \tau_T(\lambda - i \epsilon) \not\in \text{Supp}(T)$, we have $\exists c>0$, such that for any $\epsilon$ small enough $\text{dist}(\tau_T(\lambda- i \epsilon), \text{Supp}(T)) \geq c$. This gives us a dominating function for an application of  the dominated convergence theorem. Hence, we have found a solution for the equation $\lambda = \Lambda(\tau), \tau \in (-\infty,0) \cup (1,\infty)$. But this contradicts Lemma \ref{solutions_real_case}. Hence, we have, $(\lambda_{l}, \lambda_{r}) \subset \text{Supp}(\gamma \boxtimes T)$.
		
		Next, we claim that any $x \in [0,\lambda_{l}) \cup (\lambda_{r},\infty)$ is not in the support of the absolutely continuous part of $\gamma \boxtimes T$. To show this, we first compute a first order asymptotic expansion of $\tau_T(x - i \epsilon)$ for $\epsilon \approx 0$. From Lemma \ref{solutions_real_case}, we know there exists a unique solution for the equation $\Lambda(\tau) = x, \tau \in (-\infty,0) \cup (1,\infty)$ and ${\Lambda}^\prime(\tau) > 0$. We denote this solution by $\tau_\star$. Since $\tau_\star \not\in \text{Supp}(T)$, the function $\Lambda(\tau)$ is analytic in the neighborhood (in $\mathbb C$) of $\tau_\star$. The implicit function theorem guarantees us a solution $\tau(\epsilon) = \tau_R(\epsilon) + i \tau_I(\epsilon)$ of the equation $\Lambda(\tau) = x - i\epsilon$. However, this $\tau(\epsilon)$ may not be the reciprocal subordination function $\tau_T(x-i\epsilon)$ since we still need to verify it is in $\mathbb C^-$. To take care of this, again by the implicit function theorem we have
		\begin{align*}
		\Lambda^\prime(\tau_\star) \cdot \frac{\diff \tau}{\diff \epsilon}(0) & = -i.
		\end{align*}
		This gives us
		\begin{align*}
		\frac{\diff \tau_I}{\diff \epsilon}(0) = - \frac{1}{{\Lambda}^\prime(\tau_\star)} < 0, \; \frac{\diff \tau_R}{\diff \epsilon}(0) = 0 . 
		\end{align*}
		Hence, we have 
		\begin{align*}
		\tau(\epsilon) & = \tau_\star - i \frac{\epsilon}{{\Lambda}^\prime(\tau_\star)} + o(\epsilon).
		\end{align*}
		This verifies that $\tau(\epsilon) \in \mathbb C^{-}$ for $\epsilon$ small enough. Finally since $\tau_T(x-i\epsilon)$ is the unique solution to the equation $\Lambda(\tau) = x - i \epsilon$ in $\mathbb C^{-}$, we have 
		\begin{align*}
		\tau_T(x-i\epsilon) & = \tau_\star - i \frac{\epsilon}{{\Lambda^\prime}(\tau_\star)} + o(\epsilon).
		\end{align*}
		According to the Stieltjes Inversion Formula, Theorem \ref{stieltjes_theorem}, we obtain
		\begin{align*}
		\frac{\diff (\gamma \boxtimes T)_{ac}}{\diff x}(x) & = \frac{1}{\pi x} \cdot \Im \Bigg( \frac{1-\frac{1}{\delta}}{1-x \cdot \lim_{\epsilon \rightarrow 0^{+}} w_T\left(\frac{1}{x-i \epsilon}\right)} \Bigg) \\
		& \explain{(b)}{=} \frac{1}{\pi x} \cdot \Im \left( \frac{(1-1/\delta) \cdot \tau_\star}{\tau_\star - x} \right)  = 0.
		\end{align*}
		In the step marked (b), we are relying on the assumption that $\tau_\star \neq x$. To verify this, we recall that $\tau_\star$ solves, $\Lambda(\tau_\star) = x$ and $\tau_\star \not \in [0,1]$. This means that
		\begin{align*}
		|\tau_\star - x| &= \frac{1-1/\delta}{\left|\E \left[ \frac{1}{\tau_\star-T} \right]\right|} \\ &\geq \frac{1-1/\delta}{\E \left[ \left| \frac{1}{\tau_\star-T} \right|\right]} \\ &\geq (1-1/\delta) \cdot \text{dist}(\tau_\star,[0,1])>0.
		\end{align*}
		Hence, we have shown
		\begin{align*}
		\frac{\diff (\gamma \boxtimes T)_{ac}}{\diff x}(x)  \explain{a.s.}{=} 0, \forall x \in [0,\lambda_{l}) \cup (\lambda_{r} ,\infty).
		\end{align*}
		This implies,
		\begin{align*}
		     [0,\lambda_{l}) \cup (\lambda_{r} ,\infty)  \subset \mathbb R \backslash \text{Supp}((\gamma \boxtimes T)_{ac}).
		\end{align*}
		Taking complements, we have $\text{Supp}((\gamma \boxtimes T)_{ac}) \subset [\lambda_{l},\lambda_{r}]$.
		Hence, we have shown that
		\begin{align*}
		&[\lambda_{l},\lambda_{r}] \cup \text{Supp}((\gamma \boxtimes T)_d)   \subset \text{Supp}(\gamma \boxtimes T) \\ & \hspace{3cm}= \text{Supp}((\gamma \boxtimes T)_{ac}) \cup \text{Supp}((\gamma \boxtimes T)_d) \\& \hspace{3cm}\subset  [\lambda_{l},\lambda_{r}] \cup \text{Supp}((\gamma \boxtimes T)_d).
		\end{align*}
		Therefore, $\text{Supp}(\gamma \boxtimes T)=[\lambda_{l},\lambda_{r}] \cup \text{Supp}((\gamma \boxtimes T)_d)$ which proves the claim of the proposition. Finally, when $T$ has a density with respect to Lebesgue measure, Theorem \ref{regularity_theorem} gives us $\text{Supp}((\gamma \boxtimes T)_d) = \emptyset$ which yields the second claim in the proposition. 
\end{proof}

Finally we note that in order to apply Theorem \ref{outlier_theorem_free_probability}, it is necessary to understand the set ${\tau_T^{-1}(\{\theta\}) \cap (\mathbb R \backslash {\text{Supp}(\gamma \boxtimes T)}})$,  $\theta \in \mathbb R$ (See Theorem \ref{outlier_theorem_free_probability} for a definition of $\tau_T$). This is done in the following lemma.

\begin{lemma} Let $(w_\gamma, w_T)$ denote the subordination functions corresponding to the free multiplicative convolution of $\gamma, \mathcal{L}_T$. Define
	\begin{align*}
	\tau_T(z) & = \frac{1}{w_T(1/z)}.
	\end{align*}
	Then, we have
	\begin{align*}
	{\tau_T^{-1}(\{\theta\}) \cap (\mathbb R \backslash {\text{Supp}(\gamma \boxtimes T)}}) & = \begin{cases} \theta \in [\tau_{l},\tau_{r}] : & \emptyset \\
	\theta \not\in [\tau_{l},\tau_{r}]: & \{\Lambda(\theta)\}
	\end{cases},
	\end{align*}
\label{tau_inv_func}
where where,  $\tau_{l} \triangleq \arg\max_{\tau \leq 0} \Lambda(\tau)$, $\tau_{r} \triangleq \arg\min_{\tau \geq 1} \Lambda(\tau)$.
\end{lemma}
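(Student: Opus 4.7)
The plan is to combine the defining identity $\Lambda(\tau_T(z)) = z$, established in Lemma \ref{formula_for_cauchy_and_subordination} for $z \in \mathbb{C}^-$, with the structural properties of $\Lambda$ from Lemma \ref{solutions_real_case} and the first-order expansion already derived in the proof of Proposition \ref{support_prop}. The core observation is that $\tau_T$ and $\Lambda$ behave as mutual inverses on suitable real intervals, so determining the preimage $\tau_T^{-1}(\{\theta\}) \cap (\mathbb{R} \setminus \text{Supp}(\gamma \boxtimes T))$ reduces to counting real roots of $\Lambda(\tau) = z$ and selecting the one with the correct sign of $\Lambda^\prime$.

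First I would extend $\tau_T$ continuously to $\mathbb{R} \setminus \text{Supp}(\gamma \boxtimes T) = (-\infty,\lambda_l) \cup (\lambda_r,\infty)$ and show it takes values in $\mathbb{R} \setminus [0,1]$ with $\Lambda(\tau_T(z)) = z$ there. For $z>0$ outside the support, this follows from Lemma \ref{extension_lemma} applied with $x = 1/z$; for $z<0$ (which can occur when $\lambda_l>0$), $1/z$ lies in the original domain $\mathbb{C}\setminus [0,\infty)$ of $w_T$, and the conjugation symmetry $w_T(\overline{\zeta}) = \overline{w_T(\zeta)}$ forces $w_T(1/z)$ to be real, with the identity extending by analytic continuation. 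Next I would identify which root of $\Lambda(\tau)=z$ equals $\tau_T(z)$: the expansion $\tau_T(z - i\epsilon) = \tau_\star - i\epsilon/\Lambda^\prime(\tau_\star) + o(\epsilon)$ already established inside the proof of Proposition \ref{support_prop} via the implicit function theorem, together with the requirement that $\tau_T$ map $\mathbb{C}^-$ to $\mathbb{C}^-$, forces $\Lambda^\prime(\tau_\star)>0$. Lemma \ref{solutions_real_case} then pins $\tau_T(z)$ down as the unique solution of $\Lambda(\tau)=z$ lying in $(-\infty,\tau_l)\cup(\tau_r,\infty)$.

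With this bijective correspondence in hand, both cases drop out. If $\theta \in [\tau_l,\tau_r]$, then $\theta$ is outside the image of $\tau_T$ on $\mathbb{R}\setminus \text{Supp}(\gamma\boxtimes T)$, so the preimage is empty. If $\theta \in (-\infty,\tau_l)\cup(\tau_r,\infty)$, then setting $z \Mydef \Lambda(\theta)$ places $z$ in $(-\infty,\lambda_l)\cup(\lambda_r,\infty)\subseteq \mathbb{R}\setminus \text{Supp}(\gamma\boxtimes T)$ by Lemma \ref{solutions_real_case}, and $\tau_T(z)=\theta$ by uniqueness of the positive-derivative root, yielding the singleton $\{\Lambda(\theta)\}$. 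The only genuinely technical point is picking out the correct branch of $\Lambda^{-1}$ as the image of $\tau_T$; this was already handled by the first-order expansion argument in Proposition \ref{support_prop}, so the present lemma essentially amounts to reassembling those pieces.
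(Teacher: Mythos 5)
Your proposal is correct and follows essentially the same path as the paper's proof: both rely on Proposition \ref{support_prop} (which gives $\text{Supp}(\gamma\boxtimes T)=[\lambda_l,\lambda_r]$ and identifies $\tau_T(x)$, for $x$ outside the support, with the unique $\Lambda'>0$ root of $\Lambda(\tau)=x$ in $(-\infty,0]\cup[1,\infty)$) and on Lemma \ref{solutions_real_case} to locate that root in $(-\infty,\tau_l)\cup(\tau_r,\infty)$. The extra remarks you make about extending to $z<0$ and about the conjugation symmetry are reasonable but not a departure — the paper implicitly covers this via the same implicit-function-theorem expansion you invoke.
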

\begin{proof} From Proposition \ref{support_prop}, we know that $\text{Supp}(\gamma \boxtimes T) = [\lambda_{l},\lambda_{r}]$, where $\lambda_{l}  \Mydef \max_{\tau \leq 0} \Lambda(\tau)$ and $\lambda_{r}  \Mydef \min_{\tau \geq 1} \Lambda(\tau)$. Furthermore, we showed that for any $x \not \in [\lambda_{l},\lambda_{r}]$, the reciprocal subordination function $\tau_T(x)$ is the unique solution to the equations: $\Lambda(\tau) = x, \Lambda^\prime(\tau) > 0, \; \tau \not\in [0,1]$.
	From Lemma \ref{solutions_real_case}, we know that when $x> \lambda_{r}$, the unique solution to $\Lambda(\tau) = x, \Lambda^\prime(x) > 0$ satisfies $\tau > \tau_{r}$ and when $x<\lambda_{l}$, the unique solution satisfies $\tau < \tau_{l}$. These considerations immediately yield the claim of the lemma.
\end{proof}
%================================
%================================
%================================
\subsection{Proof of Lemmas \ref{L_m_func_asymptotic} and \ref{Lem:Lambda_plus}}\label{Sec:technical_lemmas}

Recall we defined $\Lambda_+(\tau)$ as
\begin{align*}
\Lambda_+(\tau) & = \begin{cases} \tau - \frac{(1-1/\delta)}{\E\left[\frac{1}{\tau- T}\right]} \ \  \ \ \ \ \ \ \ \ \ \ \ \ \ \   {\rm if} \ \tau > \tau_{r},  \\
\min_{\tau \geq 1} \left(  \tau - \frac{(1-1/\delta)}{\E\left[\frac{1}{\tau- T}\right]} \right) \ \ {\rm if} \ \tau \leq \tau_{r},
\end{cases}
\end{align*}
where $T=\T(|Z|/\sqrt{\delta})$ and $Z\sim\mathcal{CN}(0,1)$, and
\begin{align*}
\tau_{r} & \triangleq \arg \min_{\tau \geq 1} \left(  \tau - \frac{(1-1/\delta)}{\E\left[\frac{1}{\tau- T}\right]} \right).
\end{align*}
%\begin{itemize}
%	\item {When $\vartheta> \vartheta_c$: }  $\theta(\vartheta)$ was defined as the unique value of $\lambda> \max(1, \E[ |Z|^2 T]+1/\vartheta)$ that satisfies the equation $\lambda - \E[|Z|^2 T] - 1/\vartheta  = \left(\E \left[ \frac{|Z|^2}{\lambda - T} \right]\right)^{-1}.$
%	\item {When $\vartheta \leq \vartheta_c$: } $\theta(\vartheta) = 1$.
%\end{itemize}
%The critical value 
We first prove Lemma \ref{L_m_func_asymptotic}, which we restated below for convenience. \vspace{5pt}

\textbf{Lemma 3.} \textit{Let $\vartheta_c  \Mydef \left(1 - \left( \E \left[ \frac{|Z|^2}{1-T} \right] \right)^{-1} - \E[|Z|^2 T]  \right)^{-1}$. Define the function $\theta(\vartheta)$ as:
	\begin{itemize}
		\item {When $\vartheta> \vartheta_c$: } Let $\theta(\vartheta)$ be the unique value of $\lambda$ that satisfies the equation: $$\lambda - \E[|Z|^2 T] - 1/\vartheta  = \left(\E \left[ \frac{|Z|^2}{\lambda - T} \right]\right)^{-1},$$ in the interval:
		\begin{align*}
		    \lambda \in  \left( \max(1, \E[ |Z|^2 T]+1/\vartheta), \infty \right).
		\end{align*}
		\item {When $\vartheta \leq \vartheta_c$: } $\theta(\vartheta) \Mydef 1$.
	\end{itemize}
Then, we have $L_m(\vartheta) \explain{a.s.}{\rightarrow} \Lambda_{+} (\theta(\vartheta))$, where $L_m(\vartheta) $ is defined in \eqref{L_m_eq_numbered}.}
%========================
\begin{proof}
In Proposition \ref{spectrum_E_empirical}, we obtained an asymptotic characterization of the spectrum of $\bm E(\vartheta)$. More specifically, we proved that
	\begin{align*}
	\mu_{\bm E(\vartheta)} \explain{d}{\rightarrow} \mathcal{L}_T, \; \lambda_1(\bm E(\vartheta)) \rightarrow \theta(\vartheta).
	\end{align*}
	We recall the matrix $\bm R$ was defined as
	\begin{align*}
	\bm R &= \begin{bmatrix} \bm I_{n-1} & \bm 0_{n-1,m-1} \\
	\bm 0_{m-n,n-1} & \bm 0_{m-1,m-1}
	\end{bmatrix}.
	\end{align*}
	In particular, $\mu_{\bm R} \explain{d}{\rightarrow} \gamma$, where the measure $\gamma$ is given by
	\begin{align*}
	\gamma & = \frac{1}{\delta} \delta_1 + \left( 1 - \frac{1}{\delta} \right) \delta_0.
	\end{align*}
	Applying Theorem \ref{outlier_theorem_free_probability}, we obtain: 
	\begin{enumerate}
		\item The spectral measure of $\bm E(\vartheta) \bm H_{m-1} \bm R \bm H_{m-1}^\UH$ converges to:
		\begin{align*}
		\mu_{\bm E(\vartheta) \bm H_{m-1} \bm R \bm H_{m-1}^\UH} \explain{d}{\rightarrow} \gamma \boxtimes \mathcal{L}_T.
		\end{align*}
		\item For any $\epsilon > 0$, we have, almost surely, for $m$ large enough that, $\sigma(\bm E(\vartheta) \bm H_{m-1} \bm R \bm H_{m-1}^\UH) \subset K_\epsilon$, where $K_\epsilon$ is the $\epsilon$-neighborhood of the set $K = \text{Supp}(\gamma \boxtimes \mathcal{L}_T) \cup \tau_T^{-1}(\{\theta(\vartheta)\})$.
		\item For any $\lambda \in \tau_T^{-1}(\{\theta(\vartheta)\}) \cap (\mathbb R \backslash \text{Supp}(\gamma \boxtimes \mathcal{L}_T)) $, we have almost surely exactly one eigenvalue of $\bm E(\vartheta) \bm H_{m-1} \bm R \bm H_{m-1}^\UH$ in a small enough neighborhood of $\lambda$ for large enough $n$.
	\end{enumerate}
	In Proposition \ref{support_prop}, we characterized $\text{Supp}(\gamma \boxtimes \mathcal{L}_T)$ as $[\lambda_{l},\lambda_{r}]$, where $\lambda_{l} = \max_{\tau \leq 0} \Lambda(\tau)$, $\lambda_{r} = \min_{\tau \geq 1} \Lambda(\tau)$ and the function $\Lambda(\tau)$ is given by: $$\Lambda(\tau) =  \tau - \frac{(1-1/\delta)}{\E\left[\frac{1}{\tau- T}\right]}. $$
	In Lemma \ref{tau_inv_func}, we characterized the set:
	\begin{align*}
	\tau_T^{-1}(\{\theta\}) \cap (\mathbb R \backslash {\text{Supp}(\gamma \boxtimes T)}) & = \begin{cases} \emptyset \  &\theta \in [\tau_{l},\tau_{r}], \\
	 \{\Lambda(\theta) \} \ &\theta \not\in [\tau_{l},\tau_{r}],
	\end{cases}
	\end{align*}
	where,  $\tau_{l} \triangleq \arg\max_{\tau \leq 0} \Lambda(\tau)$, $\tau_{r} \triangleq \arg\min_{\tau \geq 1} \Lambda(\tau)$. Putting these together, one obtains the following two cases:
	\begin{description}
		\item {Case 1: $\theta(\vartheta) \leq \tau_{r}.$} In this case, the set $\tau_T^{-1}(\{\theta\}) \cap (\mathbb R \backslash {\text{Supp}(\gamma \boxtimes T)}) = \emptyset$. The matrix $\bm E(\vartheta) \bm H_{m-1} \bm R \bm H_{m-1}^\UH$ has no eigenvalues outside the support of the bulk distribution, and 
		\begin{align*}
		L_m(\vartheta) \explain{a.s.}{\rightarrow} \lambda_{r} = \Lambda(\tau_{r}).
		\end{align*}
		\item {Case 2: $\theta(\vartheta) > \tau_{r}.$} In this case, the set $$\tau_T^{-1}(\{\theta\}) \cap (\mathbb R \backslash {\text{Supp}(\gamma \boxtimes T)}) = \{\Lambda(\theta(\vartheta))\}.$$ Hence, there is an eigenvalue in the neighborhood of $\Lambda(\theta(\vartheta)))$. Since $\theta(\vartheta) > \tau_{r}$, and $\Lambda$ is a strictly increasing function on $[\tau_{r},\infty)$ (Lemma \ref{solutions_real_case}), we have $\Lambda(\theta(\vartheta)) > \lambda_{r}$. Hence the eigenvalue in the neighborhood of $\Lambda(\theta(\vartheta))$ is the largest one, and we have 
		\begin{align*}
		L_m(\vartheta) \explain{a.s.}{\rightarrow} \Lambda(\theta(\vartheta)).
		\end{align*}
	\end{description}
	It is now straightforward to check that the above two cases can be combined into a concise form stated in the claim of the lemma. 
\end{proof}
%===============================
We end this section by proving Lemma \ref{Lem:Lambda_plus}, restated below for convenience.\vspace{5pt}

\textbf{Lemma 4.} \textit{The following hold for the equation:  $$\Lambda_{+}(\theta(\vartheta)) = 1/\vartheta + \E[|Z|^2 T], \; \vartheta>0.$$
	\begin{enumerate}
		\item This equation has a unique solution. 
		\item Let $\vartheta_\star$ denote the solution of the above equation. Then: 
	\end{enumerate}
			\paragraph{Case 1 } If $ \psi_1(\tau_{r}) \leq \frac{\delta}{\delta - 1},$ we have $$\Lambda_+(\theta(\vartheta_\star)) = \Lambda(\tau_r).$$ Furthermore if $\psi_1(\tau_{r}) < \delta/(\delta - 1)$, then,
			\begin{align*}
			\frac{\diff \Lambda_+(\theta(\vartheta))}{\diff \vartheta} \bigg\rvert_{\vartheta = \vartheta_\star} = 0,
			\end{align*}
			\paragraph{Case 2 } If $ \psi_1(\tau_{r}) > \frac{\delta}{\delta - 1},$ we have $$\Lambda_+(\theta(\vartheta_\star)) = \Lambda(\theta_\star),$$ and, 
			\begin{multline*}
			\frac{\diff \Lambda_+(\theta(\vartheta))}{\diff \vartheta} \bigg\rvert_{\vartheta = \vartheta_\star} = \\  \frac{1}{\vartheta_\star^2} \cdot \frac{\delta}{\delta-1} \cdot \left( \frac{\delta}{\delta -1 } - \psi_2(\theta_\star) \right) \cdot \frac{1}{\psi_3^2(\theta_\star) -\frac{\delta^2}{(\delta-1)^2}}.
			\end{multline*} where $\theta_\star > 1$ is the unique  $\theta \geq \tau_{r}$ that satisfies
			$\psi_1(\theta) = \frac{\delta}{\delta - 1}.$}
\begin{proof}
	Before we begin the proof of this lemma, it is helpful to list the conclusions of some of the previous lemmas.
	\begin{description}
		\item {\underline{Lemma \ref{Q_eq_population}}:} In this lemma, for $\vartheta > \vartheta_c$  we defined the function $\theta(\vartheta)$ as the unique value of $\lambda >  \max(1, \E[ |Z|^2 T]+1/\vartheta)$ that satisfies 
		\begin{align*}
		\lambda - \E[|Z|^2 T] - 1/\vartheta & = \frac{1}{\E \left[ \frac{|Z|^2}{\lambda - T} \right]}.
		\end{align*}
		We also set  $\theta(\vartheta) = 1$ when $\vartheta \leq \vartheta_c$. We also showed that $\theta(\vartheta)$ is strictly increasing on $[\vartheta_c, \infty)$ and $\theta(\infty) = \infty$. In particular $\theta(\vartheta)$ has a well defined inverse defined on the domain $[1,\infty)$ given by: 
		\begin{align}
		\theta^{-1}(\lambda) & =  \left( \lambda - \E[|Z|^2 T] -\frac{1}{\E \left[ \frac{|Z|^2}{\lambda-T}\right]} \right)^{-1}.
		\label{theta_inv_func_eq}
		\end{align}
		\item {\underline{Lemma \ref{solutions_real_case}}:} We defined the function $\Lambda(\tau)$ as
		\begin{align}
		\Lambda(\tau) & \triangleq  \tau - \frac{(1-1/\delta)}{\E \left[ \frac{1}{\tau - T} \right]}.
		\label{Lambda_def_eq}
		\end{align}
		We showed the that $\Lambda(\tau)$ is strictly convex on $[1,\infty)$. We defined $(\tau_{r}, \lambda_{r})$ to be the minimizing argument and the minimum value of $\Lambda(\tau)$ in $[1,\infty)$. In particular $\tau_{r} \geq 1$. We also showed that $\Lambda(\infty) = \infty$.
		We further defined $\Lambda_+(\tau)$ in the following way: 
		\begin{align*}
		\Lambda_{+}(\tau) & = \begin{cases} \lambda_{r},  & \tau \leq \tau_{r}. \\
		\Lambda(\tau), & \tau > \tau_{r}.
		\end{cases}
		\end{align*}
	\end{description}
	Some simple implications of the above assertions are: First, since $\theta(\vartheta)$ and $\Lambda_+$ are both non-decreasing continuous functions $\Lambda_+(\theta(\vartheta))$ is non-decreasing and continuous. Second, since $\Lambda(\tau) = \lambda_{r}$ for $\tau \leq \tau_{r}$, we have, for all $\vartheta \leq \theta^{-1}(\tau_{r})$, $\Lambda_+(\theta(\vartheta)) = \lambda_{r}$. Third since  $\theta(\infty) = \infty$ and $\Lambda(\infty) = \infty$, we have, $\Lambda_+(\theta(\vartheta)) \rightarrow \infty$ as $\vartheta \rightarrow \infty$. The only possible point of non-differentiability of $\Lambda_+(\theta(\vartheta))$ is at $\vartheta = \theta^{-1}(\tau_{r})$.  It is straightforward to compute the derivative of $\Lambda(\theta(\vartheta))$ at all other points using implicit function theorem and obtain 
	\begin{align}
	\frac{\diff \Lambda_+(\theta(\vartheta))}{\diff \vartheta} & = \begin{cases} 0 & \vartheta < \theta^{-1}(\tau_{r}), \\ 
	 \Lambda^\prime(\theta(\vartheta)) \cdot \theta^\prime(\vartheta) & \vartheta > \theta^{-1}(\tau_{r}). \end{cases}
	\label{deriv_eq}
	\end{align}
	The derivatives of $\Lambda, \theta$ can be calculated as, 
	\begin{align}
	\Lambda^\prime(\tau) & = \frac{\delta - 1}{\delta} \left( \frac{\delta}{\delta - 1} -\psi_2(\tau) \right). \label{deriv_eq_supp}\\
	\theta^\prime(\vartheta) & = \frac{1}{\vartheta^2} \left( \frac{ \left(\E\left[ \frac{|Z|^2}{\theta(\vartheta) - T} \right] \right)^2}{ \E\left[ \frac{|Z|^2}{(\theta(\vartheta) - T)^2} \right] - \left(\E\left[ \frac{|Z|^2}{\theta(\vartheta) - T} \right] \right)^2} \right). \label{deriv_eq_supp_2}
	\end{align}
	A representative plot of the function $\Lambda_+(\theta(\vartheta))$ is shown in Figure \ref{Lambda_plus_eq}. 
	
\begin{figure*}[hbpt]
\centering
\includegraphics[width=\textwidth]{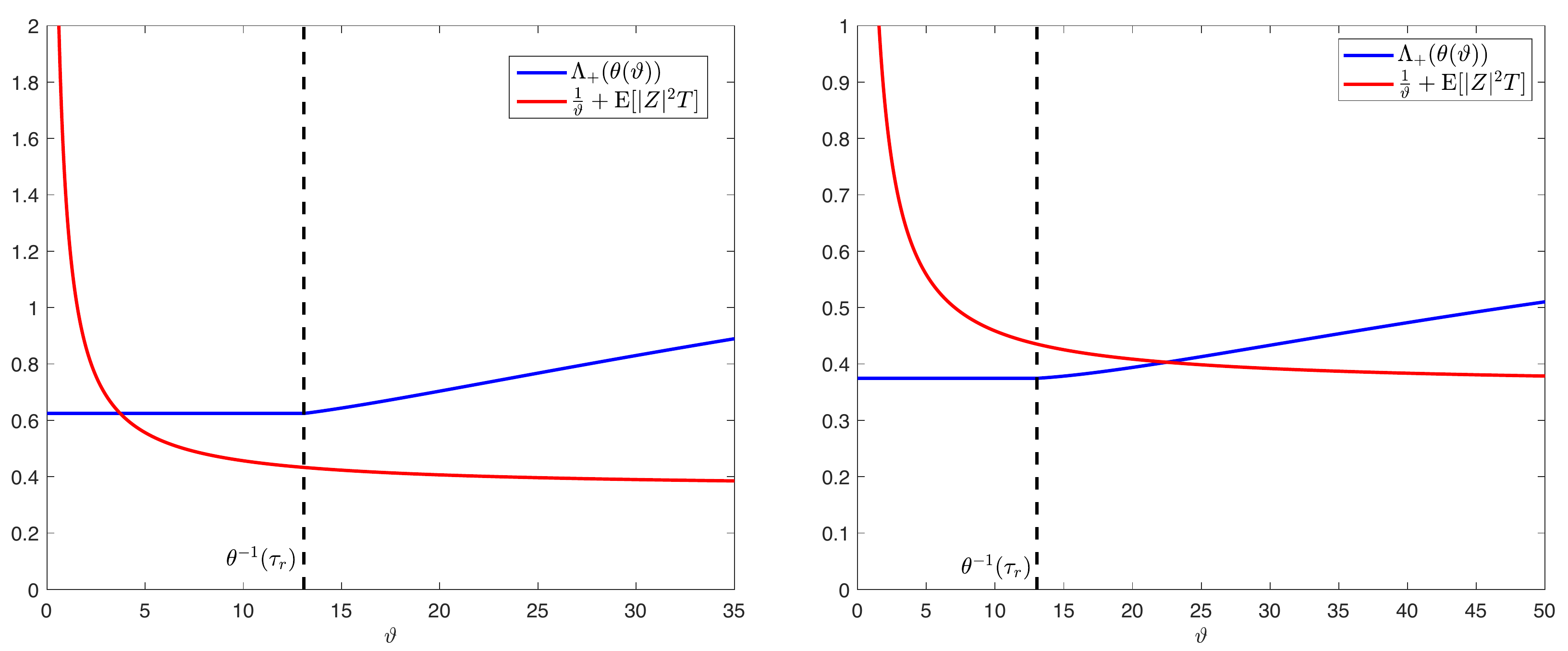}
			\caption{Typical Plots of the functions $\Lambda_+(\theta(\vartheta))$ (Blue) and $\E[|Z|^2 T] + \frac{1}{\vartheta}$ (Red). Case 1 (Left): The two functions intersect at the constant part of $\Lambda_+(\theta(\vartheta))$, Case 2 (Right): The The two functions intersect at the increasing part of $\Lambda_+(\theta(\vartheta))$} 
\label{Lambda_plus_eq}
\end{figure*}

%	
%	\begin{figure}[h]
%		\begin{center}
%			\includegraphics[width=\textwidth]{Lambda_plus_eq.pdf}
%			\caption{Typical Plots of the functions $\Lambda_+(\theta(\vartheta))$ (Blue) and $\E[|Z|^2 T] + \frac{1}{\vartheta}$ (Red). Case 1 (Left): The two functions intersect at the constant part of $\Lambda_+(\theta(\vartheta))$, Case 2 (Right): The The two functions intersect at the increasing part of $\Lambda_+(\theta(\vartheta))$} 
%			\label{Lambda_plus_eq}
%		\end{center}
%	\end{figure}
%\label{Lambda_plus_eq}
	We are now in a position to prove the claims of the lemma. 
	
	\begin{enumerate}
		\item Since $\Lambda_{+}(\theta(\vartheta))$ is continuous and non-decreasing and $1/\vartheta + \E[|Z|^2T] $ is continuous and strictly decreasing, the fixed point equation can have at most one solution. On the other hand comparing the values of the two sides of the fixed point equation at $\vartheta \rightarrow 0$ and $\vartheta \rightarrow \infty$ shows that there is at least one solution. 
		\item  Let $\vartheta_\star$ be denote the solution of the fixed point equation $\Lambda_+(\theta(\vartheta)) = 1/\vartheta + \E[|Z|^2T] $. A typical plot of these two functions is shown in Figure \ref{Lambda_plus_eq}. The figure shows two possible cases for the intersection of the two curves: 
			\textit{Case 1: } The curves intersect at a point $\vartheta_\star \leq \theta^{-1}(\tau_{r})$ (or on the flat part of $\Lambda_+(\theta(\alpha)$). In this case we have, $\Lambda_+(\theta(\vartheta_\star)) = \lambda_{r}$.
			
			\textit{Case 2: } The curves intersect at a point $\vartheta_\star > \theta^{-1}(\tau_{r})$ or the rising part of $\Lambda_+(\theta(\alpha)$. We have $\Lambda_+(\theta(\vartheta_\star)) > \lambda_{r}$. 
		We can distinguish between the two cases by comparing the value of the function $1/\vartheta + \E[|Z|^2 T]$ at $\vartheta = \theta^{-1}(\tau_{r})$ with $\lambda_{r}$. In particular, we have, 
		
			\textit{Case 1: } $$\Lambda_+(\theta(\vartheta_\star)) = \lambda_{r} \Leftrightarrow 1/\theta^{-1}(\tau_{r}) + \E[|Z|^2 T] \leq \lambda_{r},$$
			
			\textit{Case 2: } $$\Lambda_+(\theta(\vartheta_\star)) > \lambda_{r} \Leftrightarrow 1/\theta^{-1}(\tau_{r}) + \E[|Z|^2 T]> \lambda_{r}.$$ 

		Substituting the formula for $\theta^{-1}(\tau_{r})$, mentioned in \eqref{theta_inv_func_eq}, and $\lambda_{r} = \Lambda(\tau_{r})$ and the formula for $\Lambda$ from \eqref{Lambda_def_eq}, the 2 cases can be simplified slightly more.

			\textit{Case 1: } This case occurs when
			$$\frac{1}{\theta^{-1}(\tau_{r})} + \E[|Z|^2 T] \leq \lambda_{r} \Leftrightarrow \frac{\E \left[ \frac{|Z|^2}{\tau_{r} - T} \right]}{\E \left[ \frac{1}{\tau_{r} - T} \right]} \leq \frac{\delta}{\delta - 1}.$$
			In this situation, we have, $\Lambda_+(\theta(\vartheta_\star)) = \lambda_{r}$. Furthermore, if we additionally have
			\begin{align*}
			\frac{\E \left[ \frac{|Z|^2}{\tau_{r} - T} \right]}{\E \left[ \frac{1}{\tau_{r} - T} \right]} < \frac{\delta}{\delta - 1}
			\end{align*}
			Then $\Lambda_+(\theta(\vartheta))$ is differentiable at $\vartheta_\star$ and, from \eqref{deriv_eq}, we have
			\begin{align*}
			\frac{\diff \Lambda_+(\theta(\vartheta))}{\diff \vartheta} \bigg\rvert_{\vartheta = \vartheta_\star} & = 0.
			\end{align*}
			
			\textit{Case 2: } This case occurs when
			\begin{align*}&\frac{1}{\theta^{-1}(\tau_{r})} + \E[|Z|^2 T] > \lambda_{r} \\ & \hspace{2cm} \Leftrightarrow \frac{\E \left[ \frac{|Z|^2}{\tau_{r} - T} \right]}{\E \left[ \frac{1}{\tau_{r} - T} \right]} > \frac{\delta}{\delta - 1}.
			\end{align*}
			In this situation, we have, $\Lambda_+(\theta(\vartheta_\star)) > \lambda_{r}$. It turns out that we can give a simpler expression for $\Lambda_+(\theta(\vartheta_\star))$. In this case,
			$\vartheta_\star \geq \theta^{-1}(\tau_{r})$ solves,
			\begin{align}
			\Lambda(\theta(\vartheta_\star)) & = \frac{1}{\vartheta_\star} + \E[|Z|^2 T],  
			\label{elim_eq_1}
			\end{align}
			and $\theta(\vartheta_\star) \geq 1$  is the solution of the equation
			\begin{align}
			&\E[|Z|^2 T] + \frac{1}{\vartheta_\star}  = \theta(\vartheta_\star) - \frac{1}{\E \left[ \frac{|Z|^2}{\theta(\vartheta_\star) - T} \right]}.
			\label{elim_eq_2}
			\end{align}
			By definition the function $\Lambda(\tau(\alpha))$ is 
			\begin{align}
			\Lambda(\theta(\vartheta_\star)) & =  \theta(\vartheta_\star) - \frac{(1-1/\delta)}{\E \left[ \frac{1}{\theta(\vartheta_\star) - T} \right]}.
			\label{elim_eq_3}
			\end{align}
			We first eliminate $\vartheta_\star$ from Equations \eqref{elim_eq_1}-\eqref{elim_eq_3} and conclude that $\theta_\star \Mydef \theta(\vartheta_\star)$ solves
			\begin{align}
			\frac{\E \left[ \frac{|Z|^2}{\theta_\star - T} \right]}{\E \left[ \frac{1}{\theta_\star - T} \right]} = \frac{\delta}{\delta - 1}, \; \theta_\star \geq \tau_{r},
			\label{theta_star_eq}
			\end{align}
			and $\vartheta_\star$ is given by
			\begin{align*}
			\vartheta_\star & = \left( \theta_\star -  \frac{1}{\E \left[ \frac{|Z|^2}{\theta_\star - T} \right]}- \E[|Z|^2 T] \right)^{-1}.
			\end{align*}
			Since the solution to Equations \eqref{elim_eq_1}-\eqref{elim_eq_3} was guaranteed to be unique, the solution to \eqref{theta_star_eq} is guaranteed to be unique. Finally we can compute the derivative of $\Lambda_+(\theta(\vartheta))$ at $\vartheta = \vartheta_\star$. It will be convenient  to introduce the random variable $G = (\theta_\star - T)^{-1}$ to write the equations in a compact form.  From \eqref{deriv_eq}-\eqref{deriv_eq_supp_2}, we have
			\begin{align*}
			&\frac{\diff \Lambda_+(\theta(\vartheta))}{\diff \vartheta} \bigg\rvert_{\vartheta = \vartheta_\star} = \Lambda^{\prime}(\theta_\star) \cdot \theta^\prime(\vartheta_\star) \\
			& \explain{}{=}  \frac{\delta - 1}{\delta \vartheta_\star^2} \left( \frac{\delta}{\delta -1 } - \psi_2(\theta_\star) \right)  \frac{\E[|Z|^2 G]^2}{\E[|Z|^2 G^2] - \E[|Z|^2 G]^2} \\
			& \explain{(a)}{=}  \frac{\delta \cdot \left( \frac{\delta}{\delta -1} - \psi_2(\theta_\star) \right)}{\vartheta_\star^2 \cdot (\delta-1) \cdot \psi_1^2(\theta_\star)}  \cdot   \frac{\E[|Z|^2 G]^2}{\E[|Z|^2 G^2] - \E[|Z|^2 G]^2} \\
			& \explain{}{=}\frac{\delta \cdot \left( \frac{\delta}{\delta -1} - \psi_2(\theta_\star) \right)}{\vartheta_\star^2\cdot(\delta-1)} \cdot  \frac{\E[G]^2}{\E[|Z|^2 G^2] - \E[|Z|^2 G]^2}\\
			& \explain{}{=} \frac{\delta}{\vartheta_\star^2(\delta-1)} \left( \frac{\delta}{\delta -1 } - \psi_2(\theta_\star) \right)  \frac{1}{\psi_3^2(\theta_\star) -\frac{\delta^2}{(\delta-1)^2}}.
			\end{align*}
			In the above display, in the step marked (a) we used the fact that $\theta_\star$ satisfies $\psi_1(\theta_\star) = \delta/(\delta-1)$.
			This concludes the proof of the characterization (2) given in the statement of the lemma.
	\end{enumerate}
\end{proof}

\section{Conclusions}\label{sec:conclusion}
We analyzed the asymptotic performance of a spectral method for phase retrieval under a random column orthogonal matrix model. Our results provides a rigorous justification for the conjectures in \cite{ma2019spectral}, which were obtained by analyzing an expectation propagation algorithm.  
%

% if have a single appendix:
%\appendix[Proof of the Zonklar Equations]
% or
%\appendix  % for no appendix heading
% do not use \section anymore after \appendix, only \section*
% is possibly needed

% use appendices with more than one appendix
% then use \section to start each appendix
% you must declare a \section before using any
% \subsection or using \label (\appendices by itself
% starts a section numbered zero.)
%

\appendices

%%%%%%%%%%%%%%%%%%%%%% PROOF OF OPTIMAL TRIMMING FUNCTION

\section{Proof of Proposition \ref{optimal_prop}} \label{proof_optimal_trimming}
This section is devoted to the proof of Proposition \ref{optimal_prop}. We denote the functions $\Lambda,\psi_1,\psi_2,\psi_3$ (recall \eqref{key_functions}) with ${\T = \T_{\opt}}$ as $\Lambda_\opt, \psi_1^\opt, \psi_2^\opt, \psi_3^\opt$ and those with $\T = \T_{\opt,\epsilon}$ as $\Lambda_{\epsilon}, \psi_1^{\epsilon}, \psi_2^{\epsilon}, \psi_3^{\epsilon}$. Define the random variables:
\begin{align*}
    Z \sim \cgauss{0}{1}, \; T_\opt & = \T_{\opt}(|Z|/\sqrt{\delta}), \; T_{\epsilon} = \T_{\opt,\epsilon}(|Z|/\sqrt{\delta}).
\end{align*}

Next we observe that the function $\T_{\opt,\epsilon}$ is a bounded, strictly increasing, Lipchitz function and consequently $T_{\epsilon}$ has a density with respect to the Lebesgue measure. Hence by the rescale and shift argument outlined in Remark \ref{supp_remark}, Theorem \ref{result_lambda1} applies to a equivalent modification of $\T_{\opt,\epsilon}$ which can used to infer the corresponding result for $\T_{\opt,\epsilon}$ (after another rescale and shift argument). This gives us the result:

\begin{align} \label{reg_trimming_result}
    \frac{|\bm x_\star^\UH \hat{\bm x}_\epsilon|^2}{n} & \explain{a.s.}{\rightarrow}  \begin{cases} 0, & \psi_1^{\epsilon}(\tau_{r}^{\epsilon}) < \frac{\delta}{\delta - 1}, \\
	\frac{\left(\frac{\delta}{\delta -1 } \right)^2 - \frac{\delta}{\delta -1 } \cdot \psi_2^{\epsilon}(\theta_\star^\epsilon)}{\psi_3^\epsilon(\theta_\star^\epsilon)^2 - \frac{\delta}{\delta -1 } \cdot \psi_2^\epsilon(\theta_\star^\epsilon)}, & \psi_1(\tau_{r}^\epsilon) > \frac{\delta}{\delta - 1} . 
	\end{cases},
\end{align}
where $\tau_r^\epsilon \Mydef \argmin_{\tau \in [1,\infty)} \Lambda_\epsilon(\tau)$ and $\theta_\star^\epsilon$ is the solution to the fixed point equation (in $\tau$): $\psi_1^\epsilon(\tau) = \delta/(\delta-1)$ which is guaranteed to exist uniquely provided $\psi_1(\tau_{r}^\epsilon) > \delta/{(\delta - 1)}$. 
First we observe that,
\begin{align*}
    \Lambda^\prime_\epsilon(\tau)& = 1 - \left(1-\frac{1}{\delta}\right) \cdot \frac{\E G^2_\epsilon(\tau)}{(\E G_\epsilon(\tau))^2}, \; G_\epsilon(\tau) = (\tau-T_\epsilon)^{-1}. 
\end{align*}
In particular, at $\tau = 1$, we have,
\begin{align*}
    \Lambda_\epsilon^\prime(1) & = 1 - \left(1 - \frac{1}{\delta} \right) \cdot \frac{(1+\epsilon)^2 + 1}{(1+\epsilon)^2}\\ & \implies \lim_{\epsilon \downarrow 0} \Lambda^\prime(1) = \frac{2-\delta}{\delta},
\end{align*}
and, 
\begin{align*}
    \psi_1^\epsilon(1) & = 2+ \epsilon. 
\end{align*}
We consider the following two cases.

\textit{Case 1: $1<\delta < 2$. } Lemma \ref{solutions_real_case} shows that $\Lambda_\epsilon(\tau)$ is convex on $[1,\infty)$. When $\delta < 2$, $\Lambda^\prime_\epsilon(1) > 0$ for $\epsilon$ small enough, and hence $\Lambda_\epsilon$ is strictly increasing and $\tau_r^\epsilon = 1$. Moreover, in this case, for $\epsilon$ small enough,
\begin{align*}
    \frac{\delta}{\delta - 1} & = 2 + \frac{2-\delta}{\delta - 1} > 2 + \epsilon = \psi_1^\epsilon(1).
\end{align*}
Hence, using \eqref{reg_trimming_result},
\begin{align*}
    \lim_{\epsilon \downarrow 0} \lim_{\substack{m,n \rightarrow \infty \\ m = \delta n}} \frac{|\bm x_\star^\UH \hat{\bm x}_\epsilon|^2}{n}  = 0.
\end{align*}

\textit{Case 2: $\delta > 2$} In this case, for small enough $\epsilon$, $\Lambda^\prime_\epsilon(1) < 0$. Hence the $\tau_r^\epsilon$, the minimizer of the convex function $\Lambda_\epsilon$ occurs in the region $(1,\infty)$. This means it satisfies the optimality condition:
\begin{align*}
    \Lambda^\prime_\epsilon(\tau_r^\epsilon) = 0 & \Leftrightarrow \psi_2(\tau_r^\epsilon) = \frac{\delta}{\delta - 1}.
\end{align*}
Next we claim that, $\forall \tau \in [1,\infty)$,
\begin{align*}
    \psi_1^\epsilon(\tau) > \psi_2^\epsilon(\tau) \Leftrightarrow \E[G_\epsilon(\tau)] \cdot \E[|Z|^2 G_\epsilon(\tau)] > \E[G^2_\epsilon(\tau)],
\end{align*}
which is a consequence of Chebychev's association inequality (Fact \ref{chebychev_association}) with the choice:
\begin{align*}
    B &= G_\epsilon(\tau), \; A = |Z|, \\ \; f(a) &=  a^2 \left(\tau  -\T_\epsilon \left( \frac{a}{\sqrt{\delta}} \right)\right), \; g(a) = \left(\tau -\T_\epsilon \left( \frac{a}{\sqrt{\delta}} \right) \right)^{-1}.
\end{align*}
In particular we have $\psi_1^\epsilon(\tau_r^\epsilon) > \delta/(\delta-1)$, and hence Theorem~\ref{result_lambda1} gives us:
\begin{enumerate}
    \item There exists a unique solution $\theta_\star^\epsilon \in (\tau_r^\epsilon,\infty)$ such that $\psi_1^\epsilon(\theta_\star^\epsilon) = \delta/(\delta - 1)$,
    \item and,
    \begin{align*}
        \frac{|\bm x_\star^\UH \hat{\bm x}_\epsilon|^2}{n} & \explain{a.s.}{\rightarrow}  
	\frac{\left(\frac{\delta}{\delta -1 } \right)^2 - \frac{\delta}{\delta -1 } \cdot \psi_2^{\epsilon}(\theta_\star^\epsilon)}{\psi_3^\epsilon(\theta_\star^\epsilon)^2 - \frac{\delta}{\delta -1 } \cdot \psi_2^\epsilon(\theta_\star^\epsilon)}.
    \end{align*}
\end{enumerate}
Next we claim that,
\begin{align*}
    1 < \liminf_{\epsilon \downarrow 0} \theta_\star^\epsilon \leq \limsup_{\epsilon \downarrow 0} \theta_\star^\epsilon <\infty. 
\end{align*}
To see this, observe 
\begin{align*}
    \psi_1^\epsilon(\theta_\star^\epsilon) & =  \frac{\E \frac{|Z|^2 (|Z|^2 + \epsilon)}{(\theta_\star^\epsilon -1)(|Z|^2 + \epsilon) +1}}{\E \frac{(|Z|^2 + \epsilon)}{(\theta_\star^\epsilon -1)(|Z|^2 + \epsilon) +1}}.
\end{align*}
If $\liminf_{\epsilon \downarrow 0} \theta_\star^\epsilon = 1$, one can select a subsequence along which $\psi_1^\epsilon(\theta_\star^\epsilon)  \rightarrow \E |Z|^4 = 2$ by dominated convergence which contradicts: $\psi_2^\epsilon(\theta_\star^\epsilon) = \delta/(\delta - 1) < 2$. Likewise if $\limsup_{\epsilon \downarrow 0} \theta_\star^\epsilon = \infty$, one can find a subsequence along which $\theta_\star^\epsilon \rightarrow \infty$ and, by dominated convergence,
\begin{align*}
    \psi_1^\epsilon(\theta_\star^\epsilon) & = \frac{\E \frac{|Z|^2 (|Z|^2 + \epsilon)(\theta_\star^\epsilon-1)}{(\theta_\star^\epsilon -1)(|Z|^2 + \epsilon) +1}}{\E \frac{(|Z|^2 + \epsilon)(\theta_\star^\epsilon-1)}{(\theta_\star^\epsilon -1)(|Z|^2 + \epsilon) +1}} \rightarrow 1,
\end{align*}
which contradicts $\psi_1^\epsilon(\theta_\star^\epsilon) = \delta/(\delta-1) < 1 \; \forall \; \delta \; \in \; (2,\infty)$. We can now conclude that,
\begin{align*}
    \liminf_{\epsilon \downarrow 0} \theta_\star^\epsilon = \limsup_{\epsilon \downarrow 0} \theta_\star^\epsilon = \theta_\star^\opt,
\end{align*}
where $\theta_\star^\opt$ is the unique solution to $\psi_1^\opt(\tau) = \delta/(\delta - 1)$ in $\tau \in (1,\infty)$ guaranteed by Proposition \ref{junjie_impossibility_result} (due to \cite{ma2019spectral}). This is because, by selecting a subsequence along with ${\theta_\star^\epsilon \rightarrow \liminf_{\epsilon \downarrow 0} \theta_\star^\epsilon}$, we can conclude that, along that subsequence, 
\begin{align*}
    \frac{\delta}{\delta - 1} & = \psi_1^\epsilon(\theta_\star^\epsilon) \rightarrow \psi_1^\opt \left(\liminf_{\epsilon \downarrow 0} \theta_\star^\epsilon\right).
\end{align*}
This implies,
\begin{align*}
     \psi_1^\opt \left(\liminf_{\epsilon \downarrow 0} \theta_\star^\epsilon\right) = \frac{\delta}{\delta-1},
\end{align*}
and analogously,
\begin{align*}
     \psi_1^\opt \left(\limsup_{\epsilon \downarrow 0} \theta_\star^\epsilon\right) = \frac{\delta}{\delta-1}.
\end{align*}
Since Proposition \ref{junjie_impossibility_result} guarantees that the equation ${\psi_1^\opt(\tau) = \delta/(\delta - 1)}$ has a unique solution in $(1,\infty)$ we get,
\begin{align*}
    \liminf_{\epsilon \downarrow 0} \theta_\star^\epsilon = \limsup_{\epsilon \downarrow 0} \theta_\star^\epsilon = \theta_\star^\opt.
\end{align*}
Dominated convergence now yields,
\begin{align*}
    \psi_i^\epsilon(\theta_\star^\epsilon) \rightarrow \psi_i^\opt(\theta_\star^\opt), \text{ as $\epsilon \downarrow 0$ } \forall \; i = 1,2,3,
\end{align*}
and consequently, almost surely,
\begin{align*}
    \lim_{\epsilon \downarrow 0} \lim_{\substack{m,n \rightarrow \infty, \\ m = n \delta }} \frac{|\bm x_\star^\UH \hat{\bm x}_\epsilon|^2}{n} & \explain{a.s.}{=}  
	\frac{\left(\frac{\delta}{\delta -1 } \right)^2 - \frac{\delta}{\delta -1 } \cdot \psi_2^{\opt}(\theta_\star^\opt)}{\psi_3^\opt(\theta_\star^\opt)^2 - \frac{\delta}{\delta -1 } \cdot \psi_2^\opt(\theta_\star^\opt)}.
\end{align*}
The right hand side of the above display can be simplified to:
\begin{align*}
    \frac{\left(\frac{\delta}{\delta -1 } \right)^2 - \frac{\delta}{\delta -1 } \cdot \psi_2^{\opt}(\theta_\star^\opt)}{\psi_3^\opt(\theta_\star^\opt)^2 - \frac{\delta}{\delta -1 } \cdot \psi_2^\opt(\theta_\star^\opt)} & =\frac{\theta_\star^\opt - 1}{\theta_\star^\opt - \frac{1}{\delta}}.
\end{align*}
This clean formula is due to \cite{ma2019spectral} and we refer the reader to Appendix B in \cite{ma2019spectral} for a proof. 

%%%%%%%%%%%%%%%%%%%%%%%%%%%%%%%%%%%%%%%%%%%%%%%%%%%%%%%%%%%%%%%%%%%%%%%

%%%%%%%%%%%%%%%%%%%%%%%%%%%%%%%%%%%%%%%%EXTRA RESULTS%%%%%%%%%%%%%%%%%%%%%
\section{Miscellaneous results}
\label{analytic_appendix}
\begin{fact}[Chebychev Association Inequality, \cite{boucheron2013concentration}] Let $A,B$ be r.v.s and $B \geq 0$. Suppose $f,g$ are two non-decreasing functions. Then, $$\E[B]\E[B f(A) g(A)] \geq \E[f(A)B]\E[g(A) B].$$
        \label{chebychev_association}
Furthermore, if, $\p{B = 0} = 0$ and, 
\begin{align*}
    \p{f(A) = x} = 0, \; \p{g(A) = x} = 0, \; \forall \; x \; \in \; \mathbb R,
\end{align*}
then, the above inequality is strict. 
\end{fact}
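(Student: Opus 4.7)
The plan is to prove both parts via Hoeffding's classical symmetrization trick. Let $(A',B')$ be an independent copy of the pair $(A,B)$ and introduce
\[
X \Mydef B B' \bigl(f(A)-f(A')\bigr)\bigl(g(A)-g(A')\bigr).
\]
Because $f$ and $g$ are both non-decreasing, the two differences always share the same sign (namely that of $A-A'$), so their product is $\ge 0$ pointwise; together with $B,B'\ge 0$ this yields $X \ge 0$ almost surely and hence $\E X \ge 0$.

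Expanding $X$ into four terms and using the independence of $(A,B)$ and $(A',B')$ to factor each expectation, the two ``diagonal'' terms each equal $\E[B]\,\E[B f(A) g(A)]$, while the two ``cross'' terms each equal $\E[B f(A)]\,\E[B g(A)]$. Collecting them gives
\[
\E X \;=\; 2\,\E[B]\,\E[B f(A) g(A)] \;-\; 2\,\E[B f(A)]\,\E[B g(A)] \;\ge\; 0,
\]
which is the first assertion after dividing by $2$.

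For the strict version I would upgrade $\E X \ge 0$ to $\E X > 0$ by showing that $X > 0$ almost surely. First, $\p{B=0}=0$ together with independence gives $\p{BB'>0}=1$. Second, since the law of $f(A)$ has no atoms, Fubini's theorem applied to the diagonal event under the product of the laws of $A$ and $A'$ yields $\p{f(A)=f(A')}=0$, and analogously $\p{g(A)=g(A')}=0$. Hence $(f(A)-f(A'))(g(A)-g(A'))$ is almost surely a product of two nonzero factors of the same sign, so $X>0$ almost surely, which upgrades the earlier inequality to a strict one.

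I do not expect a serious obstacle: the argument reduces to a pointwise sign inequality plus bookkeeping of the four cross terms produced by the expansion. The only subtle step is the upgrade to strict inequality, where the atomlessness assumptions on the laws of $f(A)$ and $g(A)$ are used exactly to kill the equality event on the diagonal; this is handled by a direct application of Fubini's theorem to the product space.
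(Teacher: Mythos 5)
Your proof is correct and follows the same approach as the paper: both rely on Hoeffding's symmetrization identity $2\bigl(\E[B]\E[Bf(A)g(A)]-\E[Bf(A)]\E[Bg(A)]\bigr)=\E\bigl[BB'(f(A)-f(A'))(g(A)-g(A'))\bigr]$ (the paper cites Boucheron--Lugosi--Massart for this and inspects their proof), and both derive strictness by observing that the non-negative integrand cannot vanish almost surely under the stated atomlessness and positivity assumptions. Your write-up is a bit more explicit than the paper's in carrying out the Fubini argument to show the integrand is in fact a.s.\ strictly positive, but the underlying idea is identical.
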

\begin{proof}
The proof of the inequality appears in \cite{boucheron2013concentration}. Inspecting the proof we can derive a sufficient condition for the inequality to be strict. The proof in \cite{boucheron2013concentration} shows,
\begin{align*}
    & 2\cdot(\E[B]\E[B f(A) g(A)] - \E[f(A)B]\E[g(A) B])  = \\ &\hspace{3cm} \E B B^\prime (f(A) - f(A^\prime)) \cdot (g(A) - g(A^\prime)).
\end{align*}
where $(B^\prime,A^\prime)$ is an independent sample of the random variables $(B,A)$. Since, $f,g$ are increasing $(f(A) - f(A^\prime)) \cdot (g(A) - g(A^\prime)) \geq 0$ and $B \geq 0, B^\prime \geq 0$. Hence the equality is tight iff:
\begin{align*}
    B B^\prime (f(A) - f(A^\prime)) \cdot (g(A) - g(A^\prime)) & \explain{a.s.}{=} 0,
\end{align*}
which is ruled out by the assumptions of the claim. 
\end{proof}

% use section* for acknowledgment
\section*{Acknowledgments} 
We would like to thank Professor Serban Belinschi for discussions about free probability and Professor Tomoyuki Obuchi for discussions about the replica method. We acknowledge support from NSF DMS-1810888 and  the Google faculty award.

% Can use something like this to put references on a page
% by themselves when using endfloat and the captionsoff option.
\ifCLASSOPTIONcaptionsoff
  \newpage
\fi

% trigger a \newpage just before the given reference
% number - used to balance the columns on the last page
% adjust value as needed - may need to be readjusted if
% the document is modified later
%\IEEEtriggeratref{8}
% The "triggered" command can be changed if desired:
%\IEEEtriggercmd{\enlargethispage{-5in}}

% references section

% can use a bibliography generated by BibTeX as a .bbl file
% BibTeX documentation can be easily obtained at:
% http://mirror.ctan.org/biblio/bibtex/contrib/doc/
% The IEEEtran BibTeX style support page is at:
% http://www.michaelshell.org/tex/ieeetran/bibtex/
%\bibliographystyle{IEEEtran}
% argument is your BibTeX string definitions and bibliography database(s)
%\bibliography{IEEEabrv,../bib/paper}
%
% <OR> manually copy in the resultant .bbl file
% set second argument of \begin to the number of references
% (used to reserve space for the reference number labels box)
\bibliographystyle{unsrt}
\bibliography{ref}

\end{document}